\global\let\AddToReset=\@addtoreset}
\global\let\AddToReset=\@addtoreset}
\theoremstyle{plain}
\newtheorem{Theorem}{Theorem}[section]
\newtheorem{Corollary}[Theorem]{Corollary}
\newtheorem{Proposition}[Theorem]{Proposition}
\newtheorem{Lemma}[Theorem]{Lemma}
\newtheorem{cocoExa}[Theorem]{Example}
\newtheorem{Definition}[Theorem]{Definition}
\newtheorem{Remark}[Theorem]{{Remark}}
\newcommand{\bel}[1]{\begin{equation}\label{#1}}
\newcommand\ee{\end{equation}}
\newcommand{\bea}{\begin{eqnarray}}
\newcommand{\bean}{\begin{eqnarray}\nonumber}
\newcommand{\beaa}{\begin{eqnarray*}}
\newcommand{\eeaa}{\end{eqnarray*}}
\newcommand{\beal}[1]{\begin{eqnarray}\label{#1}}
\newcommand{\eea}{\end{eqnarray}}
\newcommand{\eeal}[1]{\label{#1}\end{eqnarray}}
\newcommand{\eq}[1]{(\ref{#1})}
\newcommand{\dotI}{{\partial I}}
\newcommand{\dotJ}{{\partial J}}
\newcommand{\dotcI}{{\partial\check{I}}{}}
\newcommand\mycal{\mathscr}
\newcommand{\mcM}{{\mycal M}}
\newcommand{\mcE}{{\mycal E}}
\newcommand{\mcO}{{\mycal O}}
\newcommand{\mcU}{{\mycal U}}
\newcommand{\mcV}{{\mycal V}}
\newcommand{\mcW}{{\mycal W}}
\newcommand{\mcD}{{\mycal D}}
\newcommand{\mcB}{{\mycal B}}
\newcommand{\mcDpSc}{\mcD{}^+_{J}(\hyp)}
\newcommand{\mcDmSc}{\mcD{}^-_{J}(\hyp)}
\newcommand{\mcDSpI}{{\mcD_{I}^+(\hyp)}}
\newcommand{\mcDSI}{{\mcD_{I}(\hyp)}}
\newcommand{\mcDmScg}{\mcD{}^-_{J,g}(\hyp)}
\newcommand{\mcDpS}{\mcD{}^+(\hyp)}
\newcommand{\mcDSmI}{{\mcD_{I}^-(\hyp)}}
\newcommand{\mcDSpIg}{{\mcD_{I,g}^+(\hyp)}}
\newcommand{\hyp}{\mycal S}
\newcommand\N{\ensuremath{\mathbb{N}}}
\newcommand\R{\ensuremath{\mathbb{R}}}
\newcommand{\chg}{{\check g}}
\newcommand{\hg}{{\hat g}}
\newcommand{\cI}{{\check{I}{}}}
\newcommand{\cJ}{{\check{J}{}}}
\newcommand{\loc}{\textrm{\scriptsize\upshape loc}}
\newcommand{\dcI}{\dotcI}
\newcommand{\eps}{{\epsilon}}
\newcommand{\hgkn}{\hat g_{k(n)}}
\newcommand{\distb}{\mathrm{dist}_{h}}
\newcommand{\mcDSc}{\mcD_{J}(\hyp)}
\newcommand{\mcDpScg}{\mcD{}^+_{J,g}(\hyp)}
\newcommand{\mcDScg}{\mcD_{J,g}(\hyp)}
\begin{document}
\title{On Lorentzian causality with continuous metrics}
\author{Piotr T Chru\'{s}ciel$^{1, 2}$, James D E Grant$^2$}%
\address{$^1$ Institut des Hautes \'Etudes Scientifiques, Bures-sur-Yvette}
\address{$^2$ Gravitational Physics, University of Vienna.}
\eads{\mailto{piotr.chrusciel@univie.ac.at}, \mailto{james.grant@univie.ac.at}}
\date{November 29, 2011}

\begin{abstract}
We present a systematic study of causality theory on Lorentzian manifolds with
continuous metrics.
Examples are given which show that some standard facts in smooth Lorentzian geometry, such as light-cones being hypersurfaces, are wrong when metrics which are merely continuous are considered. We show that existence of time functions remains true on domains of dependence with continuous metrics, and that $C^{0,1}$ differentiability of the metric suffices for many key results of the smooth causality theory.
\end{abstract}
\pacs{04.20 Gz}
\maketitle

\section{Causality for continuous metrics}
\label{s13III11.1}
\setcounter{equation}{0}

One of the factors that constrains the differentiability
requirements of the proof of the celebrated
Choquet-Bruhat--Geroch theorem~\cite{ChBGeroch}, of existence
and uniqueness of maximal globally hyperbolic vacuum
developments of general relativistic initial data, is the
degree of differentiability needed to carry out the Lorentzian
causality arguments that arise. Here one should keep in mind
that classical local existence and uniqueness of solutions of
vacuum Einstein equations in dimension $3+1$ applies to initial
data $(g, K)$ in the product of Sobolev spaces $H^s \times
H^{s-1}$ for $s>5/2$, and that the recent studies of
the Einstein equations~\cite{KlainermanRodnianski:r1,WangCones,WangRicci,%
KRS,Choquet-Bruhat:safari,Maxwell:Rough,Maxwell:Compact} assume even less differentiability.
On the other hand, the standard references on causality
seem to assume smoothness of the metric~\cite{HE,Beem-Ehrlich:Lorentz2,ONeill83,MinguzziSanchez,PenroseDiffTopo,Kriele},
while the presentation in~\cite{Senovilla,ChCausalityv1} requires $C^2$ metrics.%
\footnote{See also~\cite{Senovilla}, where some of the issues involved in trying to prove the singularity theorems for metric below $C^2$ regularity are discussed.}
Hence the need to revisit the causality theory for Lorentzian
metrics which are merely assumed to be continuous. Surprisingly enough, some standard
facts of the $C^2$ theory fail to hold for metrics with lower differentiability.
For example, we will show that the following statements are wrong:

\begin{enumerate}
 \item\label{Point1} light-cones are topological hypersurfaces of codimension one;
 \item\label{Point2} a piecewise differentiable causal curve which is not null
everywhere can be deformed, with end points fixed, to a timelike curve.
\end{enumerate}

Concerning point~(\ref{Point1}) above, we exhibit metrics with
light-cones which have non-empty interior.

Researchers familiar with Lorentzian geometry will
recognize point~(\ref{Point2}) as an essential tool in many arguments.
One needs therefore to reexamine the corresponding results, documenting their failure or
finding a replacement for the proof.

In the course of the analysis, one is naturally led to the notion of a
\emph{causal bubble\/} which, roughly speaking, is defined to be an open set which
can be reached from, say a point $p$, by causal curves but not by timelike
ones.

The object of this paper is to present
the above, reassessing that part of causality theory which has been presented
in~\cite{ChCausalityv1} from the perspective of continuous metrics. One of our main results is the proof
that domains of dependence equipped with continuous metrics continue to admit Cauchy time
functions.%
\footnote{Once the first draft of this paper was completed (arXiv:1111.0400v1) we were informed of~\cite{FathiSiconolfiTime}, where
the result is proved by completely different methods, and in greater generality. Subsequent email exchanges with
A.~Fathi inspired the proof of Theorem~\ref{T9IX11.1} below.}
Another key result is the observation that the causality theory
developed in~\cite{ChCausalityv1} for $C^2$ metrics remains true for $C^{0,1}$ metrics.

An application of our work to the general relativistic Cauchy problem can be found in~\cite{ChMGHD}. In fact, the main motivation for the current work was to develop the Lorentzian-causality tools needed for that last reference. The existing treatments of continuous Lorentzian metrics known to us (see, e.g., the references in Section~2 of \cite{AGPS}) are, unfortunately, not well-suited to the study of this particular application.

The conventions and notations of~\cite{ChCausalityv1} are used throughout. In particular all manifolds are assumed to be connected, Hausdorff, and paracompact. As we are
interested in $C^0$ metrics, the natural associated differentiability class of
the manifolds is $C^1$. Now, $C^1$ manifolds always possess a $C^\infty$
subatlas, and we will choose some such subatlas   whenever convenient. For instance, when a smooth nearby metric is needed,  we choose some smooth subatlas, obtaining thus a smooth manifold which is $C^1$-diffeomorphic to the original one.  We  carry out the smoothing construction on this new manifold, obtain the desired conclusions there, and return to the original atlas at the end of the argument.

A \emph{space-time}, throughout, will mean a time-oriented Lorentzian manifold $(\mcM, g)$.

\subsection{Some background on manifolds}
 \label{ss3IV11.1}

Let $\mcM$ be an $n$-dimensional smooth manifold. By this, we will mean that we have a maximal atlas $\mathcal{A} = \left\{ \left( V_{\alpha}, \varphi_{\alpha} \right) : \alpha \in A \right\}$ of charts, each of which consists of an open set $V_{\alpha} \subseteq \mcM$ and a bijection
 $\varphi_{\alpha} \colon V_{\alpha} \to \R^n$, where $\varphi_{\alpha} \left( V_{\alpha} \right)$ is an open subset of $\R^n$. These charts are compatible in the sense that
\begin{itemize}
\item[$\bullet$] For all $\alpha, \beta \in A$ such that $V_{\alpha} \cap V_{\beta} \neq \emptyset$, the sets $\varphi_{\alpha}\left( V_{\alpha} \cap V_{\beta} \right)$ and $\varphi_{\beta}\left( V_{\alpha} \cap V_{\beta} \right)$ are open subsets of $\R^n$;
\item[$\bullet$] For all $\alpha, \beta \in A$ such that $V_{\alpha} \cap V_{\beta} \neq \emptyset$, the maps
\begin{equation}
\label{transition}
\varphi_{\alpha} \circ \varphi_{\beta}^{-1} \colon \varphi_{\beta}\left( V_{\alpha} \cap V_{\beta} \right) \to \varphi_{\alpha}\left( V_{\alpha} \cap V_{\beta} \right)
\end{equation}
are $C^{\infty}$.
\end{itemize}

The collection of subsets $\mathcal{B} := \left\{ V_{\alpha} : \alpha \in A \right\}$ defines a basis for a topology on $\mcM$ with respect to which the maps $\varphi_\alpha \colon\, V_{\alpha} \to \R^n$ are continuous. The manifold $\mcM$, with this topology, is automatically locally compact, i.e., any point has a compact neighbourhood. Note that we impose that the transition maps~(\ref{transition}) are $C^{\infty}$ since we will later wish to approximate tensorial objects on $\mcM$ by corresponding smooth objects. In reality, imposing that the transition functions are $C^3$ would be sufficient for most of our considerations.

We impose the additional topological conditions that the manifold $\mcM$ be connected, Hausdorff and paracompact. The fact that $\mcM$ is Hausdorff implies that, in addition to being locally compact, $\mcM$ has the property that any point $p \in \mcM$ has an open neighbourhood with compact closure.\footnote{Or, equivalently, that every point has a compact, \emph{closed\/} neighbourhood.}

\begin{Remark}
{\rm
A theorem of Geroch~\cite{GerochSpinI} shows that a manifold with a
$C^2$ Lorentzian metric is necessarily paracompact.
However, Geroch's construction requires extensive use of the exponential map,
and therefore this method cannot be applied when the metric is merely continuous.
As such, we impose the condition that $\mcM$ be paracompact by hand.
}
\end{Remark}

Since $\mcM$ is Hausdorff and paracompact, it follows that $\mcM$ admits smooth partitions of unity. In particular, $\mcM$ admits a smooth Riemannian metric. Let us once and for all choose a Riemannian metric, say $h$, on $\mcM$, as differentiable as the atlas allows. Without loss of generality~\cite{NomizuOzeki}, we will assume that this metric is complete.

\subsection{Causality: basic notions}

Somewhat surprisingly, the causality theory for continuous
metrics appears to present various difficulties. For instance,
following~\cite{ChCausalityv1}, it is tempting to continue to
define a \emph{timelike curve\/} as a locally Lipschitz curve with tangent vector which
is timelike almost everywhere. With this definition, for
metrics which are only continuous, it is not even clear whether
or not the timelike futures remain open, and in fact we do not know the answer to this question.
Next, one would like to
keep the definition of a \emph{causal curve\/} as a
locally Lipschitz curve with tangent vector which is causal
almost everywhere. With this definition, or for that matter with any alternative,
it is not clear
that a pointwise limit of causal curves is causal. To settle this last question, and some others, we will extensively use families of smooth metrics which
approach the given metric $g$ in a specific way. For this, some
notation will be useful. Let $\chg$ be a Lorentzian metric; we
will say that the light-cones of $\chg$ are \emph{strictly
smaller\/} than those of $g$, or that those of $g$ are
\emph{strictly larger\/} than those of $\chg$, and we shall write
\bel{13III11.0}
 \chg \prec g
  \;,
\ee
if it is true that
\bel{13III11.1}
 \chg(T,T) \le 0, \quad T \neq 0 \quad
  \Longrightarrow
  \quad
  g(T,T)<0
  \;.
\ee

We claim that

\begin{Proposition}
 \label{P13III11.1}
  Let $(\mcM,g)$ be a space-time with a continuous metric $g$.
 For every $\epsilon>0$ there exist \emph{smooth\/} Lorentzian
 metrics $\chg$ and $\hg$ such that
\bel{13III11.2}
 \chg \prec g \prec \hg
\ee
and such that
\bel{13III11.3}
 d(\chg , g)+ d( \hg ,g)\le \epsilon
 \;,
\ee
where
\bel{13III11.4}
 d(g_1, g_2)= \sup_{0\ne X,Y \in TM}
 \frac{|g_1(X,Y)-g_2(X,Y)|}{|X|_h |Y|_h}
 \;.
\ee
(Recall that $h$ denotes a fixed smooth, complete Riemannian metric.)
\end{Proposition}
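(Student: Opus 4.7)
The strategy has two stages: first mollify $g$ to produce a smooth Lorentzian approximation $g^*$ with $d(g^*,g)$ a small fraction of $\eps$; second, widen (respectively narrow) its light-cones by subtracting (respectively adding) a small multiple of the reference metric $h$. Concretely one sets $\hg := g^* - \delta h$ and $\chg := g^* + \delta h$ for a suitable $\delta$, and checks that the cone-inclusion \eq{13III11.1} is forced as soon as $\delta$ modestly exceeds $d(g^*,g)$.

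For the smoothing, I would fix a locally finite cover of $\mcM$ by coordinate charts $\{V_\alpha\}$ with subordinate smooth partition of unity $\{\chi_\alpha\}$, convolve the components of $g$ in each chart with a standard smooth mollifier at a sufficiently small length-scale to obtain smooth symmetric $(0,2)$-tensors $g^{(\alpha)}$, and set $g^* := \sum_\alpha \chi_\alpha g^{(\alpha)}$. Since being Lorentzian is an open pointwise condition on symmetric tensors, and $|g^*-g|_h$ can be forced arbitrarily small at each point by choosing the mollification scales small enough, $g^*$ is automatically smooth and Lorentzian; the scales can moreover be chosen so that $d(g^*,g) \le \eps/10$.

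With $\delta := \eps/5$, the cone inclusion $g \prec \hg$ is direct: if $T\neq 0$ and $g(T,T) \le 0$, then $g^*(T,T) \le g(T,T) + (\eps/10)|T|_h^2 \le (\eps/10)|T|_h^2$, whence
\[
 \hg(T,T) = g^*(T,T) - \delta |T|_h^2 \le (\eps/10 - \eps/5)|T|_h^2 < 0;
\]
the mirror argument, with signs reversed, gives $\chg \prec g$. Using Cauchy--Schwarz (so that $d(h,0) = 1$ and hence $d(\hg,g^*) = \delta = d(\chg,g^*)$) together with the triangle inequality, $d(\hg,g) + d(\chg,g) \le 2(\delta + \eps/10) = 3\eps/5 \le \eps$, as required.

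The principal obstacle is ensuring that $\hg$ and $\chg$ are themselves Lorentzian: subtracting $\delta h$ from $g^*$ can push a small positive eigenvalue (in an $h$-orthonormal frame) through zero, destroying signature, and on a non-compact $\mcM$ the Lorentzian margin $\mu(x)$ of $g$ with respect to $h$ need not be uniformly bounded below by $\delta$. I would handle this by replacing the constant $\delta$ with a smooth positive function $\delta(x)$ satisfying $\eps/10 < \delta(x) < \mu^*(x)$, where $\mu^*(x)$ denotes the Lorentzian margin of $g^*$, and by arranging the mollification scales to be locally small enough that $|g^*-g|_h(x) < \min(\eps/10, \mu(x)/4)$ at each point, which forces $\mu^*(x) \ge 3\mu(x)/4$. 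The cone-inclusion and distance estimates above then pass through verbatim with $\delta$ replaced by $\delta(x)$.
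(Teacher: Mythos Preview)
Your approach is sound and genuinely different from the paper's. The paper perturbs the mollified metric by $\lambda\, T_\mu T_\nu$, where $T_\mu$ is (a mollification of) the one-form $g(T,\cdot)$ associated with a continuous timelike vector field guaranteed by time-orientability; the key estimate is that $|T(X)|\ge c>0$ for $g$-causal $h$-unit vectors $X$, so a negative multiple of $T\otimes T$ strictly opens the cones. Because this rank-one perturbation is aligned with the timelike direction, signature preservation is almost automatic for small $|\lambda|$. Your full-rank perturbation $\pm\delta h$ is conceptually simpler---the inequality $h(X,X)>0$ is trivial---at the cost of having to confront the signature issue explicitly, which you correctly anticipate.

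There is, however, a small inconsistency in your repair. You ask for $\eps/10 < \delta(x) < \mu^*(x)$, but at points where $\mu(x)$ (and hence $\mu^*(x)\ge 3\mu(x)/4$) is much smaller than $\eps$ this interval is empty. The cone-inclusion argument in fact only needs $\delta(x)$ to dominate the \emph{pointwise} error $|g^*-g|_h(x)$, not the global bound $\eps/10$; since you have already arranged $|g^*-g|_h(x) < \min(\eps/10,\mu(x)/4)$, the correct requirement is
\[
|g^*-g|_h(x) \;<\; \delta(x) \;<\; \min\bigl(\mu^*(x),\,\eps/5\bigr),
\]
and this interval is always non-empty (the lower endpoint is below $\min(\eps/10,\mu(x)/4)$, the upper endpoint above $\min(3\mu(x)/4,\eps/5)$). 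A smooth $\delta$ in this range exists by a routine partition-of-unity construction, and with it all your estimates go through unchanged.
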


\proof
By definition of a space-time, there exists on $\mcM$ a continuous timelike vector field
 $T$. (In fact, a timelike vector field such that $T\otimes T$ is continuous would suffice for the argument below.) Let
$\{B_i(3r_i)\}_{i\in \N}$ be a family of coordinate balls of
radius $3r_i$ such that the balls $\{B_i(r_i)\}_{i\in \N}$ cover $\mcM$.
Let $n+1$ denote the dimension of $\mcM$, and let $\varphi$ be a smooth non-negative radial
function on $\R^{n+1}$ supported in the unit ball, with
integral one, set $\varphi_\eta(x) := \eta^{-(n+1)} \varphi(x/\eta)$.
Let $\{\chi_i\}_{i\in\N}$ be a partition of unity subordinate to the cover, thus $\chi_i$ is supported in $B_i(r_i)$, only a finite number of the $\chi_i$'s are non-zero in a neighborhood of each point, and they sum to one.
For $\eta_i< r_i$, in local coordinates on $B_i(r_i)$ set
\beal{25III12.15}
\varphi_\eta \star g_{\mu\nu}
  &   = &
  \sum_j \chi_ j \varphi_{\eta } \star
    g_{\mu\nu}
     \;,
\qquad
 T(\eta)_\mu
  =
   \sum_j \chi_j\varphi_{\eta } \star (g_{\mu\nu}T ^\nu)
   \;,
\\
 g(\eta,\lambda) _{\mu\nu}
  &  = &
   \varphi_\eta \star g_{\mu\nu} +\lambda T(\eta)_\mu T(\eta)_\nu
  \;,
\eeal{25III12.16}
where $\star$ denotes the usual convolution on $\R^{n+1}$. Then
the $g(\eta,\lambda)$ are smooth tensor fields on $B_i(r_i)$ and converge
uniformly to $g$ on $B_i(r_i)$ as $\eta$ and $\lambda$ tend to
zero. Similarly the vector fields $T(\eta)$ are smooth and
converge uniformly to $T$ on any compact subset of $\mcM$. In particular the vector
fields $T(\eta)$ are timelike on $B_i(r_i)$ both for $g$ and $g(\eta,0)$ for
all $\eta$ sufficiently small, where the notion of ``sufficiently small" might depend upon $i$.
%
%
There exist constants $\eta_i,c_i>0$ such that we have
$|T(\eta)(X)|\ge c_i$ for all $g$-causal vectors $X$, with
$h$--length equal to one, defined over $B_i(r_i)$, and for all
$0\le |\eta| \le \eta_i$. Given any $0< |\lambda|\le 1$ we can
choose $0<\eta(\lambda)\le \eta_i$ so that for all $0\le
\eta\le \eta(\lambda)$ we have
$$|(\varphi_\eta \star g - g)
 (X,X) |\le |\lambda| c_i^2/2
$$
for all such $X$. It
then follows, for all  $g$-causal $X$ over $B_i(r_i)$, for
$\lambda<0$ and for $\eta=\eta(\lambda)$, that
$$
 g(\eta,\lambda)(X,X)=\underbrace{g(X,X)}_{\le 0} + \underbrace{(\varphi_\eta \star g - g)
 (X,X)}_{\le |\lambda| c_i^2 h(X,X)/2 }
 + \underbrace{\lambda (T(\eta)(X))^2}_{\le \lambda c_i^2 h(X,X)} <0
$$
since, in our conventions, causal vectors never vanish.
We can choose $|\lambda_i|$ sufficiently small so that
$$
 d_{B_i}(g(\eta_i(\lambda_i),\lambda_i), g)
  \le \frac \epsilon{2^i}
 \;,
$$
where
\bel{13III11.6}
 d_{B_i}(g_1, g_2):= \sup_{0\ne X,Y \in T_p M, \, p\in B_i(r_i)}
 \frac{|g_1(X,Y)-g_2(X,Y)|}{|X|_h |Y|_h}
 \;.
\ee
Is is now easy to check that, for all $\epsilon>0$ sufficiently small, the smooth tensor field
$$
 \hg :=\sum_i \chi_i g(\eta_i(\lambda_i),\lambda_i)
$$
has Lorentzian signature, has light-cones wider than $g$, and satisfies
\eq{13III11.4}, as desired.

The metric $\chg $ is obtained by choosing $\lambda$ positive in
the construction above. The details are left to the reader.
\qed

\bigskip

Having established the existence of metrics
with the properties spelled-out above, we recall and introduce some definitions:

\begin{Definition}
Let $I$ be an interval.
 We will say that a locally Lipschitz path $\gamma \colon I\to\mcM$, with weak-derivative vector
 $\dot \gamma$ (defined almost everywhere), is
 \begin{enumerate}
 \item \emph{locally uniformly timelike\/} (l.u.-timelike, or l.u.t.) if there
exists a smooth Lorentzian
      metric $\check g \prec g$ such that $\check g( \dot
      \gamma,\dot \gamma) < 0 $ almost everywhere;
       \item \emph{timelike\/} if  we have $  g( \dot
           \gamma,\dot \gamma)< 0$ almost everywhere,
       \item \emph{causal\/} if   $  g( \dot
           \gamma,\dot \gamma)\le 0$ with $\dot \gamma \ne 0$ almost everywhere.
 \end{enumerate}
We set
 \bean
  \cI_g^+ (\Omega,\mcM)&:=&\{q\in \mcM: \ \exists \mbox{\ a future directed l.u.\ $g$-timelike curve  }
  \\
   \label{9IV11.1}
  &&
   \mbox{ $\gamma \colon I \to \mcM$ starting in $\Omega$ and ending at $q$}\}
  \;,
\\
 \nonumber
  I_g^+ (\Omega,\mcM)&:=&\{q\in \mcM: \ \exists \mbox{\  a future directed $g$-timelike curve  }
  \\
   \label{9IV11.2}
  &&
   \mbox{ $\gamma \colon I\to\mcM$ starting in $\Omega$ and ending at $q$}\}
  \;,
\\
 \nonumber
  J_g^+ (\Omega,\mcM)&:=&\Omega\cup\{q\in \mcM: \ \exists \mbox{\  a future directed $g$-causal curve  }
  \\
   \label{9IV11.3}
  &&
   \mbox{ $\gamma \colon I\to\mcM$ starting in $\Omega$ and ending at $q$}\}
  \;.
 \eea
The sets $\cI_g^- (\Omega,\mcM), I_g^- (\Omega,\mcM), J_g^- (\Omega,\mcM)$ are defined in an analogous fashion.
When $I$ is compact, an l.u.-timelike curve $\gamma \colon I \to \mcM$ will be said to be \emph{uniformly timelike}.
\end{Definition}

We will write $\cI^{\pm} (\Omega )$, etc., when the metric and the manifold $\mcM$ are clear from the context. Furthermore $\cI^{\pm}(p):= \cI^{\pm}(\{p\})$, etc.
Clearly
\bel{28IX11.1}
 \cI^+ (\Omega)\subset I^+(\Omega)\subset J^+(\Omega)
 \;.
\ee
(We will, henceforth, often state results for the sets $\cI^+(\Omega)$, etc, on the understanding that analogous statements hold for the corresponding past sets with appropriate modifications.)

We will often use the obvious fact that $\gamma$ is
l.u.-timelike if and only if there exists a smooth Lorentzian
metric $\check g \prec g$ such that $\dot \gamma$ is $\check
g$-\emph{causal\/} almost everywhere.

Smooth curves $\gamma$ in Minkowski space-time with tangent vector $\dot \gamma$ timelike everywhere except at
exactly one point where $\dot \gamma$ is null provide examples of timelike curves which are not l.u.t.

The adjective ``locally" in our definition of l.u.t. curve is motivated by the fact that the light-cones of the metric $\check g$ could be approaching very fast those of $g$ when one recedes to infinity. One could attempt to introduce a notion of ``uniformly timelike" by requiring that the light-cones of $g$ stay a fixed distance apart from the light-cones of $\check g$, where the distance is defined by \eq{13III11.4}. However, there is no natural way of doing this as long as the auxiliary Riemannian metric $h$ is arbitrary.

One immediately finds:

\begin{Proposition}
 \label{P9IV11.1}
 $\cI^+(\Omega)$ is open.
\end{Proposition}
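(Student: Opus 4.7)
Fix $q\in\cI^+(\Omega)$. By definition there exist $p\in\Omega$, a smooth Lorentzian metric $\check g\prec g$, and a locally Lipschitz future-directed curve $\gamma\colon[0,1]\to\mcM$ with $\gamma(0)=p$, $\gamma(1)=q$, and $\check g(\dot\gamma,\dot\gamma)<0$ almost everywhere. The goal is to exhibit an open neighbourhood $U\ni q$ with $U\subset\cI^+(\Omega)$. The strategy is to switch to the smooth Lorentzian spacetime $(\mcM,\check g)$, where the classical causality machinery is available, and then to transfer the conclusion back to $g$ via $\check g\prec g$.

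Choose a convex $\check g$-normal neighbourhood $V$ of $q$ and pick $t_0<1$ close enough to $1$ that $\gamma([t_0,1])\subset V$; set $p':=\gamma(t_0)$. The crux is the classical fact, valid for any smooth Lorentzian metric, that $I^+_{\check g}(p')\cap V$ coincides with the image of the open future timelike cone in $T_{p'}\mcM$ under $\exp^{\check g}_{p'}$ and is therefore open, and that a locally Lipschitz $\check g$-timelike curve joining two points of $V$ forces them to be $\check g$-chronologically related. Applying this to $\gamma|_{[t_0,1]}$, we obtain $q\in I^+_{\check g}(p')$ together with an open neighbourhood $U\subset V$ of $q$ with the property that every $q'\in U$ is joined to $p'$ by a smooth $\check g$-timelike geodesic.

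To conclude, for each $q'\in U$ concatenate $\gamma|_{[0,t_0]}$ (locally Lipschitz, $\check g$-timelike a.e.) with this geodesic (smooth, $\check g$-timelike) to obtain a locally Lipschitz future-directed curve $\alpha$ from $p$ to $q'$ whose tangent is $\check g$-timelike almost everywhere. Since $\check g\prec g$, the very same metric $\check g$ witnesses $\alpha$ being l.u.t.\ for $g$, whence $q'\in\cI^+_g(\{p\})\subset\cI^+(\Omega)$, giving $U\subset\cI^+(\Omega)$. The main obstacle is the classical fact invoked above: one must verify, for genuinely locally Lipschitz (not piecewise smooth) $\check g$-timelike curves inside a convex normal neighbourhood, that the endpoints are $\check g$-chronologically related. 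This is normally done by showing that the $\check g$-world function $t\mapsto\sigma^{\check g}(p',\gamma(t))$ is absolutely continuous with a.e.\ negative derivative; it genuinely uses the smoothness of $\check g$ and is the only non-formal ingredient in the argument.
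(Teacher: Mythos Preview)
Your argument is correct, but it is considerably more elaborate than the paper's. The paper observes the set identity
\[
  \cI^+_g(\Omega)\;=\;\bigcup_{\chg\prec g} I^+_{\chg}(\Omega)\,,
\]
which is immediate from the definition of l.u.t.\ curves, and then simply quotes the standard fact that each $I^+_{\chg}(\Omega)$ is open because $\chg$ is smooth. Openness of $\cI^+_g(\Omega)$ follows since a union of open sets is open.

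Your route fixes a single witness metric $\chg$ and a single point $q$, and then re-derives, via convex normal neighbourhoods and the world function, the openness of $I^+_{\chg}$ at that point, together with an explicit concatenation to push the l.u.t.\ curve to nearby points. This is a valid, self-contained argument, and it has the minor pedagogical advantage of making explicit exactly which smooth-metric fact is being used (the world-function monotonicity for Lipschitz $\chg$-timelike curves). The paper's approach, by contrast, packages the whole thing into one line and makes transparent the structural reason: $\cI^+_g$ is a union of smooth chronological futures. If you adopt the paper's identity, everything you wrote from ``Choose a convex $\chg$-normal neighbourhood\ldots'' onward becomes unnecessary.
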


\proof
We have
$$
 \cI^+_g(\Omega)=\cup_{\chg\prec g} I^+_{\chg}(\Omega)
 \;,
$$
with each $I^+_{\chg}(\Omega)$ open by standard results on smooth metrics.
\qed

\medskip

For smooth metrics $\cI^+$ and $I^+$ coincide. We do not know whether or not
this holds for continuous metrics.

 \medskip

The following fact turns out to be useful:

\begin{Proposition}
 \label{P1IV11.1}
 A path $\gamma$ is causal for $g$ if and only if $\gamma$ is causal for every smooth metric $\hg \succ g$.
\end{Proposition}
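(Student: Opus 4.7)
The forward direction is essentially built into the definition of $\prec$. Suppose $\gamma$ is $g$-causal. At almost every $t$ the derivative $\dot\gamma(t)$ exists, is nonzero, and satisfies $g(\dot\gamma,\dot\gamma)\le 0$. If $\hat g\succ g$, i.e.\ $g\prec\hat g$, then by \eq{13III11.1} applied with $T=\dot\gamma(t)$ we get $\hat g(\dot\gamma,\dot\gamma)<0$ a.e., and in particular $\hat g(\dot\gamma,\dot\gamma)\le 0$ with $\dot\gamma\ne 0$ a.e., so $\gamma$ is $\hat g$-causal (in fact $\hat g$-timelike).

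For the converse, suppose $\gamma$ is $\hat g$-causal for every smooth $\hat g\succ g$. Fixing any one such $\hat g$ already gives $\dot\gamma\ne 0$ almost everywhere, so only the inequality $g(\dot\gamma,\dot\gamma)\le 0$ needs to be established. Assume for contradiction that the measurable set
\[
B:=\{\,t\in I:\dot\gamma(t)\text{ exists, }\dot\gamma(t)\ne 0,\ g(\dot\gamma(t),\dot\gamma(t))>0\,\}
\]
has positive Lebesgue measure. Since $|\dot\gamma|_h>0$ on $B$, the sets
\[
B_\delta:=\Bigl\{\,t\in B:\ g(\dot\gamma(t),\dot\gamma(t))>\delta\,|\dot\gamma(t)|_h^{\,2}\,\Bigr\}
\]
exhaust $B$ as $\delta\searrow 0$, so some $B_{\delta_0}$ has positive measure.

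Now invoke Proposition~\ref{P13III11.1} with $\epsilon<\delta_0$ to produce a smooth $\hat g\succ g$ with $d(\hat g,g)\le\epsilon$. From the definition \eq{13III11.4} we get, at every $t\in B_{\delta_0}$,
\[
\hat g(\dot\gamma,\dot\gamma)\ \ge\ g(\dot\gamma,\dot\gamma)-\epsilon\,|\dot\gamma|_h^{\,2}\ >\ (\delta_0-\epsilon)\,|\dot\gamma|_h^{\,2}\ >\ 0,
\]
so $\hat g(\dot\gamma,\dot\gamma)>0$ on a set of positive measure, contradicting the hypothesis that $\gamma$ is $\hat g$-causal. Hence $B$ has measure zero and $\gamma$ is $g$-causal.

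The substantive input is Proposition~\ref{P13III11.1}, which allows us to squeeze $\hat g$ arbitrarily close to $g$ from above in the $d$-norm while keeping light-cones strictly wider; the rest is a routine measure-theoretic argument. The only mild subtlety is the need to pass from ``$g(\dot\gamma,\dot\gamma)>0$ on a positive-measure set'' to a uniform lower bound $g(\dot\gamma,\dot\gamma)>\delta_0|\dot\gamma|_h^2$ on a positive-measure subset, which is handled by the exhaustion by the $B_\delta$'s above.
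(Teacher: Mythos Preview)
Your proof is correct and takes a genuinely different, more elementary route than the paper's.

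The paper argues pointwise: it invokes Rademacher's theorem to find a single parameter value $s_0$ at which $\gamma$ is classically differentiable with $\dot\gamma(s_0)$ $g$-spacelike, chooses a smooth $\hat g\succ g$ for which $\dot\gamma(s_0)$ remains spacelike, and then uses the normal-neighborhood function $\hat\sigma$ from \cite[Proposition~2.2.3]{ChCausalityv1} together with a Taylor expansion to show $\hat\sigma(\gamma(s))>0$ for $s$ near $s_0$. This contradicts the standard fact (for smooth metrics) that a $\hat g$-causal curve must stay in the closed $\hat g$-causal cone of any of its points within a normal neighborhood.

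Your argument bypasses normal coordinates entirely: you exploit Proposition~\ref{P13III11.1} quantitatively, choosing $\hat g$ so close to $g$ in the $d$-norm that the strict spacelike inequality survives on a positive-measure set, which directly contradicts the almost-everywhere condition in the definition of $\hat g$-causal. This is shorter and self-contained, since it does not appeal to the external result about causal curves in normal neighborhoods. The paper's approach, on the other hand, extracts slightly more geometric information (the curve actually exits the $\hat g$-cone of $\gamma(s_0)$), though that extra content is not needed for the statement at hand.
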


\proof
Let $\gamma \colon I\to \mcM$ be causal for every smooth metric
$\hg$ satisfying $\hg \succ g$. Suppose that
$\gamma$ is not $g$-causal, then there exists a non-zero
measure set $Z\subset I$ such that the weak derivative $\dot
\gamma$ of $\gamma$ is $g$-spacelike for all $p \in Z$. Let
$W\subset I$ be the set of points at which $\gamma$ is
classically differentiable. Then, by Rademacher's
theorem~\cite{EvansGariepy}, $W$ has full measure in $I$.
Therefore, the set $Z\cap W$ has the same measure as $Z$. In
particular, $Z\cap W$ is not empty.

Given $s_0\in Z\cap W$, there exists a smooth metric $\hg \succ
g$ such that $\dot \gamma (s_0)$ is spacelike for $\hat g$.
Since $\hat g$ is smooth, there exists a normal neighbourhood,
$U$, of $\gamma(s_0)$ such that, for any $q \in U$, there
exists a unique affinely-parametrised $\hg$-geodesic $\hat{\gamma}_q
\colon [0, 1] \to U$ from $\gamma(s_0)=\hat \gamma_q(0)$ to the point $q=\hat \gamma_q(1)$. As
in~\cite[Proposition~2.2.3]{ChCausalityv1}, we define the
function $\hat \sigma \colon U \to \mathbb{R}$ by
$\hat{\sigma}(q) := \hat{g}_{\gamma(s_0)}\left(
\frac{d \hat{\gamma}{}_q}{ds}(0), \frac{d \hat{\gamma}{}_q}{ds}(0) \right)$. Taylor expanding $\hat \sigma$ and $\gamma$ we have
\beaa
 \lefteqn{
\gamma(s) = \gamma(s_0) + \dot \gamma (s_0)(s-s_0) + o(s-s_0)
 }
 &&
\\
 && \ \Longrightarrow
\
 \hat \sigma (\gamma(s)) =\hg(\dot \gamma(s_0), \dot \gamma(s_0)) (s-s_0)^2 + o((s-s_0)^2)
 \;.
\eeaa
Since $\hg(\dot \gamma(s_0), \dot \gamma(s_0)) >0$ the right-hand side is positive for $s$ sufficiently close to $s_0$, which contradicts the fact that $\gamma$ is causal for the metric $\hg$. We conclude that $Z\cap W$ is empty, hence $\dot \gamma$ is causal almost everywhere.

The reverse implication is trivial.
\qed

\medskip

As a corollary we obtain one of the key tools of causality theory:

\begin{Theorem}
 \label{T1IV11.1}
 Let $\gamma_n$ be a sequence of causal curves accumulating at $p$.
Then there exists a causal curve $\gamma$ through $p$ which is an accumulation curve of the $\gamma_n$'s.
\end{Theorem}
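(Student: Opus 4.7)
The plan is to reduce to the classical limit-curve lemma for smooth metrics by approximating $g$ from above, and then to patch together the resulting limits by a diagonal argument. First, I invoke Proposition~\ref{P13III11.1} iteratively to produce smooth Lorentzian metrics $\hat g_k$ satisfying $g\prec\hat g_{k+1}\prec\hat g_k$ and $d(\hat g_k,g)\le 1/k$; at each step $\hat g_{k+1}$ is selected to be an approximation of $g$ from above whose light cones lie strictly inside those of $\hat g_k$. By Proposition~\ref{P1IV11.1}, every $\gamma_n$ is $\hat g_k$-causal for every $k$. Reparametrize each $\gamma_n$ by $h$-arclength, translating the parameter so that $\gamma_n(0)\to p$; the resulting family is equi-Lipschitz with $|\dot\gamma_n|_h=1$ almost everywhere.

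Next, I extract a limit curve by a diagonal procedure. Apply the classical smooth-metric accumulation-curve theorem to $\hat g_1$: a subsequence, still denoted $\{\gamma_n\}$, converges uniformly on compact subintervals to a locally Lipschitz $\hat g_1$-causal curve $\gamma$ through $p$. Inductively, for each $k\ge 2$, apply the smooth-case theorem to the current subsequence with respect to $\hat g_k$ and extract a further subsequence whose limit is $\hat g_k$-causal; because it is itself a subsequence of a uniformly convergent sequence with limit $\gamma$, this sub-limit must coincide with $\gamma$, and hence $\gamma$ is $\hat g_k$-causal. A standard Cantor diagonal over $k\in\N$ yields a single subsequence of $\{\gamma_n\}$ whose uniform-on-compacts limit $\gamma$ is simultaneously $\hat g_k$-causal for every $k$.

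To conclude, I verify $g$-causality of $\gamma$. At almost every parameter value $s$, the derivative $\dot\gamma(s)$ exists by Rademacher's theorem, is nonzero because $\gamma$ is already $\hat g_1$-causal, and satisfies $\hat g_k(\dot\gamma(s),\dot\gamma(s))\le 0$ for every $k$. Since $\hat g_k\to g$ in the norm $d$ of \eq{13III11.4}, the inequality passes to the limit, giving $g(\dot\gamma(s),\dot\gamma(s))\le 0$; combined with $\dot\gamma\ne 0$ a.e., this is exactly the definition of a $g$-causal curve. The principal obstacle I foresee is the bookkeeping of the nested extraction so that one single curve $\gamma$ through $p$ serves all the metrics $\hat g_k$ simultaneously; once the equi-Lipschitz arclength parametrization is in place, uniqueness of uniform limits makes the iteration automatic, and the only delicate step is the first application of the smooth limit-curve theorem, which relies on the hypothesis that the $\gamma_n$ accumulate at $p$ to guarantee that $\gamma$ passes through $p$ with positive parameter extent on at least one side.
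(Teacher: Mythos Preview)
Your proof is correct and follows the same overall strategy as the paper: pass to smooth metrics with wider light-cones, invoke the classical smooth limit-curve lemma, and then deduce $g$-causality of the limit. The paper's argument is, however, noticeably shorter. It picks a \emph{single} smooth $\hat g\succ g$, obtains the accumulation curve $\gamma$ once, and then observes that the notion of accumulation curve (uniform-on-compacts subsequential limit) is metric-independent; hence the \emph{same} $\gamma$, via the \emph{same} subsequence, is automatically $\hat g'$-causal for every smooth $\hat g'\succ g$, and Proposition~\ref{P1IV11.1} (the nontrivial direction) finishes. Your nested family $\hat g_k$, the iterated extraction, and the Cantor diagonal are therefore unnecessary: once the first subsequence converges to $\gamma$, no further extraction is needed to see that $\gamma$ is $\hat g_k$-causal for all $k$. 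What your route buys is that you bypass the nontrivial direction of Proposition~\ref{P1IV11.1} entirely, replacing it by the elementary observation that $\hat g_k(\dot\gamma,\dot\gamma)\le 0$ for all $k$ and $\hat g_k\to g$ in the norm $d$ force $g(\dot\gamma,\dot\gamma)\le 0$ almost everywhere. Both conclusions are valid; the paper's is just more economical once Proposition~\ref{P1IV11.1} is available.
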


Equivalently, there exists a causal curve $\gamma$ and a subsequence
$\gamma_{n_i}$ which converges to $\gamma$ uniformly on compact sets.

 \medskip

\proof
Let $\hg\succ g$ be smooth, then the $\gamma_n$'s are causal
for $\hg$, and by standard results for smooth metrics~\cite[Theorem~2.6.7]{ChCausalityv1}
there exists a $\hg$-causal curve $\gamma$ through $p$ which is an
accumulation curve of the $\gamma_n$'s. Note that the notion of
accumulation curve is independent of the metric, so that the
same curve $\gamma$ is a causal accumulation curve of the
$\gamma_n$'s for any smooth metric with cones larger than $g$.
The result then follows by Proposition~\ref{P1IV11.1}.
\qed
\medskip

We finally introduce
\bean
\hspace{-1cm}\cJ_g^+ (\Omega,\mcM)&:=&\Omega\cup\{q\in \mcM: \ \exists \mbox{\  a  curve  $\gamma \colon I\to\mcM$ starting in $\Omega$ and }
  \\
   \nonumber
  &&
   \mbox{ending at $q$ which is an accumulation curve of a sequence}
   \\
   \label{9IV11.3x}
  &&
   \mbox{of l.u.t. future directed curves starting in $\Omega$}\}
  \;.
 \eea
Curves $\gamma$ as in \eq{9IV11.3x} will be called future directed \emph{$\cJ_g$-causal, or
$\cJ$-causal, curves starting in $\Omega$}. More generally, curves which are
accumulation curves of l.u.t. curves will be called \emph{$\cJ$-causal}.

\medskip

Theorem~\ref{T1IV11.1} and standard considerations show that

\begin{Proposition}
 \label{P29IX11.1}
  \begin{enumerate}
   \item
$\cJ$-causal curves are causal.
 \item
Let $\gamma_n$ be a sequence of $h$-parameterized $\cJ$-causal curves
accumulating at a point $p$. Then there is a $\cJ$-causal curve $\gamma$ through
$p$ which is an accumulation curve of the $\gamma_n$'s.
  \end{enumerate}
\qed
\end{Proposition}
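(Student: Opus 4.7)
The plan is to handle the two parts separately. For (i), I unpack the definition of $\cJ$-causality: there exists a sequence $\mu_k$ of l.u.t.\ future directed curves of which $\gamma$ is an accumulation curve, so after passing to a subsequence $\mu_k \to \gamma$ uniformly on compact sets. Each $\mu_k$ is l.u.t., so $\chg_k(\dot\mu_k,\dot\mu_k)\le 0$ a.e.\ for some smooth $\chg_k \prec g$; by the defining property of $\prec$ this forces $\dot\mu_k$ to be $g$-timelike a.e., and therefore $\hg$-timelike and in particular $\hg$-causal, for any prescribed smooth $\hg \succ g$. The standard smooth-metric limit result invoked in the proof of Theorem~\ref{T1IV11.1} (i.e.\ \cite[Theorem~2.6.7]{ChCausalityv1}) then ensures that $\gamma$ is itself $\hg$-causal. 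Since this holds for every smooth $\hg \succ g$, Proposition~\ref{P1IV11.1} delivers the $g$-causality of $\gamma$.

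For (ii), the strategy is a diagonal extraction combined with Theorem~\ref{T1IV11.1}. By part (i) each $\gamma_n$ is $g$-causal, so Theorem~\ref{T1IV11.1} supplies a $g$-causal accumulation curve $\gamma$ through $p$; after relabelling a subsequence I may assume $\gamma_n\to\gamma$ uniformly on compact sets. For each $n$, the $\cJ$-causality of $\gamma_n$ provides a sequence $(\mu_{n,k})_{k\in\N}$ of l.u.t.\ future directed curves (starting in the appropriate reference set) with $\mu_{n,k} \to \gamma_n$ uniformly on compact sets as $k\to\infty$. The diagonal step is then: for each $n$ pick $k_n$ so that $\mu_{n,k_n}$ is within $h$-distance $1/n$ of $\gamma_n$ uniformly on the parameter window $[-n,n]$, using the common $h$-parameterization to make the comparison meaningful. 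A triangle inequality yields $\mu_{n,k_n}\to\gamma$ uniformly on compact sets, exhibiting $\gamma$ as an accumulation curve of a sequence of l.u.t.\ curves, so $\gamma$ is $\cJ$-causal.

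The point where the only real work lies is the bookkeeping in the diagonal step: one has to exploit the $h$-parameterization to reconcile parameter ranges across three sequences (the $\gamma_n$, the $\mu_{n,k}$ for fixed $n$, and the diagonal $\mu_{n,k_n}$), and to verify that the selected $\mu_{n,k_n}$ still satisfy the starting-set condition needed to qualify as l.u.t.\ curves witnessing the $\cJ$-causality of $\gamma$. Once those matters are fixed, everything else reduces to machinery already in place, namely Theorem~\ref{T1IV11.1}, Proposition~\ref{P1IV11.1}, and the standard smooth-metric compactness theorem for causal curves used in their proofs.
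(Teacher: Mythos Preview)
Your proposal is correct and matches the paper's approach: the paper gives no detailed argument, merely stating that the result follows from ``Theorem~\ref{T1IV11.1} and standard considerations,'' and what you have written is precisely a correct unpacking of those considerations (part~(i) is Theorem~\ref{T1IV11.1} applied to the approximating l.u.t.\ sequence, and part~(ii) is Theorem~\ref{T1IV11.1} followed by a diagonal extraction). One small remark: your worry about a ``starting-set condition'' is unnecessary here, since the paper's general definition of $\cJ$-causal is simply ``accumulation curve of l.u.t.\ curves,'' with no reference to a set $\Omega$; the $\Omega$-dependent notion is only used when defining $\cJ^+(\Omega,\mcM)$.
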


We point out the inclusions
\bel{28IX11.2}
 \cI^+ (\Omega)\subset I^+(\Omega)\;,
  \quad
 \cJ^+ (\Omega)\subset \overline \cI^+(\Omega)\;,
  \quad  \cI^+ (\Omega)\subset \cJ^+(\Omega)\subset J^+(\Omega)
 \;,
\ee
and note that an inclusion relation between $I^+(\Omega)$ and $ \cJ^+(\Omega)$ is not clear.

The following notion is a convenient replacement for the notion of elementary region of \cite[Definition~2.2.7]{ChCausalityv1}:

\begin{Definition}
 \label{D9IV11.1}
An open conditionally compact set $\mcU =I\times \mcV$ will be called a \emph{cylindrical neighborhood\/} of a point $p$ if $\overline{\mcU}$ is contained in the domain of a single coordinate system in which $g_{\mu\nu}=\eta_{\mu\nu}$ at $p$, and in which the coordinate slopes of the light-cones of $g$, when graphed over $\mcV$, are bounded by $1/2$ from below and by $2$ from above.%
\footnote{By ``coordinate slope" we mean the ratio of the $t$-component of a null vector with the Euclidean length of its space-components in the coordinate basis.}
We will further assume that $\nabla t$ is $g$-timelike,
where $t$ is the coordinate along the $I$ factor of $\mcU$.
\end{Definition}

We also introduce:

\begin{Definition}
 \label{D25IV11.1}
  \begin{enumerate}
   \item
A path $\gamma \colon I\to\mcM$ will be called a \emph{limit-geodesic\/} if there exists a sequence of smooth metrics $g_n$ converging locally uniformly to $g$ and a sequence of $g_n$-geodesics $\gamma_n \colon I\to \mcM$ such that the $\gamma_n$'s converge locally uniformly to $\gamma$.

 \item A causal path $\gamma$ through $p$ will be called \emph{approximable\/} if there exists a sequence of metrics $\chg_n\prec g$ converging locally uniformly to $g$ and a sequence of $\chg_n$-causal paths through $p$ which converge locally uniformly to $\gamma$; it will be said to be \emph{non-approximable\/} otherwise.
  \end{enumerate}
\end{Definition}

We have the following replacement for \cite[Proposition~2.4.5]{ChCausalityv1}:

\begin{Proposition}
 \label{P10IV11.1}
Let $g$ be a continuous metric on $\mcM$. Then:
 \begin{enumerate}
\item
 \label{P10IV11.1p1}
Every $p\in \mcM$ has a cylindrical neighborhood $\mcU=I\times \mcV$.
\item
 \label{P10IV11.1p2}
$\dotJ^+_g(p,\mcU)$ is a uniformly Lipschitz graph over $\mcV$, and $q\in \mcU$
lies \underline{above or on} the graph of $\dotJ^+_g(p,\mcU)$ if and only if
$q\in J^+_g(p,\mcU)$. In particular
$$
 \dotJ^+_g(p,\mcU)\subset J^+_g(p,\mcU)
 \;;
$$
equivalently, $ J^+_g(p,\mcU)$ is closed in $\mcU$.
  \item
 \label{P10IV11.1p2b}
 $q\in \mcU$ lies \underline{above} the graph of $\dotJ^+_g(p,\mcU)$ if and only if $q\in \cI^+_g(J^+_g(p,\mcU),\mcU)$.
\item
 \label{P10IV11.1p2c}
 $\dotcI ^+_g(p,\mcU)\subset J^+_g(p,\mcU)$.
\item
 \label{P10IV11.1p2d}
 $\dotcI ^+_g(p,\mcU)$ is a uniformly Lipschitz graph over $\mcV$, and $q\in \mcU$ lies above the graph of $\dotcI ^+_g(p,\mcU)$ if and only if $q\in \cI^+_g(p,\mcU)$.
      \item
 \label{P10IV11.1p5}
 If  $ {\dcI}^+_g(p,\mcU) \not= \dotJ^+_g(p,\mcU)$, then the \emph{future bubble set of $p$,}
$$
 \mcB_p^+:=\cI^-_g(\dotcI ^+_g(p,\mcU),\mcU)\cap \cI^+_g( \dotJ^+_g(p,\mcU),\mcU)
$$
is open non-empty.

\item
 \label{P10IV11.1p5d}
Making $\mcU$ smaller if necessary, for every point $q$ in $J^+_g(p,\mcU)$
there exists a \underline{causal limit-geodesic}
from $p$ to $q$.

\item
 \label{P25V11.1}
 For every point $q$ in
\bel{21V11.5}
I^-_g({\dcI}{}^+_g(p,\mcU),\mcU) \cap J^+_g(p,\mcU)\;,
\ee
every {causal} curve from $p$ to $q$ is \underline{non-approximable}.
 \end{enumerate}
\end{Proposition}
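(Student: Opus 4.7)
The plan is to argue by contradiction: suppose $\gamma\colon[0,1]\to\mcU$ is an approximable causal curve from $p$ to $q$, witnessed by smooth metrics $\chg_n\prec g$ converging locally uniformly to $g$ and $\chg_n$-causal curves $\gamma_n\to\gamma$. Let $\sigma\colon[0,1]\to\mcU$ be a future-directed $g$-timelike curve from $q$ to some $r\in\dotcI^+_g(p,\mcU)$, as provided by the hypothesis. The goal is to construct a l.u.t.\ curve from $p$ to $r$, which would force $r\in\cI^+_g(p,\mcU)$; this contradicts $r\in\dotcI^+_g(p,\mcU)$ because by Part~(\ref{P10IV11.1p2d}), $\cI^+_g(p,\mcU)$ is exactly the open region strictly above the Lipschitz graph $\dotcI^+_g(p,\mcU)$ and hence disjoint from it.

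First I would establish a local step: for every small $s_0>0$ at which $\dot\sigma$ is classically defined and strictly $g$-timelike, $\sigma(s_0)\in\cI^+_g(p,\mcU)$. In the cylindrical coordinates, for large $n$ the coordinate straight segment joining $\gamma_n(1)$ to $\sigma(s_0)$ has direction close to $\sigma(s_0)-q$, and this vector lies strictly inside the $g$-cone at $q$ because $\sigma$ is $g$-timelike near $s=0$. By continuity of $g$ and the fact that $\chg_n\to g$ locally uniformly, this segment is $\chg_n$-timelike for $n$ large; concatenating it with $\gamma_n$ yields a $\chg_n$-causal, hence l.u.t., curve from $p$ to $\sigma(s_0)$, so $\sigma(s_0)\in\cI^+_g(p,\mcU)$.

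Next I would propagate the conclusion along $\sigma$: define $S:=\{s\in(0,1]:\sigma(s)\in\cI^+_g(p,\mcU)\}$. By the local step $(0,\delta]\subset S$ for some $\delta>0$, and $S$ is relatively open in $(0,1]$ by Proposition~\ref{P9IV11.1}. To conclude $S=(0,1]$, I would apply the same straight-line extension argument at $\sigma(s)$ for $s$ just below $\sup S$: using that $\sigma(s)\in\cI^+_g(p,\mcU)$ is witnessed by some $\chg_{n(s)}$-causal curve from $p$, and appending a short coordinate segment in the strictly $g$-timelike direction $\dot\sigma(s)$, one sees that $\sigma$ enters $\cI^+_g(p,\mcU)$ at times strictly beyond $\sup S$, forcing $\sup S=1$. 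In particular $r=\sigma(1)\in\cI^+_g(p,\mcU)$, giving the contradiction.

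The hard part will be making the iteration uniform along $\sigma$: the admissible step size at each point depends both on how strictly $\dot\sigma(s)$ lies inside the $g$-cone and on how close $\chg_{n(s)}$ is to $g$. If $\dot\sigma$ were to approach the $g$-null cone along $\sigma$, the step sizes could in principle shrink to zero before reaching $r$. The resolution should be a compactness argument on $[0,1]$ exploiting the modulus of continuity of $g$ and the uniform convergence $\chg_n\to g$ on the compact image of $\sigma$, allowing extraction of a single smooth $\chg\prec g$ and a uniform lower bound on the step size after passage to a subsequence.
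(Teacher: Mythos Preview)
Your approach is correct in outline but considerably more elaborate than necessary, and the ``hard part'' you identify is a genuine complication that the paper's argument avoids entirely.

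The paper's proof is essentially a two-line observation exploiting the graphing function $f_+$ established earlier in the same proof. Since each approximating curve $\gamma_n$ is $\chg_n$-causal with $\chg_n\prec g$ smooth, $\gamma_n$ is itself locally uniformly timelike for $g$ (this is precisely the remark immediately following the definition of l.u.t.\ curves). Hence the image of $\gamma_n$, apart from its initial point $p$, lies in $\cI^+_g(p,\mcU)$, i.e.\ strictly above the graph of $f_+$. Passing to the limit, $q=\lim_n\gamma_n(1)$ lies on or above the graph of $f_+$. Since the set in \eq{21V11.5} consists of points lying \emph{under} that graph, this is already a contradiction: no propagation along $\sigma$ is needed, and the question of whether $r$ itself lies in $\cI^+_g(p,\mcU)$ never arises.

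Your route---pushing $\cI^+$-membership along the timelike curve $\sigma$ from $q$ all the way to $r\in\dotcI^+_g(p,\mcU)$---can be made to work, but it requires exactly the uniform control you flag at the end: either a single smooth $\chg\prec g$ for which all the concatenated curves are causal, or a uniform lower bound on the admissible step size along $\sigma$. This is obtainable via compactness of $\sigma([0,1])$ together with the observation that for any two smooth metrics $\chg_1,\chg_2\prec g$ one can construct a third smooth $\chg_3\prec g$ whose light-cones contain those of both; but none of this machinery is needed. You implicitly use the key fact (that $\chg_n$-causal curves from $p$ land in $\cI^+_g(p,\mcU)$) in your local step, yet you do not exploit it directly for $q=\gamma(1)$ itself---doing so collapses the whole argument to the paper's one-liner.
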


\proof
Let $x^\mu$ be any coordinate system near $p$ with $\partial_0$
timelike. By a Gram--Schmidt orthogonalisation starting from
the basis $\partial_\mu$ we can find a linear map so that
$A^\mu {}_{\nu} \partial_\mu$ forms an $ON$-basis at $p$. An
associated linear coordinate transformation leads to
coordinates in which $g_{\mu\nu}$ takes the Minkowskian form at
$p$; in particular, the light-cones at $p$ have
coordinate-slopes equal to one.
By continuity, the light cones will have slopes between
one-half and two in a sufficiently small coordinate
neighborhood of $p$, which can be taken of the form $I\times
\mcV$. This proves point~(\ref{P10IV11.1p1}).

It is convenient to assume that $x^\mu(p)=0$, which can always be achieved by a translation.

To continue, let $\chg_n$ be a sequence of smooth metrics on
$\mcU$ uniformly converging to $g$ such that $\chg_n \prec
\chg_{n+1} \prec g$. Similarly, let $\hg_n$ be a sequence of
smooth metrics on $\mcU$ uniformly converging to $g$ such that
$\hg_n \succ \hg_{n+1} \succ g$.
Every point in $\mcU=I\times \mcV$ can be written as $(x^0,
\vec x)$. In particular, we can write $\gamma (s) =
(\gamma^0(s),\vec \gamma (s))$. Given $\vec x \in \mcV$, we set
$$
 f_g (\vec x) = \inf_\gamma \gamma^0 (s_{\vec x})
 \;,
$$
where the $\inf$ is taken over all future directed $g$-causal curves with initial point at $p$ for which there exists $s_{\vec x}$ such that $\vec \gamma(s_{\vec x}) = \vec x$. (We define the infimum over an empty set  to be $+\infty$;  there always exists a neighborhood of $p$ on which $f_g <\infty$.)
 It is well-known that if $g$ is smooth, then the infimum is attained on null geodesics passing through $p$ and $(f_g(\vec x) ,\vec x)$, and that $f_g$ is the graphing function of $\dotJ^+_g(p,\mcU)$.

From the definitions it follows immediately that
$$
 f_{\chg_{n}} \ge
 f_{\chg_{n+1 }} \ge
 f_{\hg_{n+1}} \ge
 f_{\hg_{n}}
 \;.
$$
Hence the pointwise limits
\bel{11IV11.1}
 f_-:= \lim_{n\to\infty} f_{\hg_{n}}  \ \mbox{ and } \ f_+:= \lim_{n\to\infty} f_{\chg_{n}}
\ee
exist, with
$$
 f_- \le f_+\;.
$$
For $n$ large enough we can assume that the coordinate slopes of both the $\chg_n$ light-cones and the $\hg_n$ light-cones are bounded by $1/4$ below and by $4$ from above.
The Arzela--Ascoli theorem shows that the limits in \eq{11IV11.1} are uniform, since all the $f_{\chg_{n}} $ and $ f_{\hg_{n}} $ are Lipschitz, with Lipschitz constant less than or equal to four. The limiting functions are thus also Lipschitz, with identical bounds on the Lipschitz constant.

If $\gamma$ is $g$-causal then it is $\hg_n$-timelike, hence
lying above the graph of $f_{\hg_n}$. Thus
$$
 \gamma^0(s_{\vec x})> f _{\hg_n}(\vec x) \ \mbox{for all $g$-causal curves through $p$ and for all $n$}
 \;.
$$
Passing to the limit $n\to\infty$ we obtain
$$
 \gamma^0(s_{\vec x})\ge  f _{-}(\vec x) \ \mbox{for all $g$-causal curves through $p$}
 \;.
$$
We conclude that the image of every $g$-causal curve through $p$ lies above, or on, the graph of $f_-$.

Similarly it follows that $f_+$ is the graphing function for
$\dcI {}^+_g$, the boundary of $\cI^+_g$, and that the image of every
l.u.-timelike $g$-causal curve through $p$ lies above
the graph of $f_+$.
This establishes point~(\ref{P10IV11.1p2d}).

The uniform bounds on the light-cones of all the metrics involved allow us to parameterize inextendible causal curves in $\mcU=I\times \mcV$ by $t\in I$. Hence we can write $\gamma(s)=(s,\vec \gamma(s))$, with $\vec \gamma(s)$ uniformly Lipschitz, with Lipschitz constant less than or equal to four.

We claim that the graph of $f_-$ coincides with $\dotJ_g^+ (p,\mcU)$. To see this, let $\gamma_n(s)=(s,\vec \gamma_n(s))$ be a null geodesic generator of $\dotJ_{\hg_n}^+ (p,\mcU)$ from $p$ to $(f_{\hg_n}(\vec x),\vec x)$ (choose one if there are more than one). By Arzela--Ascoli, from the sequence $\vec \gamma_n$ we can extract a subsequence converging to a Lipschitz curve $\vec \gamma(s)$. Then the curve $\gamma(s)=(s,\vec \gamma (s))$ lies on the graph of $f_-$, and is causal by Proposition~\ref{P1IV11.1}.

Since every point $(t,\vec x)$ lying above the graph of $f_-$ can be connected to the graph
by the future-directed $g$-timelike curve $s\to (s,\vec x)$, point~(\ref{P10IV11.1p2}) follows.
Clearly points above the graph are in $\cI^+_g(J^+_g(p,\mcU),\mcU) $, and point~(\ref{P10IV11.1p2b}) easily follows.

Point~(\ref{P10IV11.1p2c}) should be clear from what has been said so far.

We will show below that there exist metrics for which $f_-\ne f_+$. In this case the set
\bel{12V12.1}
 {\cal B}_p:=\{f_-(x)<t<f_+(x)\}
\ee
is open, non-empty. For any $x$ for which the interval $f_-(x)<t<f_+(x)$ is non-empty the curve
$$
 f_-(x)<t<f_+(x)\ni t\mapsto (t,x)
$$
is future directed timelike. It follows that
$$
 {\cal B}_p
 \subset \mcB_p^+ =
 \cI^-_g(\dotcI ^+_g(p,\mcU),\mcU)\cap \cI^+_g( \dotJ^+_g(p,\mcU),\mcU)
\;.
$$
Equality of ${\cal B}_p$ and $\mcB_p^+$, and hence point~(\ref{P10IV11.1p5}),
follows now from the fact that the graphs of $f_-$ and $f_+$ are separating
hypersurfaces.

To establish point~(\ref{P10IV11.1p5d}), let $\hg_n\succ g$ converge uniformly to $g$ and let $q\in J^+_g(p,\mcU)$. The uniform bound on the slopes of the light-cones within $\mcU$ easily implies that there exists within $\mcU$ a neighborhood $\mcO$ of the origin which is globally hyperbolic for all nearby smooth metrics.
Here, and elsewhere, \emph{global hyperbolicity} is defined as the requirement that $(\mcM,g)$ be stably causal, and that all non-empty sets of the form $J^+(p)\cap J^- (q)$ be compact.
Replacing $\mcU$ be a cylindrical subset of $\mcO$ we conclude, by standard results, that for all $q\in J^+_{g_n}(p,\mcU)$ there exists a $g_n$-causal $g_n$-geodesic $\gamma_n$ from $p$ to $q$,
then the $\gamma_n$'s converge uniformly to a $g$-causal curve $\gamma$ from $p$ to $q$. By definition, $\gamma$ is a limit-geodesic.
(If $q\in \cI_g^+(p,\mcU)$, we can take the $\gamma_n$'s to be   $\check g_n$-timelike geodesics associated with a sequence of metrics $\check g_n\prec g$, but whether or not the resulting limit-geodesic will be timelike is not clear).

To establish point~(\ref{P25V11.1}), suppose that there exists an approximable causal future directed curve $\gamma$ from $p$ to $q$, then $\gamma$ is the limit of curves the images of which lie above the graph of $f_+$. Hence no approximable causal curves from $p$ to points lying under that graph exist, which completes the proof.
\qed

\medskip

We continue with the promised example, where the equality of $f_-$ and $f_+$ fails:

\begin{cocoExa}
 \label{Exa19V11.1}
 {\rm
Let $\lambda>0$ and consider the metric
\bean
 g &= &
 -(du+(1-|u|^\lambda) dx)^2+ dx^2
\\
 & = &
 -du^2 -2(1-|u|^\lambda) du\,dx + |u|^\lambda(2-|u|^\lambda) dx^2
\;.
\eeal{19V11.1}
We have $\det g_{\mu\nu}=-1$, hence $g$ is a Lorentzian metric on $\R^2$ which belongs to $C^{0,\lambda}(\R^2) \cap C^\infty(\R^2\setminus \{u=0\})$ for $\lambda\in (0,1]$,
$C^{1,\lambda-1}(\R^2)\cap C^\infty(\R^2\setminus \{u=0\})$ for $\lambda\in (1,2)$, and to $C^{2}(\R^2)$ for $\lambda\ge2$, in fact smooth for $\lambda\in 2\N$.
Now, in dimension $1+1$, on regions where the metric is differentiable, every continuously differentiable null curve is a causal geodesic.
Consider then the path $x\mapsto \gamma(x)=(u(x),x)$, then $\gamma$ will be null if and only if
\bel{21V11.1}
 u'=\epsilon-(1-|u|^\lambda)\;,
 \qquad
 \epsilon \in \{\pm 1\}
 \;.
\ee
The equation with $\epsilon=1$ has a solution $u(x)\equiv 0$. This is the unique solution for $\lambda\ge 1$, but for $\lambda\in (0,1)$ and $x>0$ we obtain in addition the usual family of bifurcating solutions,
\bel{21V11.3}
 u_{x_0}(x) = \left\{
                \begin{array}{ll}
                 0 , & \hbox{$x_0\ge x \ge 0$;} \\
\left( {(1-\lambda)}{(x-x_0)}\right)^{\frac 1 {1-\lambda}} & \hbox{$ x\ge x_0\ge 0$,}
                \end{array}
              \right.
\ee
see Figure~\ref{F21V11.1}.
\begin{figure}[tbh]
 \begin{center}
 \includegraphics[width=.5\textwidth]{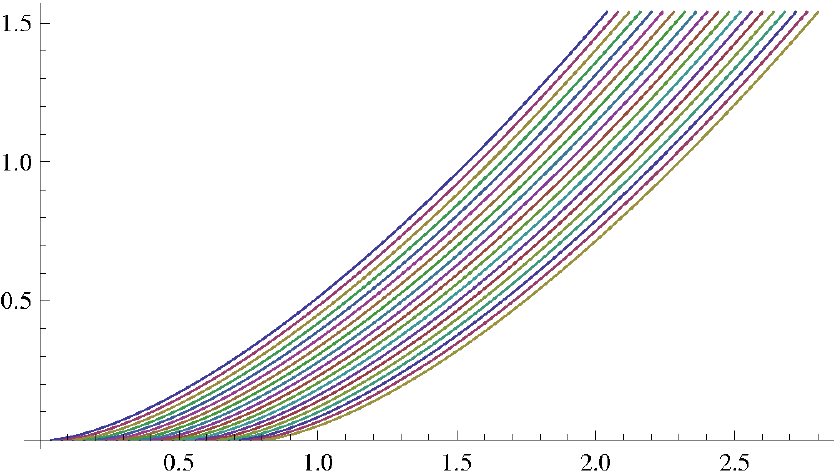}
\caption{\label{F21V11.1} Some right-going null limit-geodesics through $(0,0)$, $\lambda=1/3$ (the coordinate $x$ runs along the horizontal axis). The \emph{causal bubble\/} lies between the $\{u=0\}$ axis and the graph of the ``first bifurcating geodesic" $u_0(x)=(2x/3)^{3/2}$.}
\end{center}
\end{figure}

Consider the functions $f_\pm$ corresponding to the light-cone of the origin $x=u=0$. We claim that, for $x\ge 0$,
\bel{21V11.2}
 f_-(x)\equiv 0\;,
 \qquad
 f_+(x) = u_0(x)
 \;.
\ee
(Negative $x$ can be handled using the isometry $(x,u)\mapsto (-x,-u)$, which leaves both the metric and \eq{21V11.1} invariant).
To establish \eq{21V11.2}, consider any point $p_1=(u_1,x_1)$ with $x_1>0$ and $0\le u_1 \le u_0(x_1)$. Then the unique right-going future directed null $g$-geodesic through $p_1$ belongs to the family \eq{21V11.3}, where $x_0(p_1)>0$ can be calculated from the equation
$$
 u_1 =\left( {(1-\lambda)}{(x_1-x_0(p_1))}\right)^{\frac 1 {1-\lambda}}
 \;.
$$
Let $\check g$ be any smooth metric such that $\check g \prec g$. Then the boundary of the $\check g$-causal past of $p_1$ is a graph of a function which lies below the graph of $u_{x_0(p_1)}$, hence meets the axis $\{u=0\}$ at points with $x$ coordinate {strictly larger} than $x_0(p_1)$. Hence any past directed $\check g$-timelike curve through $p_1$ will also meet $\{u=0\}$ at a point with $x$ coordinate {strictly larger} than $x_0(p_1)$. Since $\{u=0\}$ is $\check g$-spacelike, no $\check g$-timelike curve from this point will ever reach $p_0:=(0,0)$.
So, for $x>0$ the set $\cI(p_0)$ lies above the graph of $x\mapsto u_0(x)$. Note also that every point between the $\{u=0\}$-axis and the graph of $u_0$ lies on the image of  a $g$-causal curve through $p_0$.
The result follows now from Proposition~\ref{P10IV11.1}.

The reader should note that any differentiable curve through the origin entering the bubble region has to have a null tangent vector at the origin. This makes it clear that no differentiable curve through $p_0$ with tangent timelike everywhere enters the bubble region.
In fact, an argument identical to the one in the last paragraph shows that there are no \emph{timelike curves\/} from $p_0$ to points in the bubble region, regardless of differentiability, leading to
$$
 \cI^+(p_0)=I^+(p_0)
  \;.
$$
In particular, the usual Push-up Lemma, which asserts that any causal curve can be deformed slightly to the future to become timelike (compare \cite[Corollary~2.4.16]{ChCausalityv1}),
is \emph{wrong\/} for causal curves from $p_0$ to the bubble region.

We end this example by noting that the scalar curvature $R$ of the metric \eq{19V11.1} equals
\bel{25X11.}
  R = 2 \lambda |u|^{\lambda -2}\left( \lambda - 1 - (2\lambda -1) |u|^\lambda\right)
  \;,
\ee
which is unbounded from below near $\{u=0\}$ for $\lambda\in(0,1)$.
}
\end{cocoExa}

\begin{cocoExa}
 \label{Exa30X11.1}
{\rm
Any function $f(u,x)$ such that the equation $u'=f(u,x)$ has non-unique solutions leads to a bubbling
metric $g$ by setting
$$
 g = -(du +(1-f(u,x))dx)^2 + dx^2
 \;.
$$
An example where the resulting metric is continuous everywhere and smooth except at the origin is thus provided by taking
$f= 4x^3 u/(x^4+u^2)$, with non-unique solutions through the origin
$u=\pm(C^2-\sqrt{x^4+C^4})$, $C\in \R$. (Continuity of $f$ at the origin follows from $2x^2 |u| \le x^4 + u^2$,
whence $|f|\le 2 |x| \le 2\sqrt{x^2+u^2}$.) See Figure~\ref{F30X11.1}.
\begin{figure}[ht]
 \begin{center}
 \includegraphics[width=.5\textwidth]{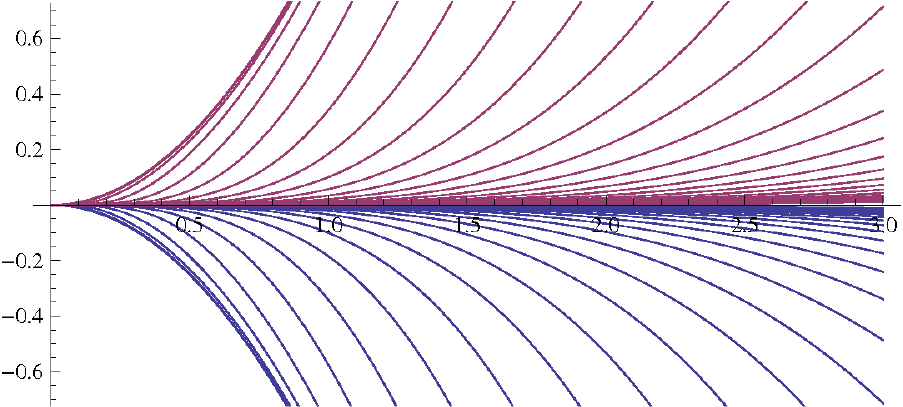}%
 \includegraphics[width=.4\textwidth]{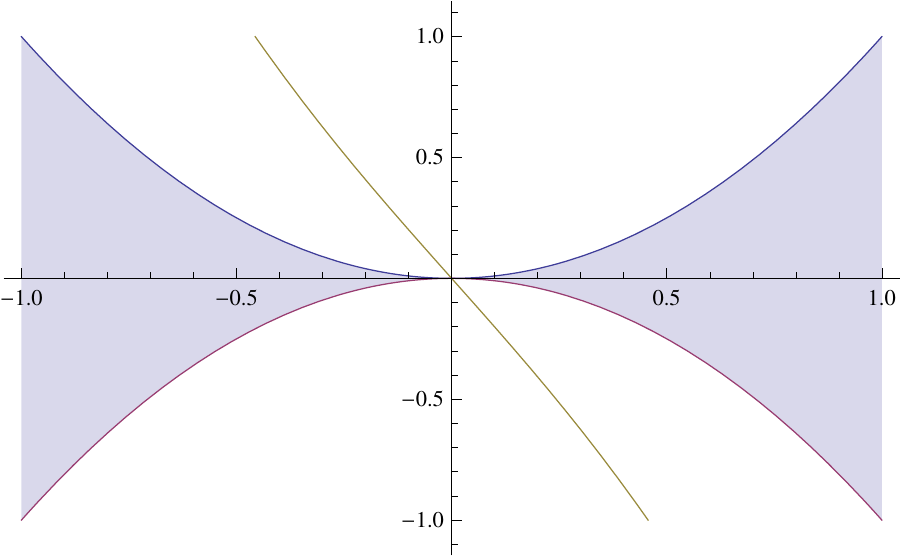}
\caption{\label{F30X11.1} In Example~\ref{Exa30X11.1}, the images of the future
right-going null curves issuing from the origin fill the region $|u|\le x^2$, $x>0$. The set covered by both the left- and right-going null geodesics through the origin is the union of the shaded region and of the negative-slope curve through the origin in the right figure.}
\end{center}
\end{figure}
}
\end{cocoExa}

\begin{cocoExa}
 \label{Exa25V11.3}
{\rm
Denote by ${}^2 g$ the metric from Example~\ref{Exa19V11.1}, and let $\delta$ denote the Euclidean metric on $\R^{n-1}$. Set
$$
 g = {}^2g+ \delta
 \;.
$$
It follows immediately from what has been said in Example~\ref{Exa19V11.1} that
$$
 \dotI ^+(\{(0,0)\}\times \R^{n-1}) = \dcI^+(\{(0,0)\}\times \R^{n-1}) \ne \dotJ^+(\{(0,0)\}\times \R^{n-1})
  \;,
$$
and that
$$
  J^+(\{(0,0)\}\times \R^{n-1}) \setminus   I^+(\{(0,0)\}\times \R^{n-1})
$$
has non-empty interior.
}
\end{cocoExa}

On the other hand, there are cases of interest where we can establish equality of $f_-$ and $f_+$.

\begin{cocoExa}
 \label{Exa15V11.3}
{\rm
Suppose that $g$ is $C^{1,1}$. For any sequence of $C^{1,1}$ metrics $g_n$ converging in $C^{1}$ to $g$, causal $g_n$-geodesics converge to causal $g$-geodesics. Since the latter are uniquely determined by the initial data, we conclude that $f_+=f_-$.
}
\end{cocoExa}

In fact, we can prove more.%
\footnote{Closely related results have been established by A.~Fahti and A.~Siconolfi in the general context
of~\cite{FathiSiconolfiTime} (A.~Fathi, unpublished, private communication).}

\begin{Lemma}
 \label{LT10XI11.1}
Let $g$ be a locally Lipschitz Lorentzian metric on $\mcM$. Let $\gamma \colon [0, 1] \to \mcM$ be a
 future-directed Lipschitz causal
curve,
 contained in a single (open) convex coordinate chart $\mcU\subset\mcM$, with $g(\partial_0,\partial_0)\le -\varepsilon < 0$ for some $\varepsilon>0$.
 Then, for any $q\in \cI^+(\gamma(1))\cap \mcU $ such that
 $$x^0(q)> x^0 (\gamma(1))\;,\qquad x^i(q)= x^i (\gamma(1))
 \;,
 $$
 there exists a (Lipschitz) locally uniformly timelike curve from $p := \gamma(0)$ to $q$.
\end{Lemma}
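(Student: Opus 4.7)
My plan is to construct the desired locally uniformly timelike curve by a \emph{shearing} of $\gamma$ in the $\partial_0$ direction so that its endpoint is pulled from $\gamma(1)$ up to $q$ while its starting point stays at $p$. First, since $\dot\gamma$ is future-directed $g$-causal almost everywhere and $\partial_0$ is uniformly $g$-timelike on $\mcU$, the coordinate $x^0\circ\gamma$ is strictly increasing and bi-Lipschitz, so I would reparametrize $\gamma$ by $t = x^0$. After translating coordinates so that $x^0(p)=0$, write $\gamma(t) = (t,\vec\gamma(t))$ for $t\in[0,T]$ with $T := x^0(\gamma(1))$ and Lipschitz $\vec\gamma$, and set $\Delta := x^0(q)-T > 0$, so that $q = (T+\Delta, \vec\gamma(T))$. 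I would then define the sheared curve
$$
\tilde\gamma(t) := \gamma(t) + \frac{t\Delta}{T}\partial_0 = \Bigl(\bigl(1 + \tfrac{\Delta}{T}\bigr)t,\; \vec\gamma(t)\Bigr), \qquad t\in[0,T],
$$
which satisfies $\tilde\gamma(0)=p$, $\tilde\gamma(T)=q$, and inherits Lipschitz regularity from $\gamma$.

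The heart of the argument is a quantitative timelike estimate. Differentiating gives $\dot{\tilde\gamma}(t) = \dot\gamma(t) + (\Delta/T)\partial_0$ almost everywhere, so
$$
g(\dot{\tilde\gamma}, \dot{\tilde\gamma}) \;=\; g(\dot\gamma, \dot\gamma) \;+\; 2\tfrac{\Delta}{T}\,g(\dot\gamma, \partial_0) \;+\; \bigl(\tfrac{\Delta}{T}\bigr)^{2}g(\partial_0,\partial_0).
$$
The first term is $\le 0$ by causality of $\gamma$; the second is $\le 0$ because two future-directed $g$-causal vectors always have non-positive inner product; the third is $\le -(\Delta/T)^2\varepsilon$ by hypothesis. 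Combining these gives $g(\dot{\tilde\gamma},\dot{\tilde\gamma}) \le -(\Delta/T)^2\varepsilon < 0$ almost everywhere, with a \emph{uniform} quantitative margin. To upgrade this from $g$-timelike to locally uniformly timelike, I would invoke Proposition~\ref{P13III11.1} to obtain, for any $\eta>0$, a smooth metric $\chg \prec g$ with $d(\chg, g) < \eta$. The image of $\tilde\gamma$ has compact closure in $\mcU$, so $|\dot{\tilde\gamma}|_h \le M$ along the curve for some $M$, and therefore $|\chg(\dot{\tilde\gamma},\dot{\tilde\gamma}) - g(\dot{\tilde\gamma},\dot{\tilde\gamma})| \le \eta M^2$. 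Choosing $\eta < (\Delta/T)^2\varepsilon/(2M^2)$ yields $\chg(\dot{\tilde\gamma},\dot{\tilde\gamma}) < 0$ almost everywhere, so $\dot{\tilde\gamma}$ is $\chg$-causal a.e.\ and $\tilde\gamma$ is l.u.t.\ from $p$ to $q$.

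The main obstacle I foresee is the constraint that $\tilde\gamma$ remain inside the coordinate chart $\mcU$. The sheared image is the set $\{(t(1+\Delta/T), \vec\gamma(t)) : t\in[0,T]\}$, which is \emph{not} in general a convex combination of points on $\gamma$ and $q$, so mere convexity of the coordinate image of $\mcU$ in $\R^n$ does not automatically give the needed containment. The cleanest remedy is to replace $\mcU$ at the outset by a smaller cylindrical subneighborhood $I\times\mcV$ (in the sense of Definition~\ref{D9IV11.1}) with $I \supset [0,T+\Delta]$ and $\vec\gamma([0,T]) \subset \mcV$; then $\tilde\gamma$ lies in $I\times\mcV$ by construction, since its $x^0$-component stays in $[0,T+\Delta]$ and its spatial component traces $\vec\gamma$. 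If one insists on keeping the original $\mcU$, an alternative is to perform the shear in $N$ small increments of size $\Delta/N$ and concatenate, the uniformity of the timelike margin ensuring that a single smooth $\chg \prec g$ witnesses the l.u.t.\ property of the full concatenation.
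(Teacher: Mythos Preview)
Your argument has a genuine gap: the expansion
\[
g(\dot{\tilde\gamma},\dot{\tilde\gamma})
= g(\dot\gamma,\dot\gamma) + 2\tfrac{\Delta}{T}\,g(\dot\gamma,\partial_0) + \bigl(\tfrac{\Delta}{T}\bigr)^{2} g(\partial_0,\partial_0)
\]
silently conflates the basepoints at which the metric is evaluated. The left-hand side is $g_{\tilde\gamma(t)}(\dot{\tilde\gamma},\dot{\tilde\gamma})$, whereas the inequalities you invoke for the first two terms on the right (``$\le 0$ by causality of $\gamma$'', ``two future-directed causal vectors have non-positive inner product'') are facts about $g_{\gamma(t)}$: causality of $\gamma$ tells you only that $g_{\gamma(t)}(\dot\gamma(t),\dot\gamma(t))\le 0$, not that $g_{\tilde\gamma(t)}(\dot\gamma(t),\dot\gamma(t))\le 0$. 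Nowhere in your proof does the Lipschitz hypothesis on $g$ appear, yet this hypothesis cannot be dropped: Example~\ref{Exa19V11.1} exhibits a merely H\"older-continuous metric for which the conclusion of the lemma fails (points in the bubble lie in $J^+(p)$ but not in $\cI^+(p)$). Any argument that never invokes the Lipschitz bound is therefore necessarily wrong.

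The paper's proof confronts exactly this issue. It writes
\[
g_{\Gamma(s)}(\dot\Gamma,\dot\Gamma)
= \bigl(g_{\alpha\beta}(\Gamma(s))-g_{\alpha\beta}(\gamma(s))\bigr)\dot\Gamma^\alpha\dot\Gamma^\beta
+ g_{\gamma(s)}(\dot\Gamma,\dot\Gamma),
\]
and bounds the first (basepoint-correction) term by $\Lambda\,\eps f(s)\,|\dot\Gamma|^2$ using the Lipschitz constant $\Lambda$ of the metric. Crucially, this error term is of the same order in $\eps$ as the favourable cross term $2\eps\dot f\,g_\gamma(\dot\gamma,\partial_t)$, so a linear shear $f(s)=s$ does \emph{not} close the estimate once the curve is long enough; the paper instead chooses $f$ to solve the ODE $A\dot f - Bf = 1$, which produces a uniform margin $-\eps/2$ for all sufficiently small $\eps$, independently of the length of $\gamma$. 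The endpoint $\Gamma(1)$ then lies strictly between $\gamma(1)$ and $q$ on the $\partial_0$-ray, and one concatenates with the vertical timelike segment to reach $q$. Your linear shear with slope $\Delta/T$ would, even after inserting the missing Lipschitz correction, fail for small $\Delta$: the error term is of order $\Lambda\Delta|\dot\gamma|^2$ at $t=T$, which swamps your claimed margin $(\Delta/T)^2\varepsilon$.
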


\proof
Near $\gamma([0,1])$ we have coordinates $\{ x^{\alpha} \}_{\alpha = 0}^n$ in which the components of the metric,
$g_{\alpha\beta}(x)$, are Lipschitz functions of $x=(x^{\mu})$. Let $T = \partial_0\equiv \partial_t$, thus $T$ is timelike by hypothesis.

In this coordinate system, the curve $\gamma$ takes the form $s \mapsto \gamma^{\alpha}(s)$. Given a function $f \colon [0, 1] \to \R$, we define the new curve $\Gamma \colon [0, 1] \to M$ by $\Gamma^{\alpha}(s) = \gamma^{\alpha}(s) + \eps f(s) T^{\alpha}$. Our aim is to find an appropriate function $f$ such that, for sufficiently small $\eps > 0$, the curve $\Gamma$ has the required properties.
As such, we impose that $f \ge 0$ and, since we wish to construct a curve emanating from the point $p$ we require
\[
f(0) = 0
\;.
\]
Now,
\bean
g_{\Gamma(s)}(\dot{\Gamma}(s), \dot{\Gamma}(s)) &=&
\left( g_{\alpha\beta}(\Gamma(s)) - g_{\alpha\beta}(\gamma(s)) \right)
\frac{d\Gamma^{\alpha}}{ds} \frac{d\Gamma^{\beta}}{ds}
\nonumber
\\
& &+ g_{\gamma(s)} \left( \dot{\gamma}(s) + \eps \dot{f}(s) \partial_t, \dot{\gamma}(s) + \eps \dot{f}(s)
\partial_t \right)
 \;.
\nonumber
\eea
We first note that
\bean
g_{\gamma}(\dot{\gamma} + \eps \dot{f} \partial_t, \dot{\gamma} + \eps \dot{f} \partial_t )
&=
g_{\gamma}(\dot{\gamma}, \dot{\gamma})
+ 2 \eps \dot{f}(s) g_{\gamma} ( \dot{\gamma}, \partial_t)
+ \eps^2 \dot{f}^2 g_{\gamma}(\partial_t, \partial_t)
\\
&\le
g_{\gamma}(\dot{\gamma}, \dot{\gamma})
+ 2 \eps \dot{f}(s) g_{\gamma} ( \dot{\gamma}, \partial_t)
 \;,
\nonumber
\eea
since $g_{\gamma}(\partial_t, \partial_t) < 0$. Moreover, letting $\Lambda$ denote the Lipschitz constant of the metric and $\left| \cdot \right|$ the Euclidean norm, we have
\bean
\left( g_{\alpha\beta}(\Gamma(s)) - g_{\alpha\beta}(\gamma(s)) \right) \dot{\Gamma^{\alpha}} \dot{\Gamma^{\beta}}
&\le&
\left| g(\Gamma(s)) - g(\gamma(s)) \right| \left| \dot{\Gamma} \right|^2
\nonumber
\\
&\le&
\Lambda \left| \Gamma(s) - \gamma(s) \right| \left| \dot{\Gamma} \right|^2
\nonumber
\\
&\le&
\eps \Lambda f \left| \partial_t \right| \left| \dot{\gamma} + \eps \dot{f} \partial_t \right|^2
\nonumber
\\
&\le&
2 \eps \Lambda f \left| \partial_t \right| \left[ \left| \dot{\gamma} \right|^2 + \eps^2 \dot{f}^2 \left| \partial_t \right|^2 \right] \;,
\nonumber
\eea
where we have used the fact that $f \ge 0$ in the penultimate inequality.
We therefore have, for $0\le \epsilon f\le 1$, $0\le \epsilon \le 1$,
with $\epsilon f$ small enough in any case so that the image of the curve lies within the domain of definition of the coordinates,
\bean
g_{\Gamma}(\dot{\Gamma}, \dot{\Gamma})
&\le&
g_{\gamma} \left( \dot{\gamma}, \dot{\gamma} \right) - \eps A \dot{f} + \eps B f + C \eps^3 f \dot{f}^2,
\eea
where we define the functions
\bean
A(s) &:=& 2 \left| g_{\gamma(s)} ( \dot{\gamma}(s), \partial_t) \right|,
\\
B(s) &:=& 2 \Lambda \left| \partial_t \right| \left| \dot{\gamma}(s) \right|^2,
\nonumber
\\
C(s) &:=& 2 \Lambda \left| \partial_t \right|^3.
\nonumber
\eea
Note that $A, B, C$ are positive functions and are uniformly bounded along the (compact) curve $\gamma$.
Assuming that the curve $\gamma$ is causal we have
\[
g_{\Gamma}(\dot{\Gamma}, \dot{\Gamma}) \le - \eps A \dot{f} + \eps B f + C \eps^3 f \dot{f}^2,
\]
and we wish to show that we can find a function $f$ and $\eps > 0$ such that the right-hand-side is strictly negative. Let
\[
f(s) := \int_0^s \frac{1}{A(t)} \exp \left( \int_t^s \frac{B(r)}{A(r)} \, dr \right) \, dt
\]
be the solution of $A(s) \dot{f}(s) - B(s) f(s) = 1$ with $f(0) = 0$. We then deduce that
\[
g_{\Gamma}(\dot{\Gamma}, \dot{\Gamma}) \le - \eps + C \eps^3 f \dot{f}^2.
\]
Since $f$ and $\dot{f}$ are bounded, we may choose $\eps_0 > 0$ such that $\eps_0^2 C(s) f(s) \dot{f}(s)^2 \le \frac{1}{2}$ for all $s \in [0, 1]$. Fixing this value of $\eps$, the corresponding curve $\Gamma$ is timelike wherever differentiable:
\[
g_{\Gamma}(\dot{\Gamma}, \dot{\Gamma}) \le - \frac{1}{2} \eps_0 < 0.
\]

Note that $\dot{\Gamma}$ is uniformly bounded away from the light-cones
of the metric $g$. Given $\alpha > 0$, we define the metric
\[
g_{\alpha} := g + \alpha g(T, \cdot) \otimes g(T, \cdot).
\]
We then have
\[
g_{\alpha}(\dot{\Gamma}, \dot{\Gamma}) = g(\dot{\Gamma}, \dot{\Gamma}) + \alpha \left( g(T, \dot{\gamma}) + \eps_0 f g_{tt} \right)^2
\le - \frac{1}{2} \eps_0 + \alpha \left( g(T, \dot{\gamma}) + \eps_0 f g_{tt} \right)^2 \;.
\]
Since $g(T, \dot{\gamma}) + \eps_0 f g_{tt}$ is bounded along $\gamma$, taking $\alpha > 0$ sufficiently small, we may ensure that, say, $g_{\alpha}(\dot{\Gamma}, \dot{\Gamma}) \le - \frac{1}{4} \eps_0$. Smoothing the metric $g_\alpha$ then gives a smooth metric $\chg \prec g$ such that $\Gamma$ is time-like with respect to $\chg$. Therefore $\Gamma$ is locally uniformly timelike.

Concatenating $\Gamma$ with the timelike curve
$$
 \lambda\mapsto (\lambda,x^i(\gamma(1)))\;, \ \mbox{ where } \  \lambda \in [x^0(\gamma(1)) + \eps f(1), x^0(q)]
 \;,
$$
provides the desired l.u.t.\ curve.
\qed

\medskip

It is convenient to introduce:

\begin{Definition}
\label{D18X11.1}
 A space-time $(\mcM,g)$ will be called \emph{causally plain\/} if for every $p\in\mcM$ there exists a cylindrical neighbourhood $\mcU$ thereof such that
$$
 \dcI^\pm(p,\mcU)= \dotJ^\pm(p,\mcU)
 \;.
$$
$(\mcM,g)$ is said to be \emph{causally bubbling\/} otherwise.
\end{Definition}

As a corollary of Lemma~\ref{LT10XI11.1} we obtain:

\begin{Corollary}
 \label{TC10XI11.1}
 Space-times $(\mcM,g)$ with Lipschitz-continuous metrics are causally plain.
\end{Corollary}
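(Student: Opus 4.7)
My plan is to combine the Lipschitz graph description of $\dotJ^+(p,\mcU)$ and $\dcI^+(p,\mcU)$ supplied by Proposition~\ref{P10IV11.1} with the push-up statement of Lemma~\ref{LT10XI11.1}. Fix $p\in\mcM$ and a cylindrical neighbourhood $\mcU=I\times\mcV$ of $p$; by shrinking $\mcU$ if necessary I may assume it sits inside a single Euclidean-convex coordinate chart on which $g$ is Lipschitz and $g(\partial_0,\partial_0)\le-\varepsilon<0$ uniformly. Proposition~\ref{P10IV11.1} gives Lipschitz graphing functions $f_-\le f_+$ for $\dotJ^+(p,\mcU)$ and $\dcI^+(p,\mcU)$ respectively, so it is enough to establish $f_+\le f_-$; the analogous argument applied to the time-reversed metric then handles $\dcI^-$ vs.\ $\dotJ^-$ and yields causal plainness.

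To prove $f_+(\vec x)\le f_-(\vec x)$ at an arbitrary $\vec x\in\mcV$, I would fix $t>f_-(\vec x)$ and put $q':=(f_-(\vec x),\vec x)\in\dotJ^+(p,\mcU)$ and $q:=(t,\vec x)$. Proposition~\ref{P10IV11.1} guarantees a future-directed $g$-causal curve $\gamma$ from $p$ to $q'$, and the uniform slope bounds inside $\mcU$ let me parametrise $\gamma$ by its $t$-coordinate to make it Lipschitz. Next I would check that $q\in\cI^+(q')\cap\mcU$: the vertical segment $\lambda\mapsto(\lambda,\vec x)$, $\lambda\in[f_-(\vec x),t]$, is $g$-timelike because $\partial_0$ is $g$-timelike throughout the cylindrical neighbourhood, and it remains $\chg$-timelike for any sufficiently close smooth $\chg\prec g$ obtained via Proposition~\ref{P13III11.1}, hence is l.u.t. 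Lemma~\ref{LT10XI11.1}, applied to this $\gamma$ and $q$, then delivers an l.u.t.\ curve from $p$ to $q$, so $q\in\cI^+(p,\mcU)$ and consequently $t>f_+(\vec x)$. Letting $t\downarrow f_-(\vec x)$ gives $f_+(\vec x)\le f_-(\vec x)$, closing the loop.

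The bulk of the analytic work has already been absorbed into Lemma~\ref{LT10XI11.1}; this corollary merely needs glue that matches the lemma's hypotheses to the geometry of a cylindrical neighbourhood. The two places where a careful reader should pause are the Lipschitz regularity of $\gamma$, which is automatic once one parametrises by $t$ inside $\mcU$, and the upgrade of the vertical ``push'' from $q'$ to $q$ from $g$-timelike to l.u.t.; the latter is the main (mild) obstacle, and is handled by the strict timelike character of $\partial_0$ together with stability under the smooth approximations provided by Proposition~\ref{P13III11.1}.
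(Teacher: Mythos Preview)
Your argument is correct and follows the same route as the paper: use the graphing functions $f_\pm$ of Proposition~\ref{P10IV11.1} and apply Lemma~\ref{LT10XI11.1} to a causal curve from $p$ to a point on the graph of $f_-$, concluding that every point strictly above that graph lies in $\cI^+(p,\mcU)$ and hence that $f_+\le f_-$. The paper's own proof is a one-sentence compression of exactly this idea (phrased as ``curves on the graph of $f_-$ can be perturbed to the future to become l.u.t., so $\mcB_p^+=\emptyset$''); your version simply unpacks the steps---the parametrisation of $\gamma$ by $t$, the verification that the vertical segment is l.u.t.\ via Proposition~\ref{P13III11.1}, and the passage to the limit $t\downarrow f_-(\vec x)$---more carefully than the paper does.
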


\proof
Lemma~\ref{LT10XI11.1} shows that any curve, from   any $p \in \mcM$, with image in a cylindrical neighbourhood of $p$, and on the graph $t = f_-(x)$, may be perturbed by an amount as small as desired to the future to give a locally uniformly time-like curve. As such, the bubble set of $p$, $\mcB_p^+$, is empty.
\qed

\bigskip

\begin{cocoExa}
 \label{Exa15V11.1}
 {\rm
Let us show that null geodesics  for Lipschitz-continuous metrics are unique in dimension two; this has some interest in its own, but also provides an alternative proof of Corollary~\ref{TC10XI11.1} in that dimension.
Let $g$ be a locally Lipschitzian Lorentzian metric on a two-dimensional manifold $\mcM$,
and let $\mcU$ be a cylindrical neighborhood of $p\in \mcM$.
For any sequence of metrics $g_n$ converging uniformly to $g$ on $\mcU$
let $\{\theta_n^0, \theta_n^1\}$ be an ON-coframe constructed by a
Gram--Schmidt procedure starting from the coframe $\{dt,dx\}$,
where the coordinates $(t, x)$ are chosen such that $g = -dt^2 + dx^2$ at $p$.
Then the sequence $\{ \theta^\mu_n \}_{n\in\N}$ converges uniformly to the
corresponding coframe $\{\theta_\mu\}$ for the metric $g$. We
have $\theta^\mu(\partial_\nu)=\delta^\mu_\nu$ at $p$ hence,
for any $\epsilon>0$ we can find a cylindrical neighborhood
$\mcU_\epsilon\subset \mcE$ such that
\bel{15V11.1}
 |\theta^\mu_n(\partial_\nu)-\delta^\mu_\nu|\le \epsilon
 \;,
\ee
throughout $\mcU_\epsilon$, for all $n$ large enough.

Parameterising the null $g_n$--geodesics $\gamma_n$ through $p$ by $x$, we have $\dot \gamma_n = f_n'\partial_t +
 \partial_x$. The null character of $\dot \gamma_n$ gives
\bel{15V11.2}
 \theta^0_n(\dot\gamma_n) = \pm \theta^1_n(\dot \gamma_n)
 \;.
\ee
Choosing the plus sign (the calculations for the minus sign are similar), one finds
$$
 f'_n = \frac{\theta^1_n(\partial_1) - \theta^0_n(\partial_1)}
{\theta^0_n(\partial_0) - \theta^1_n(\partial_0)}
\;,
$$
and note that the denominator is bounded away from zero for $\epsilon < 1/2$.
It follows that $f'_n$ converges uniformly to $f'$ on, say, $\mcU_{1/8}$, where $f$ is a solution of
$$
 f'  = \frac{\theta^1(\partial_1) - \theta^0(\partial_1)}
{\theta^0(\partial_0) - \theta^1(\partial_0)}
\;.
$$
In view of our assumption that $g$ is Lipschitz-continuous, the solutions of this equation are unique, which proves that $f_+=f_-$, as desired.
}
\end{cocoExa}

\begin{Remark}
{\rm
In Riemannian geometry, Hartman~\cite{HartmanGeodesics} studied $C^{1, \alpha}$ metrics,
with $\alpha < 1$, for which the solutions of the geodesic equations are non-unique.
A straightforward calculation shows that the curvature of Hartman's metrics, as for our metric above,
is unbounded below on the set where the geodesics branch. On the other hand, it is known~\cite{GrovePetersen} that for Riemannian metrics that arise as (Gromov--Hausdorff) limits of metrics with sectional curvature uniformly bounded below, geodesics do not branch. Similarly, it is known that geodesics do not branch for
Alexandrov spaces with curvature bounded below~\cite{BuragoGP}.
}
\end{Remark}

Summarising, we have shown that:

\begin{Theorem}
 \label{TP23V11.1}
\begin{enumerate}
\item For any $\alpha \in (0,1)$ there exist $C^{0,\alpha}$ metrics which are causally bubbling.
\item $C^{0,1}$ metrics are causally plain.
\end{enumerate}
\end{Theorem}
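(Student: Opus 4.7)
\proof
The plan is to simply assemble the two pieces already in place, since both statements are immediate corollaries of constructions and results established earlier in this section.

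For point (1), I would take the explicit family of metrics from Example~\ref{Exa19V11.1}, namely
$$
 g_\lambda = -(du + (1-|u|^\lambda)dx)^2 + dx^2
 \;,
$$
and set $\lambda = \alpha \in (0,1)$. The regularity statement recorded in Example~\ref{Exa19V11.1} gives $g_\lambda \in C^{0,\lambda}(\R^2) \cap C^\infty(\R^2\setminus\{u=0\})$, so $g_\alpha$ is a $C^{0,\alpha}$ Lorentzian metric on $\R^2$. The computation of $f_\pm$ carried out there, specifically \eq{21V11.2}, shows that for $p_0 = (0,0)$ we have $f_-(x) \equiv 0$ while $f_+(x) = u_0(x) = ((1-\lambda)x)^{1/(1-\lambda)}$ on $x\ge 0$. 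Hence $f_-\ne f_+$ on any cylindrical neighborhood of $p_0$, and so by point~(\ref{P10IV11.1p5}) of Proposition~\ref{P10IV11.1} the bubble set $\mcB_{p_0}^+$ is open and non-empty. Thus $(\R^2, g_\alpha)$ is causally bubbling in the sense of Definition~\ref{D18X11.1}.

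For point (2), the assertion is literally the content of Corollary~\ref{TC10XI11.1}, which was established as a consequence of Lemma~\ref{LT10XI11.1}. Concretely, given any $p\in \mcM$ and any cylindrical neighborhood $\mcU$ of $p$ provided by point~(\ref{P10IV11.1p1}) of Proposition~\ref{P10IV11.1}, Lemma~\ref{LT10XI11.1} shows that every point on the graph of $f_-$ (i.e.\ every point of $\dotJ^+(p,\mcU)$) has arbitrarily small timelike perturbations that are locally uniformly timelike from $p$. This forces the graph of $f_+$ to coincide with the graph of $f_-$, i.e.\ $\dcI^+(p,\mcU) = \dotJ^+(p,\mcU)$, and similarly for past sets by time-reversal. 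Therefore $(\mcM,g)$ is causally plain whenever $g$ is Lipschitz.

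Since the only genuinely nontrivial inputs are Lemma~\ref{LT10XI11.1} (whose difficulty lies in the explicit ODE construction of the perturbing function $f$ used to tilt a causal curve into the l.u.t.\ class) and the bifurcation analysis of \eq{21V11.1} (which produces the non-uniqueness of null geodesics at low regularity), no additional work is required at this stage, and the theorem follows by combining these with Definition~\ref{D18X11.1}. \qed
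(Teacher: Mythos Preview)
Your proposal is correct and matches the paper's approach exactly: the paper presents Theorem~\ref{TP23V11.1} with the preamble ``Summarising, we have shown that:'' and gives no further proof, since part~(1) is the content of Example~\ref{Exa19V11.1} and part~(2) is Corollary~\ref{TC10XI11.1}. Your write-up simply makes explicit the assembly of these two ingredients, which is all that is required.
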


\subsection{Push-up lemmas and their consequences}
 \label{S28IX11.1}

By inspection of \cite{ChCausalityv1}, one finds that the results
there which do
not \emph{explicitly\/} involve geodesics or normal coordinates can be carried
over word-for-word, or with elementary modifications, for \emph{causally plain\/}
space-times using Theorem~\ref{T1IV11.1} and the following four results:

\begin{Proposition}
\label{P20XI11.1}
Let $(\mcM,g)$ be a $C^1$ space-time with a continuous causally plain metric $g$.
Then for all $\Omega\subset \mcM$ we have
  \bel{20XI11.1}
   I^\pm(\Omega)=\cI^\pm(\Omega)
   \;.
   \ee
 \end{Proposition}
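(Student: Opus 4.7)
The inclusion $\cI^\pm(\Omega)\subset I^\pm(\Omega)$ is \eqref{28IX11.1}, so the task is $I^+(\Omega)\subset\cI^+(\Omega)$ (the past case being analogous). Given $q\in I^+(\Omega)$, fix a future-directed $g$-timelike curve $\gamma\colon[0,1]\to\mcM$ from some $p\in\Omega$ to $q$; the goal is to show $q\in\cI^+(p)$.

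The plan is local-to-global. By causal plainness and compactness of $\gamma([0,1])$, together with a Lebesgue-number argument, pick a partition $0=t_0<t_1<\cdots<t_K=1$, set $p_i:=\gamma(t_i)$, and choose cylindrical neighborhoods $\mcU_i$ of $p_{i-1}$ satisfying $\dcI^\pm(p_{i-1},\mcU_i)=\dotJ^\pm(p_{i-1},\mcU_i)$ and $\gamma([t_{i-1},t_i])\subset\mcU_i$. Proposition~\ref{P10IV11.1} then identifies $\cI^+(p_{i-1},\mcU_i)$ and $J^+(p_{i-1},\mcU_i)$ as the regions strictly above, respectively on or above, the common Lipschitz graph $\dcI^+(p_{i-1},\mcU_i)=\dotJ^+(p_{i-1},\mcU_i)$. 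Since $\gamma|_{[t_{i-1},t_i]}$ is $g$-causal one already has $p_i\in J^+(p_{i-1},\mcU_i)$, so the key local claim reduces to showing that $p_i$ lies \emph{strictly} above the common graph, i.e.\ $p_i\in\cI^+(p_{i-1},\mcU_i)$.

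To prove the local claim I would adapt the push-up technique of Lemma~\ref{LT10XI11.1} to causally plain continuous metrics: at any differentiable point $s_*\in(t_{i-1},t_i)$ of $\gamma$ at which $g(\dot\gamma(s_*),\dot\gamma(s_*))<0$, the straight line $s\mapsto\gamma(s_*)+(s-s_*)v_\delta$ in Minkowski-normal coordinates at $\gamma(s_*)$, with $v_\delta:=(\gamma(s_*+\delta)-\gamma(s_*))/\delta$, has tangent $v_\delta\to\dot\gamma(s_*)$ as $\delta\to 0$. Since $\dot\gamma(s_*)$ is strictly inside the $g$-cone at $\gamma(s_*)$, Proposition~\ref{P13III11.1} produces a smooth $\chg\prec g$ close enough to $g$ that $v_\delta$ is $\chg$-timelike at $\gamma(s_*)$ for small $\delta$; continuity of $\chg$ then extends this to $\chg$-timelikeness of the whole short segment, yielding an l.u.t.\ curve from $\gamma(s_*)$ to $\gamma(s_*+\delta)$, so $\gamma(s_*+\delta)\in\cI^+(\gamma(s_*))$. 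Chaining this push-up via an open--closed argument on $E:=\{s\in[0,1]:\gamma(s)\in\{p\}\cup\cI^+(p)\}$, using Rademacher to supply a full-measure set of differentiable timelike parameter values and concatenating local l.u.t.\ segments into a single global one (via a Proposition~\ref{P13III11.1}--type construction of one $\chg\prec g$ whose cones are uniformly smaller than all the local witnesses), forces $E=[0,1]$ and hence $q\in\cI^+(p)$.

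The main obstacle will be the very start of the bootstrap, namely the push-off from $p=\gamma(0)$, where $\gamma$ may fail to be differentiable and the ``averaged tangent'' $\gamma(\delta)/\delta$ need not obviously lie in the open Minkowski cone at $p$ for merely continuous $g$. The causally plain hypothesis is precisely what makes this tractable: in a causally plain cylindrical neighborhood $\dcI^+=\dotJ^+$, so the boundary of $J^+$ leaves no bubble in which $\gamma$ could stagnate, and Proposition~\ref{P10IV11.1}\eqref{P10IV11.1p5d} together with the Lipschitz character of this single common graph force $p_i$ off the boundary once $\gamma$ genuinely enters the cone.
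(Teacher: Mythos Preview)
Your local-to-global scaffolding (cover $\gamma$ by cylindrical neighbourhoods, reduce to the local claim $I^+(p,\mcU)=\cI^+(p,\mcU)$, concatenate) matches the paper's implicit reduction, and your push-up at a differentiable timelike point $s_*$ is essentially the paper's first ingredient. The gap is in the open--closed argument on $E$. You correctly identify that openness at $s=0$ is the crux, but the appeal to Proposition~\ref{P10IV11.1}\eqref{P10IV11.1p5d} and ``the Lipschitz character of the graph'' is not an argument --- limit-geodesics through $p$ say nothing about whether a given timelike curve leaves the boundary, and a Lipschitz graph does not by itself exclude a timelike curve from sitting on it. More seriously, even granting the initial push-off, you never address \emph{closedness} of $E$: nothing you have written prevents $\gamma$ from entering $\cI^+(p,\mcU)$ and then \emph{returning} to the graph $\dcI^+(p,\mcU)=\dotJ^+(p,\mcU)$ at a later parameter. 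Your chain of l.u.t.\ segments only pushes forward off $\gamma(s_*)$; it does not control what $\gamma$ itself does afterwards.

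The paper's proof is organised differently and supplies exactly the idea you are missing. Working entirely in one cylindrical $\mcU$ with $f_+=f_-=:f$, it argues by contradiction: if $q\in I^+(p,\mcU)\setminus\cI^+(p,\mcU)$ then $q$ lies on the graph of $f$. First, a Taylor-expansion argument at a differentiable timelike point --- essentially your push-up, combined with Proposition~\ref{P10IV11.1}\eqref{P10IV11.1p2b} to land strictly above the graph --- shows that no timelike curve through $p$ can have image entirely on the graph. Second, to rule out the ``leave and return'' scenario, the paper switches viewpoint to the \emph{past} cone of the return point $q$: causal plainness \emph{at $q$} (not merely at $p$) gives a single past graph $f^-_q$; since $p\in J^-(q)$ but $p\notin\cI^-(q)$ (else $q\in\cI^+(p)$), the entire curve $\gamma$ is forced onto the graph of $f^-_q$, and the first step applied to $f^-_q$ yields the contradiction. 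This use of causal plainness at the \emph{endpoint}, via the time-reversed graph, is the device absent from your proposal.
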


\begin{Lemma}[``Push-up Lemma I" (compare \protect{\cite[Lemma~2.4.14]{ChCausalityv1}})]
 \label{CLpushup0}
Let $(\mcM,g)$ be a $C^1$ space-time with a continuous causally plain metric $g$.
For any $\Omega\subset\mcM$ we have
\bel{Ctriinc}I^+(J^+(\Omega))= I^+(\Omega)\;.\ee
\end{Lemma}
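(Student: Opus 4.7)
The inclusion $I^+(\Omega)\subset I^+(J^+(\Omega))$ is trivial since $\Omega\subset J^+(\Omega)$. For the reverse inclusion, fix $q\in I^+(J^+(\Omega))$. By Proposition~\ref{P20XI11.1} we have $I^+=\cI^+$, so there exist $r\in J^+(\Omega)$ and a future directed l.u.t.\ curve from $r$ to $q$. The only non-trivial case is that in which $r\not\in\Omega$, so there are $p\in\Omega$ and a future directed $g$-causal curve $\gamma\colon[0,1]\to\mcM$ with $\gamma(0)=p$ and $\gamma(1)=r$.

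The plan is to push up locally at every stage of $\gamma$ and propagate the conclusion backwards. By compactness of $\gamma([0,1])$, Definition~\ref{D18X11.1}, and the fact that causal plainness at a point persists under shrinking of the cylindrical neighbourhood, one may choose a partition $0=s_0<s_1<\cdots<s_N=1$ and cylindrical neighbourhoods $\mcU_i$ of $p_i:=\gamma(s_i)$ on which $\dcI^\pm(p_i,\mcU_i)=\dotJ^\pm(p_i,\mcU_i)$, such that $\gamma([s_i,s_{i+1}])\subset\mcU_i$ for each $i=0,\ldots,N-1$; in particular $p_{i+1}\in J^+_g(p_i,\mcU_i)$. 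Combining Proposition~\ref{P10IV11.1}, parts~(\ref{P10IV11.1p2b}) and~(\ref{P10IV11.1p2d}), with the causal-plainness identity $\dcI^+(p_i,\mcU_i)=\dotJ^+(p_i,\mcU_i)$ yields the local push-up
\[
\cI^+_g(J^+_g(p_i,\mcU_i),\mcU_i)=\cI^+_g(p_i,\mcU_i).
\]

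Next, prove $q\in\cI^+(p_i)$ by descending induction on $i$, the base case $i=N$ being the hypothesis $q\in\cI^+(r)$. Assume $q\in\cI^+(p_{i+1})$ and pick a future directed l.u.t.\ curve $\tau$ from $p_{i+1}$ to $q$. Since $\mcU_i$ is open and contains $p_{i+1}$, a sufficiently short initial arc of $\tau$ remains in $\mcU_i$, giving a point $q_i$ on $\tau$ with $q_i\in\cI^+_g(p_{i+1},\mcU_i)$ and with $q$ in the l.u.t.\ future of $q_i$ via the remaining arc of $\tau$. Because $p_{i+1}\in J^+_g(p_i,\mcU_i)$, the displayed local identity gives $q_i\in\cI^+_g(p_i,\mcU_i)\subset\cI^+(p_i)$, so there is an l.u.t.\ curve from $p_i$ to $q_i$; concatenating it with the l.u.t.\ arc of $\tau$ from $q_i$ to $q$ produces a $g$-timelike curve from $p_i$ to $q$. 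Hence $q\in I^+(p_i)=\cI^+(p_i)$, once again by Proposition~\ref{P20XI11.1}. Iterating down to $i=0$ gives $q\in\cI^+(p)\subset I^+(\Omega)$, as required.

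The main technical point is the construction of the chain of cylindrical neighbourhoods: each $\mcU_i$ must be a cylindrical neighbourhood \emph{of the starting point $p_i$} of its segment, so that causal plainness at $p_i$ (and not merely at some other point on $\gamma$) can be invoked together with Proposition~\ref{P10IV11.1} to produce the local push-up used at that stage. Once this chain is in place, the remaining ingredients---the identity $I^\pm=\cI^\pm$ for causally plain metrics, and the elementary observation that the concatenation of two $g$-timelike curves is $g$-timelike---combine routinely to give the conclusion.
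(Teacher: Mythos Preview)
Your proof is correct and follows essentially the same approach as the paper: cover the causal segment by a finite chain of cylindrical neighbourhoods centred at points $p_i$ along $\gamma$, and use causal plainness together with Proposition~\ref{P10IV11.1} to push the l.u.t.\ future backwards step by step. The only cosmetic difference is that the paper packages this argument into Proposition~\ref{CCPushup0} (deforming a causal curve with a timelike segment into a timelike curve) and then reads off Lemma~\ref{CLpushup0} as an immediate corollary, whereas you run the iterative push-up directly and appeal to $I^+=\cI^+$ (Proposition~\ref{P20XI11.1}) at each stage rather than building a single l.u.t.\ curve.
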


\begin{Proposition}[compare \protect{\cite[Corollary~2.4.16]{ChCausalityv1}}]
 \label{CCPushup0}
Let $(\mcM,g)$ be a $C^1$ space-time with a continuous causally plain metric $g$.
Consider a \underline{causal} future directed curve
$\gamma \colon [0,1]\to\mcM$ from $p$ to $q$. If there exist
$s_1,s_2\in[0,1]$, $s_1<s_2$, such that $\gamma|_{[s_1,s_2]}$ is
\underline{timelike}, then for any neighborhood $\mcO$ of
the image of $\gamma$
there exists a \underline{timelike} future directed
curve $\hat\gamma$ from $p$ to $q$ with image contained in that neighborhood.
The curve $\hat \gamma$ can be chosen to be l.u.t. if $\gamma|_{[s_1,s_2]}$ is.
\end{Proposition}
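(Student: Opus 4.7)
The plan is to reduce the claim to two applications of Push-up~I (Lemma~\ref{CLpushup0}) carried out inside $\mcO$ rather than in all of $\mcM$: the timelike subsegment will supply the required ``push-up room'', after which the conclusion should be a formal consequence of the identity $I^+(J^+(\Omega)) = I^+(\Omega)$ and of $I^\pm = \cI^\pm$ for causally plain metrics (Proposition~\ref{P20XI11.1}). First I would verify that $(\mcO, g|_{\mcO})$ is itself a causally plain continuous space-time: the defining condition $\dcI^\pm(r,\mathcal{W}) = \dotJ^\pm(r,\mathcal{W})$ at $r \in \mcO$ involves only the behaviour of $g$ on an arbitrarily small cylindrical neighborhood $\mathcal{W}$ of $r$, and by the uniform slope bounds built into Definition~\ref{D9IV11.1} the graphing functions $f_\pm$ of Proposition~\ref{P10IV11.1} on a sufficiently small $\mathcal{W}\subset \mcO$ are insensitive to whether the ambient space-time is taken to be $\mcM$ or $\mcO$. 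Consequently Proposition~\ref{P20XI11.1} and Lemma~\ref{CLpushup0} apply with $\mcM$ replaced by $\mcO$, and it is enough to show $q \in I^+_{\mcO}(p)$.

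Since the subsegment $\gamma|_{[s_1,s_2]}$ lies in $\mcO$ and is $g$-timelike, we have $\gamma(s_2) \in I^+_{\mcO}(\gamma(s_1))$, equivalently $\gamma(s_1) \in I^-_{\mcO}(\gamma(s_2))$. I would next apply the past-directed form of Lemma~\ref{CLpushup0}, namely $I^-_{\mcO}(J^-_{\mcO}(\{q\})) = I^-_{\mcO}(\{q\})$, combined with $\gamma(s_2) \in J^-_{\mcO}(q)$ coming from the causal piece $\gamma|_{[s_2,1]}$, to deduce $\gamma(s_1) \in I^-_{\mcO}(q)$; that is, $q \in I^+_{\mcO}(\gamma(s_1))$. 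A second application of Lemma~\ref{CLpushup0} with $\Omega=\{p\}$, using $\gamma(s_1)\in J^+_{\mcO}(p)$ via $\gamma|_{[0,s_1]}$, then yields $q \in I^+_{\mcO}(J^+_{\mcO}(p)) = I^+_{\mcO}(p)$, producing a future directed $g$-timelike curve $\hat\gamma$ from $p$ to $q$ with image in $\mcO$, as required. The l.u.t.\ strengthening is automatic: by Proposition~\ref{P20XI11.1} one has $I^+_{\mcO}(p) = \cI^+_{\mcO}(p)$, so $\hat\gamma$ may in fact be chosen l.u.t.

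I expect the only genuinely non-formal step to be the transmission of causal plainness to the open sub-space-time $\mcO$: one must confirm that after shrinking a cylindrical $\mathcal{W}\subset \mcM$ so as to fit inside $\mcO$, the intrinsic $\dcI^\pm(r,\mathcal{W})$ and $\dotJ^\pm(r,\mathcal{W})$ still coincide. I would handle this by parametrising causal, respectively l.u.-timelike, curves in $\mathcal{W}$ by the time coordinate and invoking the uniform slope bounds of Definition~\ref{D9IV11.1} to prevent such curves from escaping $\mathcal{W}$ through the spatial sides, so that the graphing functions $f_\pm$ depend only on $\mathcal{W}$ itself.
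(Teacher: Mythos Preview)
Your reduction is mathematically sound, but it runs against the logical order of the paper: there, Lemma~\ref{CLpushup0} is not proved independently but is explicitly \emph{derived from} Proposition~\ref{CCPushup0} (``We continue by noting that Lemma~\ref{CLpushup0} easily follows from Proposition~\ref{CCPushup0}, which we prove now''). So invoking Lemma~\ref{CLpushup0} to establish Proposition~\ref{CCPushup0} is circular relative to the paper's development. If you want to keep your argument, you must first supply a direct proof of Lemma~\ref{CLpushup0} that does not appeal to Proposition~\ref{CCPushup0}; and any such direct proof will essentially reproduce the cylindrical-neighbourhood chaining construction that the paper carries out for Proposition~\ref{CCPushup0} itself. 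In other words, your approach does not avoid the work, it only relocates it.

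Setting that aside, the two arguments genuinely differ in style. The paper's proof is constructive: it covers the causal portion $\gamma([0,s_1])$ by a finite chain of cylindrical neighbourhoods $\mcU_i\subset\mcO$ and, using the causally plain hypothesis together with Proposition~\ref{P10IV11.1}\,(\ref{P10IV11.1p2d}), builds an l.u.t.\ replacement segment by segment. Your proof is a formal two-line reduction once $(\mcO,g|_{\mcO})$ is recognised as causally plain: apply the past version of Lemma~\ref{CLpushup0} to get $q\in I^+_{\mcO}(\gamma(s_1))$, then the future version to get $q\in I^+_{\mcO}(p)$, and upgrade to l.u.t.\ via Proposition~\ref{P20XI11.1}. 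Your observation that the l.u.t.\ conclusion is automatic (and hence the qualifier ``if $\gamma|_{[s_1,s_2]}$ is'' in the statement is not actually needed) is correct and is a nice bonus of your route. The locality argument for transferring causal plainness to $\mcO$ is fine; just remember to pass to the connected component of $\mcO$ containing $\gamma([0,1])$ so that $(\mcO,g|_{\mcO})$ is a space-time in the paper's sense.
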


\begin{Lemma}[``Push-up Lemma II" (compare \protect{\cite[Lemma~2.9.10]{ChCausalityv1}})]
 \label{CLpushup}
Let $(\mcM,g)$ be a $C^1$ space-time with a continuous causally plain metric $g$.
Let $\gamma \colon \R^+\to \mcM$ be a past-inextendible past-directed
 \underline{causal}
curve starting at $p$, and let $\mcO$ be a neighborhood of the
image $\gamma(\R^+)$ of $\gamma$. Then for every $r\in I^+({q})\cap\mcO$ there
exists a past-inextendible past-directed
 \underline{timelike}
curve $\hat\gamma \colon [0,\infty)\to \mcM$ starting at $r$ such that
\beal{C27IV11.1}
 &
 \hat \gamma([0,\infty)) \subset
 I^+(\gamma)\cap \mcO\;,
 &
\\
 &
 \forall\ s\in [0,\infty) \qquad I^+(\gamma (s))\cap \hat \gamma(\R^+) \ne
\emptyset\;.
 &
\eeal{C27IV11.2}
\end{Lemma}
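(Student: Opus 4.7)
The plan is to build $\hat\gamma$ by concatenating a countable sequence of past-directed timelike segments along $\gamma$, anchored at a chosen sequence $\gamma(s_n)$ with $s_n\uparrow\infty$. Before any construction, Push-up Lemma~I (Lemma~\ref{CLpushup0}) combined with monotonicity of $I^+$ gives
\[
r \in I^+(p) \subset I^+\!\bigl(J^+(\gamma(s))\bigr) = I^+(\gamma(s)) \qquad\text{for every }s\ge 0,
\]
since $\gamma$ past-directed causal from $p$ forces $p=\gamma(0) \in J^+(\gamma(s))$. Taking $r=\hat\gamma(0)$ therefore makes property~\eq{C27IV11.2} automatic, and the substantive task is property~\eq{C27IV11.1}: producing a past-inextendible past-directed \emph{timelike} curve from $r$ with image in $\mcO \cap I^+(\gamma)$.

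For the setup, parameterise $\gamma$ by $h$-arclength, choose $s_n\uparrow\infty$, and pick cylindrical neighbourhoods $\mcU_n$ of $\gamma(s_n)$ with $\overline{\mcU_n}\subset\mcO$, arranged so that they escape every compact set in $(\mcM,h)$; this is possible because $\gamma$ is past-inextendible. The inductive invariant to be maintained is: $\hat\gamma_n$ is a past-directed timelike curve from $r$ to $r_n \in \mcU_n \cap I^+(\gamma(s_n))$, with image in $\mcO\cap I^+(\gamma)$. The base case is $r_0:=r$, after possibly adjusting $\mcU_0$ so that it contains $r$ (and prepending an initial short timelike segment contained in $\mcO$ if $r\notin\mcU_0$).

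For the inductive step, since $(\mcM,g)$ is causally plain, Proposition~\ref{P10IV11.1} gives $\dcI^+_g(\gamma(s_n),\mcU_n)=\dotJ^+_g(\gamma(s_n),\mcU_n)$ inside $\mcU_n$, so $r_n$ lies strictly above this common Lipschitz graph and can be connected to $\gamma(s_n)$ by a past-directed timelike curve $\delta_n$ contained in $\mcU_n$. The concatenation $\beta_n := \delta_n\cdot\gamma|_{[s_n,s_{n+1}]}$ is then a past-directed causal curve from $r_n$ to $\gamma(s_{n+1})$ whose image lies in $\overline{\mcU_n}\cup\gamma([s_n,s_{n+1}])\subset\mcO$, and whose initial segment $\delta_n$ is timelike. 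Proposition~\ref{CCPushup0} applied to $\beta_n$, with ambient neighbourhood $\mcO$ itself, yields a past-directed timelike curve $\tilde\beta_n$ from $r_n$ to $\gamma(s_{n+1})$ with image in $\mcO$. By continuity of $\tilde\beta_n$ at its endpoint $\gamma(s_{n+1})\in\mcU_{n+1}$, a terminal portion of $\tilde\beta_n$ lies inside $\mcU_{n+1}$, and any such interior point of the timelike curve $\tilde\beta_n$ automatically lies in $I^+(\gamma(s_{n+1}))$; pick one such point as $r_{n+1}$, and define $\hat\gamma_{n+1}$ as $\hat\gamma_n$ followed by the portion of $\tilde\beta_n$ from $r_n$ to $r_{n+1}$.

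The limit $\hat\gamma := \bigcup_n\hat\gamma_n$, reparameterised by $h$-arclength, is a locally Lipschitz past-directed timelike curve $\hat\gamma\colon[0,\infty)\to\mcM$ from $r$, whose image lies in $\mcO\cap I^+(\gamma)$ by construction (each segment between $r_n$ and $r_{n+1}$ lies on $\tilde\beta_n\subset\mcO$ and in $I^+(\gamma(s_{n+1}))\subset I^+(\gamma)$), and which is past-inextendible because $r_n\in\mcU_n$ leaves every compact set and $h$ is complete, so $d_h(r_n,r_0)\to\infty$ forces the total $h$-arclength to be infinite. The principal technical obstacle in the argument is the requirement that the final timelike curve have image in $\mcO$ rather than merely in some unspecified neighbourhood of $\gamma$; this is overcome by using causal plainness to confine the local timelike shortcut $\delta_n$ to $\mcU_n\subset\mcO$, so that the causal curve $\beta_n$ to which Proposition~\ref{CCPushup0} is applied already lies in $\mcO$, allowing $\mcO$ itself to serve as the ambient neighbourhood of the push-up.
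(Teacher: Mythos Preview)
Your approach is essentially the same as the paper's: a countable iteration of push-ups along $\gamma$, concatenated into $\hat\gamma$. The paper simply says to replace the finite cylindrical cover in the proof of Proposition~\ref{CCPushup0} by a countable one and to choose the hand-off points carefully; you instead invoke Proposition~\ref{CCPushup0} itself as a black box on each segment $[s_n,s_{n+1}]$, which is a perfectly reasonable modular variant. Your observation that \eq{C27IV11.2} is automatic via $r=\hat\gamma(0)\in I^+(J^+(\gamma(s)))=I^+(\gamma(s))$ for every $s$ (from Lemma~\ref{CLpushup0}) is a clean shortcut.

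There is, however, one genuine gap. Your inextendibility argument rests on the claim that the $\mcU_n$ can be ``arranged so that they escape every compact set \dots\ because $\gamma$ is past-inextendible.'' This is false without further causality hypotheses: a closed causal curve traversed periodically is past-inextendible yet remains in a compact set, so $\gamma(s_n)$ need not leave any compact set and $d_h(r_n,r_0)$ need not diverge. What past-inextendibility \emph{does} give (via completeness of $h$) is that $\gamma$ has infinite $h$-arclength, and it is \emph{this} that must be transferred to $\hat\gamma$. The paper's construction achieves the transfer directly: each segment of $\hat\gamma$ is an l.u.t.\ curve built inside a single cylindrical neighbourhood $\mcU_i$, where the $h$-length of any causal curve is comparable (with uniform constants coming from the slope bounds $1/2$ and $2$) to the change in the local time coordinate $x^0$; choosing the hand-off points $\hat s_i$ close enough to the centres $p_i$ then makes the $h$-length of each segment of $\hat\gamma$ comparable to that of the corresponding arc of $\gamma$, and these sum to infinity. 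In your version, applying Proposition~\ref{CCPushup0} with ambient neighbourhood all of $\mcO$ gives no control over where $\tilde\beta_n$ wanders inside $\mcO$, hence no lower bound on its $h$-length. The fix is either to apply Proposition~\ref{CCPushup0} with a thin neighbourhood of $\beta_n$ assembled from cylindrical charts and track the $x^0$-drop across each, or simply to abandon the black-box use of Proposition~\ref{CCPushup0} and build the segments directly in the cylindrical cover --- which is exactly the paper's route.
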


We emphasise that all three results are wrong for general continuous metrics in
view of Example~\ref{Exa19V11.1}.

\medskip

To prove the claims, we start with the:

\medskip

{\sc\noindent Proof of Proposition~\ref{P20XI11.1}:}
It suffices to prove the result for $\Omega=\{p\}$, with $\mcM$ replaced by a cylindrical neighborhood $\mcU$ of $p$. Let $f_-$, $f_+$, $\hg_n$, $\chg_n$, etc., be as in the proof of Proposition~\ref{P10IV11.1}. By that proposition, any point above the graph of $f_+$ lies in $\cI^+(p)$, and any point in $I^+(p)\subset J^+ (p)$ lies above or on the graph of $f_-$. By hypothesis we have $f_-=f_+=:f$, and we need to show that no point on the graph of $f$ lies on the image of a future directed timelike curve through $p$.

We start by noting that no curve passing through $p$  \emph{with its image on the graph} of $f$ can be future-directed timelike. Indeed, let $\gamma$ be such a curve, thus $\gamma(s)=(f(\vec x(s)), \vec x(s))$. Consider any point $s_0$ at which $\gamma$ is classically differentiable and at which $T:=\dot \gamma(s_0)$ is timelike. By a linear change of coordinates we can achieve that the metric at $\gamma(s_0)$ takes the standard Minkowskian form. A linear rescaling of the parameterization of $\gamma$ and a linear transformation of the coordinates lead to $T=(1,\vec 0)$. In those coordinates all vectors $\eta=(\eta^0=1,\vec \eta)\in T_{\gamma(s_0)}\mcM $ with $\vec \eta$ having Euclidean norm $|\vec \eta|$ smaller than $1$ are timelike.
Because the metric is continuous, there exists a neighborhood $\mcW$ of $\gamma(s_0)$ so that for all $q\in \mcW$ vectors $X=(X^0=1,\vec X)$ at $q$ with $\vec X$ having Euclidean norm $|\vec X|$ smaller than $1/2$ are timelike.
Hence all points $q\in\mcW$ with coordinates
\bel{21XI11.1}
 q^\mu = \gamma^\mu(s_0)+ \lambda X^\mu \;, \lambda \in (0,1]\;,\quad X^0=1\;,\ |\vec X|< 1/2
\ee
lie to the l.u.t. future of $\gamma(s_0)$. Let us denote by $\Omega_0\subset \mcM$ the set of points of the form \eq{21XI11.1}.

We claim that points in $\Omega_0$ lie strictly above the graph of $f$. Indeed, we can construct a future directed causal curve, say $\gamma_X$, from $p$ to points in $\overline{\Omega_0}$ by following $\gamma$ from $p$ to $\gamma(s_0)$, and then the curve obtained by varying $\lambda$ from $0$ to $1$ in \eq{21XI11.1}. Then $\gamma_X$ is $g$-causal, hence $\hg_n$ timelike for all $n$. It follows that the closure of $\Omega_0$ lies above the graphs of $f_{\hg_n}$ for all $n$, and hence above or on the graph of $f_-$. Thus $\Omega_0$ lies entirely above the graph of $f$.

Now, in the coordinates above we have the Taylor expansion
$$
 \gamma^\mu (s) = \gamma^\mu(s_0)+ T(s-s_0)+o(s-s_0)
 \;.
$$
One easily checks that for all $s-s_0>0$ small enough $\gamma(s)$ belongs to $\Omega_0$, hence lies above the graph of $f$, which is a contradiction.

It remains to exclude the possibility that $\gamma$ enters the strict epigraph of $f$ and returns to the graph.
(Note that $\dot \gamma$ must fail to be differentiable, or timelike if differentiable, at the return point, otherwise the previous argument would apply.)
Consider then a causal curve $\gamma$ from $p$ to some point $q\ne p$ on the graph of $f$. Let $f^-_{q}$ denote the graphing function of the past of $q$. Since $q$ has no causal bubble, any point lying under the graph of $f^-_q$ lies on a l.u.t. past directed curve starting at $q$. Hence $p$ cannot lie under that graph, otherwise $q$ would be in the l.u.t. future of $p$, and would then lie strictly above the graph of $f$. Using again the fact that $(\mcM,g)$ is causally plain, no points lying above the graph of $f^-_q$ are $g$-causally related to $q$. So $\gamma$ must lie on the graph of $f^-(q)$. But the previously given argument applies to $f^-_q$, and shows that no timelike curve can have image on that graph. We conclude that there are no timelike curve from $p$ to points lying on the graph of $f$, and hence $I=\cI$.
\qed

\medskip

We continue by noting that Lemma~\ref{CLpushup0} easily
follows from Proposition~\ref{CCPushup0}, which we prove now:

  \medskip

{\noindent \sc Proof of Proposition~\ref{CCPushup0}:}
Suppose, first, that $s_2=1$ and $s_1 > 0$.
Using compactness, we can cover
$\gamma([0,s_1])$ by a finite collection $\mcU_i$, $i=1,\cdots,N$, of
cylindrical regions $\mcU_i$, entirely contained in $\mcO$, centred at
$p_i\in\gamma([0,s_1])$, with $p_N=\gamma(0)= p$
and for $i=1,\ldots, N-1$,
$$
 p_{i}\in \mcU_{i}\cap\mcU_{i+1}\;, \quad
 p_{i+1} \in J^-{(p_i)}\;,
 \quad
  p_1=\gamma(s_1)
  \;.
$$
Let $\hat s_1$ satisfying $s_1<\hat s_1 $ be close enough to
$s_1$ so that $\hat p_1:=\gamma(\hat s_1)\in\mcU_1$. Since
$\dotcI ^+_g(p_1,\mcU_1)= \dotJ^+_g(p_1,\mcU_1)$ by hypothesis,
point~\ref{P10IV11.1p2d} of Proposition~\ref{P10IV11.1} implies that
there exists a past directed l.u.t. curve $\gamma_1 \colon [0,1]\to\mcU_1$ from
$\hat p_1$ to
$p_2\in\mcU_1\cap\mcU_2$. For $s$ close enough to $1$ the curve $\gamma_1$
enters
$\mcU_2$, choose $\hat s_2\ne 1$ such that $\gamma_1(\hat s_2)\in \mcU_2$.
Again by point~\ref{P10IV11.1p2d} of
Proposition~\ref{P10IV11.1} there exists a past directed
l.u.t. curve $\gamma_{2} \colon [0,1]\to\mcU_2$
from $\gamma_1(\hat s_2)$ to $p_3$. Repeating this construction iteratively,
one obtains a (finite) sequence of
past-directed
l.u.t. curves $\gamma_i\subset I^+(\gamma)\cap \mcO$ such that the end point
$\gamma_i(\hat s_{i+1})$
of $\gamma_i|_{[0,\hat s_{i+1}]}$
coincides with the starting point of $\gamma_{i+1}$.
Concatenating those curves together gives the desired path.

The case $s_1=0$ follows from the case $s_2=1$ by changing time orientation.

The remaining cases easily follow.
\qed

\medskip

Finally, the proof of Lemma~\ref{CLpushup} is a straightforward variation of
that of Proposition~\ref{CCPushup0}: the finite covering $\mcU_i$ of that proof
can be replaced by a countable one, and the distance between the $p_i$'s and
the points $\hat s_i$ should be chosen small enough to make sure that the
resulting concatenation of curves has infinite $h$-length;
compare the proof of~\cite[Lemma~2.9.10]{ChCausalityv1}.
\qed

\medskip

From what has been said so far, we conclude:

\begin{Theorem}
 \label{T23V11.1}
Those parts of causality theory as described in \cite{ChCausalityv1} which do not
explicitly address normal neighborhoods
or geodesics remain true for causally plain space-times.
\qed
\end{Theorem}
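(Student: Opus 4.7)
The plan is to prove Theorem~\ref{T23V11.1} by a systematic line-by-line inspection of the proofs in \cite{ChCausalityv1}, showing that every appeal to smoothness of $g$ that is not an \emph{explicit} invocation of normal coordinates or of geodesics can be substituted by one of the four replacement results (Theorem~\ref{T1IV11.1}, Proposition~\ref{P20XI11.1}, Lemma~\ref{CLpushup0}, Proposition~\ref{CCPushup0}, Lemma~\ref{CLpushup}) together with Proposition~\ref{P9IV11.1}. The statement is effectively a bookkeeping theorem, so the content of the proof is to identify the ``atomic'' appeals to smoothness in \cite{ChCausalityv1} and pair each one with its causally-plain substitute.

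Concretely, I would first compile the short list of smoothness-dependent facts repeatedly invoked in \cite{ChCausalityv1}: \textbf{(a)} $I^\pm(\Omega)$ is open; \textbf{(b)} a uniform limit of (appropriately parametrised) causal curves is causal, and accumulation curves of sequences of causal curves through a point exist; \textbf{(c)} the ``Push-up I'' identity $I^+(J^+(\Omega))=I^+(\Omega)$ (and its dual); \textbf{(d)} any causal curve with a timelike subsegment can be deformed, keeping endpoints, inside any prescribed neighbourhood to a timelike curve; \textbf{(e)} the local structure of $\partial J^\pm(p)$ via a Lipschitz graph equals $\partial I^\pm(p)$; \textbf{(f)} the ``Push-up II'' construction extending the previous through a past-inextendible causal curve. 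Each of these has a direct continuous-metric, causally-plain analogue: \textbf{(a)} follows from Proposition~\ref{P9IV11.1} and the identification $I^\pm=\cI^\pm$ of Proposition~\ref{P20XI11.1}; \textbf{(b)} is Theorem~\ref{T1IV11.1} and Proposition~\ref{P29IX11.1}; \textbf{(c)} is Lemma~\ref{CLpushup0}; \textbf{(d)} is Proposition~\ref{CCPushup0}; \textbf{(e)} is the combination of parts~(\ref{P10IV11.1p2}) and~(\ref{P10IV11.1p2d}) of Proposition~\ref{P10IV11.1} together with the defining condition $\dcI^\pm=\dotJ^\pm$ of causal plainness (Definition~\ref{D18X11.1}); \textbf{(f)} is Lemma~\ref{CLpushup}. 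I would then simply walk through each result in \cite{ChCausalityv1} whose statement does not reference normal neighbourhoods or geodesics, locate each use of smoothness in its proof, and observe that it is of one of the six types \textbf{(a)}--\textbf{(f)}, and therefore substitutable.

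The one non-trivial bookkeeping issue, which I would flag as the place requiring care rather than the place requiring new ideas, is the implicit use of geodesics: several results in \cite{ChCausalityv1} are stated in purely causal-theoretic terms but are proved by invoking, for example, maximising geodesics between causally related points, or exponential coordinates in the verification of the closedness of $J^+$ on small neighbourhoods. For such results I would either replace the geodesic step by the limit-geodesic construction of Proposition~\ref{P10IV11.1}(\ref{P10IV11.1p5d}), or, when even that fails because the argument uses uniqueness or smooth dependence of geodesics on initial data, exclude the result from the scope of the theorem — which is exactly what the qualifier ``which do not explicitly address normal neighborhoods or geodesics'' is intended to do. The main obstacle is therefore not a mathematical one but a classification one: deciding, for each statement in \cite{ChCausalityv1}, whether its proof can be reorganised so that geodesic arguments are confined to an appeal to Proposition~\ref{P10IV11.1}(\ref{P10IV11.1p5d}), or whether it must be considered as ``explicitly addressing'' geodesics and hence left outside the scope of Theorem~\ref{T23V11.1}. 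Once this classification is done, the remaining verifications are the routine word-for-word substitutions indicated above, and the theorem follows. \qed
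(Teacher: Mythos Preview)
Your proposal is correct and follows essentially the same approach as the paper: the paper's own ``proof'' of Theorem~\ref{T23V11.1} is the sentence preceding it (``From what has been said so far, we conclude:'') together with the paragraph opening Section~\ref{S28IX11.1}, which asserts that inspection of \cite{ChCausalityv1} shows Theorem~\ref{T1IV11.1} and the four results Proposition~\ref{P20XI11.1}, Lemma~\ref{CLpushup0}, Proposition~\ref{CCPushup0}, Lemma~\ref{CLpushup} suffice for the word-for-word carry-over. Your list \textbf{(a)}--\textbf{(f)} is a slightly more explicit version of the same bookkeeping, and your flagging of implicit geodesic use matches the paper's caveat in the qualifier of the theorem statement.
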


This, and Theorem~\ref{TP23V11.1} lead to:

\begin{Corollary}
 \label{C23V11.1}
Those parts of causality theory as described in \cite{ChCausalityv1} which do not
explicitly address normal neighborhoods
or geodesics remain true for $C^{0,1}$ metrics.
\qed
\end{Corollary}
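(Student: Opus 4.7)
The plan is essentially to observe that this corollary follows by direct combination of the two immediately preceding results: Theorem~\ref{T23V11.1} establishes the conclusion for the class of \emph{causally plain} space-times, and Theorem~\ref{TP23V11.1}(2) asserts that every $C^{0,1}$ metric falls into this class. Thus, given a $C^{0,1}$ metric $g$ on $\mcM$, I would first invoke Theorem~\ref{TP23V11.1}(2) to conclude that $(\mcM,g)$ is causally plain, meaning that for every $p$ there is a cylindrical neighborhood $\mcU$ with $\dcI^\pm(p,\mcU)=\dotJ^\pm(p,\mcU)$. I would then appeal to Theorem~\ref{T23V11.1} to transfer the entire non-geodesic portion of the causality theory of~\cite{ChCausalityv1} to $g$.

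For transparency, I would remind the reader of the mechanism by which Theorem~\ref{TP23V11.1}(2) was obtained: Lemma~\ref{LT10XI11.1} shows that in the Lipschitz setting, any causal curve lying on the graph of $f_-$ within a cylindrical chart can be perturbed to a locally uniformly timelike curve, which via Corollary~\ref{TC10XI11.1} forces $\mcB_p^+=\emptyset$ and hence $f_-=f_+$ and $\dcI^\pm(p,\mcU)=\dotJ^\pm(p,\mcU)$. It is this absence of bubbles that makes Proposition~\ref{P20XI11.1}, Lemmas~\ref{CLpushup0} and~\ref{CLpushup}, and Proposition~\ref{CCPushup0} available in the $C^{0,1}$ setting, and these are precisely the ingredients (together with Theorem~\ref{T1IV11.1} on accumulation of causal curves, which needs only continuity of $g$) that drive Theorem~\ref{T23V11.1}.

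There is essentially no obstacle here beyond bookkeeping: the two assembled theorems do all the work. The only point that warrants explicit mention is the qualifier ``which do not explicitly address normal neighborhoods or geodesics,'' inherited from Theorem~\ref{T23V11.1}; this restriction is genuine since the arguments in~\cite{ChCausalityv1} relying on the exponential map or on uniqueness of geodesics would require stronger regularity (cf.\ Example~\ref{Exa15V11.3}, which needs $C^{1,1}$). Accordingly, the proof reduces to the single line: \emph{By Theorem~\ref{TP23V11.1}(2), any $C^{0,1}$ space-time is causally plain, so Theorem~\ref{T23V11.1} applies.} This is exactly the author's ``\qed'' in the statement.
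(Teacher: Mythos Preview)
Your proposal is correct and matches the paper's approach exactly: the corollary is stated with a bare \qed\ immediately after Theorem~\ref{T23V11.1}, preceded by the sentence ``This, and Theorem~\ref{TP23V11.1} lead to,'' so the intended proof is precisely the combination of Theorem~\ref{TP23V11.1}(2) (causally plain) with Theorem~\ref{T23V11.1}. Your additional exposition of the mechanism behind Theorem~\ref{TP23V11.1}(2) is accurate but not required.
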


We note that several statements in~\cite{ChCausalityv1} concerning geodesics remain true for $C^{1,1}$ metrics; it is conceivable that all of them remain true, but justifications would be needed.

We continue with some more examples:

\begin{cocoExa}
 \label{Exa15V11.2}
{\rm
Suppose that $(\mcM,g)$ is an $n+1$-dimensional Lorentzian manifold with a \emph{locally Lipschitzian\/} metric $g$.
Suppose moreover that $g$ is invariant under the action of a \emph{compact\/} group $G$,
 the orbits of which are \emph{spacelike\/} and have \emph{co-dimension one}.
The metric is then smooth in the group direction, and one could expect that the question of uniqueness of geodesics would be similar to that in dimension two.
To address this, let $X_a$, $a=1,\ldots,n$ be $n$ linearly independent Killing vectors at $p$. Set $h_{ab}=g(X_a,X_b)$, then $h$ is invertible near $p$, let $h^{ab}$ denote the inverse matrix. One defines the \emph{orbit space-metric\/} $q$ by the formula
$$
 q(X,Y) := g(X,Y) - h^{ab}g(X_a,X) g(X_b,Y)
 \;.
$$
Then $q$ is a metric of signature $(-)$ on the manifold obtained by dividing $\mcM$ by the action of the isometry group.
Let $t$ be a Gauss
 coordinate parameterising the orbits of the isometry group near a point $p\in \mcM$, and let $\gamma$ be a null geodesic through $p$, set $c_a = g(X_a,\dot \gamma)$. Then
\bel{22V11.10}
  -(\dot\gamma^t)^2=q(\dot \gamma,\dot \gamma) = - h^{ab}g(X_a,\dot \gamma) g(X_b,\dot\gamma) = - h^{ab}c_a c_b
 \;;
\ee
equivalently
\bel{22V11.11}
\frac{dt}{ds} =\epsilon \sqrt{  { h^{ab}c_a c_b} }
 \;,
\qquad
\epsilon \in \{\pm 1\}
 \;.
\ee

Since the orbits of $G$ are spacelike we have $h^{ab}c_ac_b>0$, hence $\sqrt{h^{ab}c_ac_b}$ is a Lipschitz-continuous function, and so for any set of constants of motion $(c_a)$ the solutions of \eq{22V11.10},
are unique.

Remark that this fails
when the orbits of the group are not spacelike, as then the square-root fails to be Lipschitz at points where $h^{ab}c_ac_b=0$. An example where this occurs, and where the solutions of \eq{22V11.11} are not unique, has been  presented in Example~\ref{Exa19V11.1} where $\partial_x$ is a Killing vector field; compare Example~\ref{Exa22V11.10} below, where $\partial_t$ and $\partial_y$ are Killing vectors.
}
\end{cocoExa}

\begin{cocoExa}
\label{Exa22V11.10}
{\rm
We have seen that for $C^{0,1}$  the interface between the timelike future and the rest of the world has codimension one. This raises naturally the question, whether \emph{bifurcating null} geodesics can exist on this hypersurface. In this example, we exhibit a family of causally plain metrics of $C^{1,\gamma}$ differentiability class for which part of the light-cone is a hypersurface covered by bifurcating null geodesics all emanating from a point $p$ in the same
direction.

Let $\mcM = \R^3$ with coordinates $(t, x, y)$. Let $\gamma$ be a constant in $(0, 1)$ and $\epsilon$ is a small real number. Consider the following one-parameter family of three-dimensional metrics
\[
 g(\epsilon) = -e^{\alpha_\epsilon(x)}dt^2 + dx^2 + e^{\beta_\epsilon(x)}dy^2
 \;,
\]
with
\[
 e^{-\alpha_\epsilon(x)}  = 1 + \epsilon + \frac{1} 2 (\epsilon^2 + x^2)^{\frac{1+\gamma}2}
 \;,
 \quad
 e^{-\beta_\epsilon(x)}  = 1 - \frac{1} 2 (\epsilon^2 + x^2)^{\frac{1+\gamma}2}
 \;.
\]
%
The interest of these metrics with $\eps = 0$ stems from the fact that they possess bifurcating null geodesics~\cite[Appendix~F]{SCC}. In this case we have
%
\[
 e^{-\alpha_0(x)}  = 1 + \frac{1} 2  |x|^{ {1+\gamma} }
 \;,
 \quad
 e^{-\beta_0(x)}  = 1 - \frac{1} 2  |x|^{ {1+\gamma} }
 \;,
\]
%
so the metric is of $C^{1,\gamma}$ differentiability class. Note that the metric is causally plain by Theorem~\ref{TP23V11.1}. For $\epsilon\ne 0$ the metrics are smooth approximations to $g(0)$, with light cones slightly larger or smaller than those of $g(0)$ depending upon the sign of $\epsilon$.

The null geodesics of the metric $g(\epsilon)$ are obtained by solving the equations
\[
 \dot t = C_t \, e^{-\alpha_\epsilon (x)}
 \;,
 \quad
 \dot y = C_y \, e^{-\beta_\epsilon (x)}
 \;,
\quad
 \dot x^2 = C_t^2 e^{-\alpha_\epsilon(x)} - C_y ^2 e^{-\beta_\epsilon(x)}
 \;,
\]
for some constants $C_t$ and $C_y $. If $C_t^2=C_y^2 =1$ and $\epsilon =0$ we obtain
\[
 \dot x^2 =   |x|^{ {1+\gamma} }
 \;,
\]
and the initial value $x(0)=0$ gives both the solution
$$x(s)\equiv 0\;,\quad t=t_0\pm s\;, \quad y = y_0\pm s
 \;,
$$
and, restricting ourselves to the future light-cone,
$$
 x(s) = \pm \left(\frac{( 1-\gamma )s}{2}  \right)^{\frac{2}{1-\gamma}}
 \;,
$$
$$
 t(s) = t_0  +   s + \frac 1 {(3+\gamma)}\left(\frac{( 1-\gamma )s}{2}  \right)^{\frac{3+\gamma}{1-\gamma}}
 \;,
\quad
 y(s) = y_0 \pm \left( s - \frac 1 {(3+\gamma)}\left(\frac{( 1-\gamma )s}{2}  \right)^{\frac{3+\gamma}{1-\gamma}} \right)
 \;,
$$
see the first two figures in Figure~\ref{F12IV11.1}.%
\footnote{We are grateful to J.M.~Martin-Garcia for help with producing the figures.}
All four combinations of $\pm 1$ in $x$ and $y$ are
possible. Since $3+\gamma/(1-\gamma)>3$, the dominant term both
in $t$ and $y$ is $s$ for small $s$, and note that $t(s)> t_0 +
s $ for $s \neq 0$.

The above solution will be called the \emph{extremal bifurcating geodesic}. Further bifurcating geodesics passing through the origin are obtained by moving the initial point of the bifurcating geodesic along the null ray $\{x=0,t=y\}$, see the third and fourth figures of Figure~\ref{F12IV11.1}.
\begin{figure}[ht]
 \begin{center}
 \includegraphics[width=.15\textwidth]{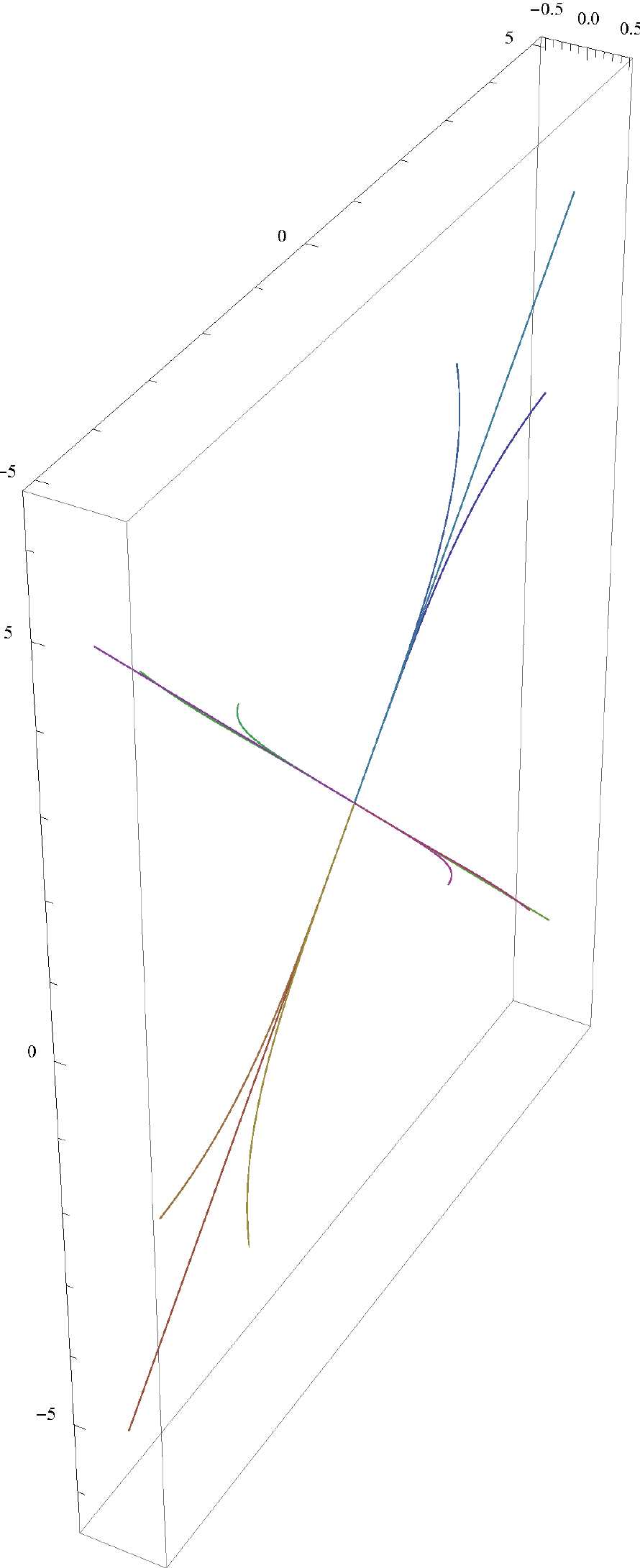}
 \includegraphics[width=.3\textwidth]{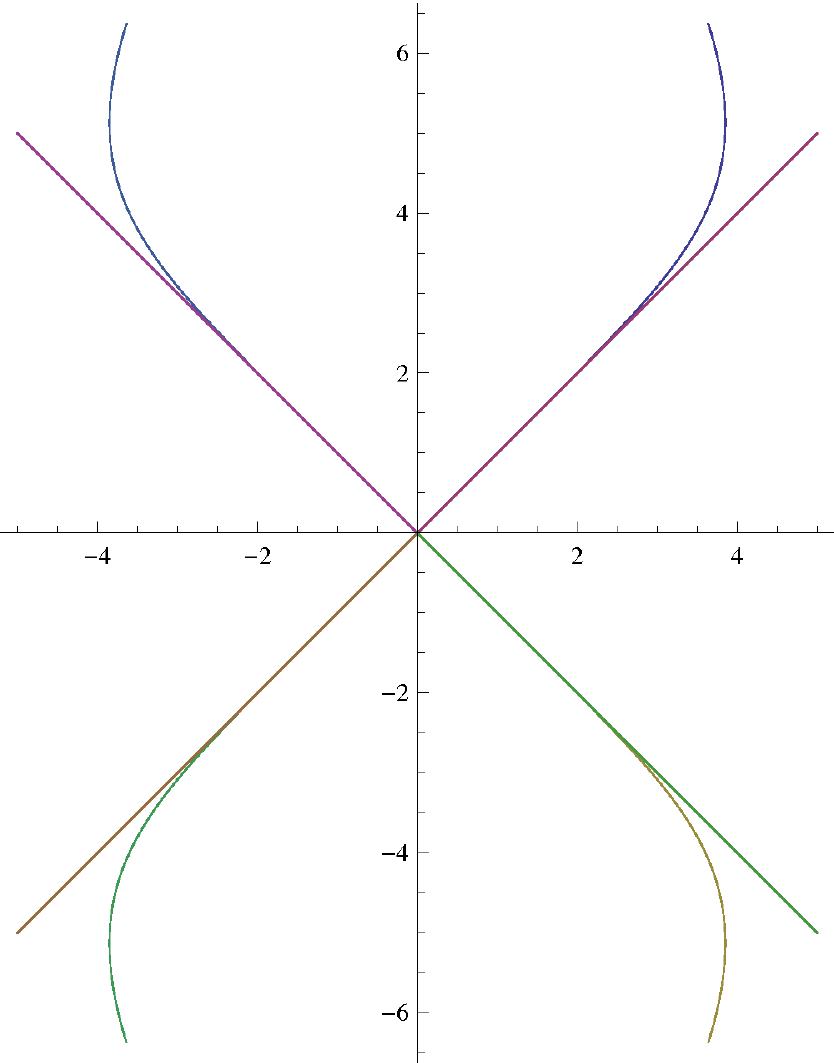}
 \includegraphics[width=.3\textwidth]{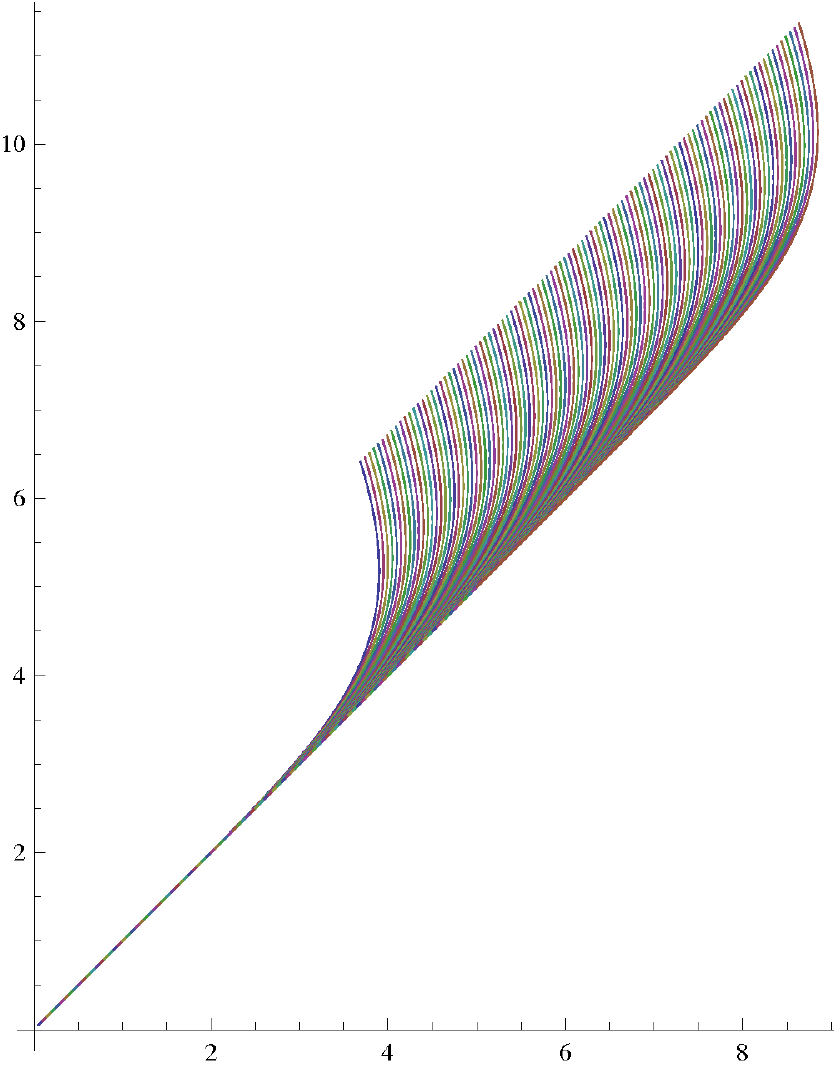}
 \includegraphics[width=.2\textwidth]{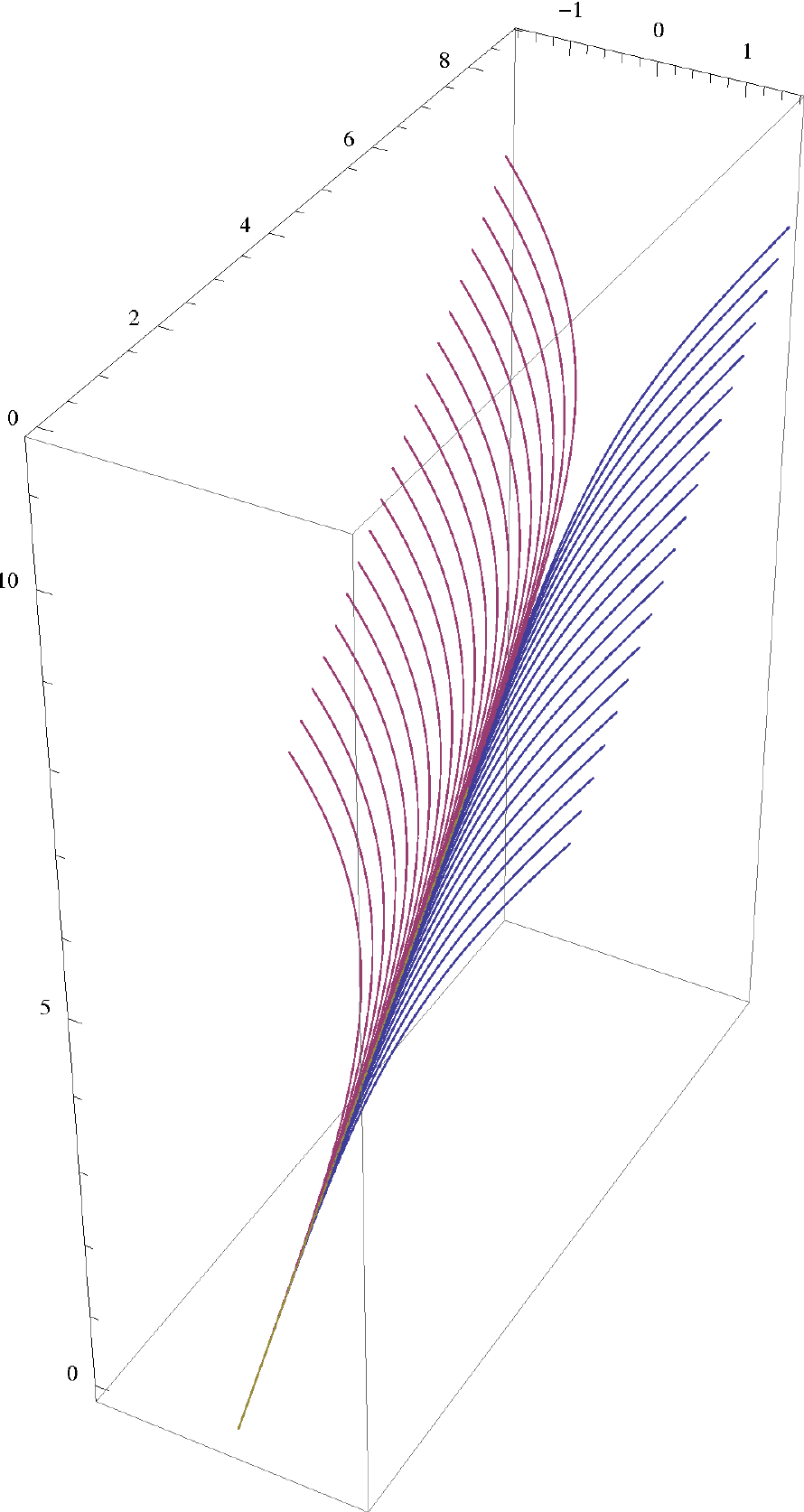}
\caption{\label{F12IV11.1} The ``extremal" bifurcating geodesics through $(0,0,0)$ (first figure), and their $t-y$ plane projections (second figure), $\gamma=1/2$. In each quadrant of the second figure, every point between the two curves is the $t-y$ projection of a point lying on a null geodesic obtained by following one of the diagonals, and then branching off to one of the (translated) extremals; see the third figure. The null-geodesics so obtained thread, in each quadrant, two hypersurfaces in space-time intersecting at an edge $\{x=0, t=y\}$, as seen in the fourth figure.}
\end{center}
\end{figure}
\begin{figure}[ht]
 \begin{center}
 \includegraphics[width=.4\textwidth]{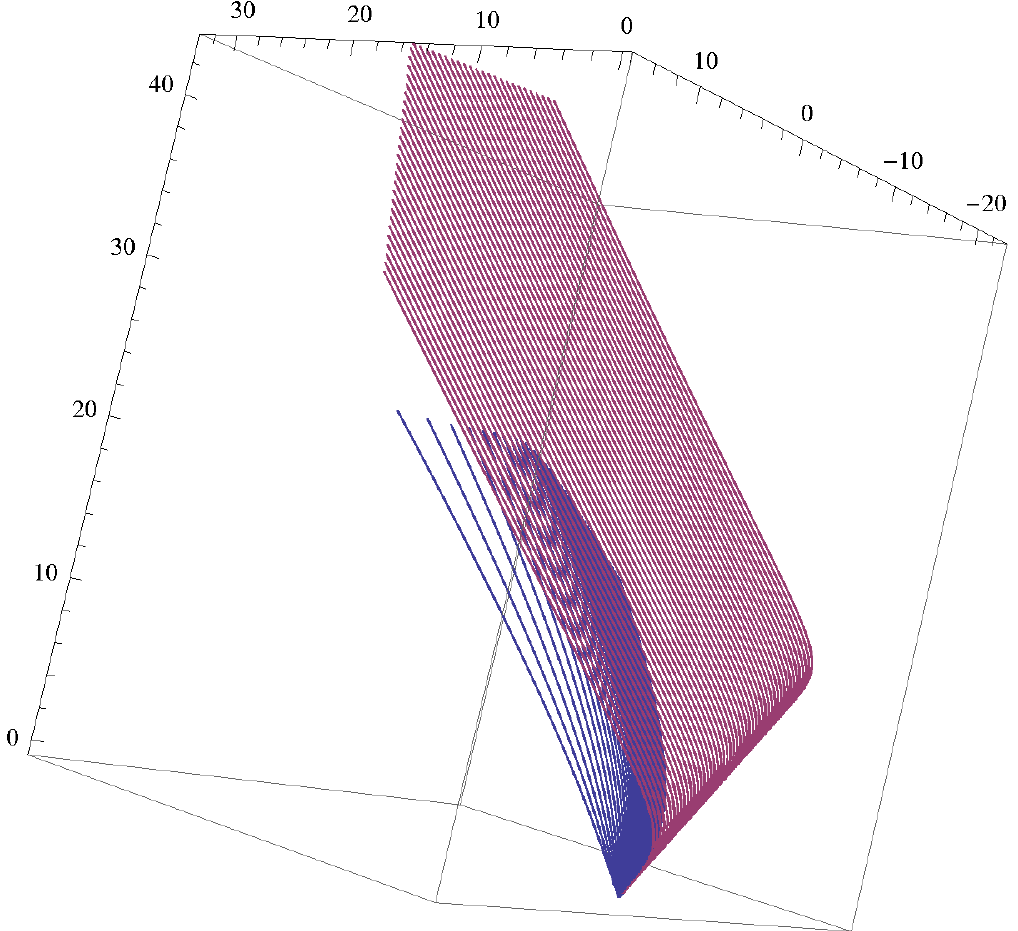}
\caption{\label{F12IV11.2} Some of the geodesics issued from the origin for the metric $g(0)$ with $\gamma=3/8$. The lower blue sheet corresponds to the usual unique geodesics with initial tangent vector $\dot x\ne 0$, the bordeaux-ones are the bifurcating ones with $\dot x(0)=0$. Each family forms a two-dimensional surface in space-time.}
\end{center}
\end{figure}

We now wish to study how the geodesics behave for the metrics with $\eps \neq 0$ and, in particular, in which sense they approximate the geodesics above for $\eps = 0$. Quite generally for $\epsilon \ne 0$ the equations for null geodesics become
\bean
 &
 \dot t = C_t \left(1 + \epsilon+\frac{1} 2 (\epsilon^2 + x^2)^{\frac{1+\gamma}2}\right)
 \;,
 \quad
 \dot y = C_y \left(1 - \frac{1} 2 (\epsilon^2 + x^2)^{\frac{1+\gamma}2}\right)
 \;,
 &
\\
 &
 \dot x^2 = C_t^2(1+\epsilon)- C_ y^2 + \frac {C_t^2 + C_y^2}2 (\epsilon^2 + x^2)^{\frac{1+\gamma}2}
 \;.
&
\eeal{11IV11.5x}
It is convenient to normalise the geodesics so that $\dot x^2 + \dot y^2 =1$ at $x=0$. Writing
$$
 \dot x(0) = \sin \phi\;,
 \quad
 \dot y(0) = \cos \phi
 \;,
$$
this leads to
\bel{15IV11.1}
 C_t^2
=\frac{   1- \frac 12 |\epsilon |^{\gamma+1
} \sin ^2\phi   }{\left(1-\frac 12|\epsilon |^{\gamma +1} \right)
   \left(1+  \epsilon + \frac 12 |\epsilon |^{\gamma +1}\right)}
 \;,
 \quad C_y = \frac{\cos\phi}{  1-\frac 12 |\epsilon|^{1+\gamma}}
 \;.
\ee
%

We then have

\begin{Proposition}
The null cones of points with $x = 0$ for the metric $g_{\epsilon}$ are smaller than those of $g$ for $\epsilon>0$ and larger if $\epsilon < 0$.
\end{Proposition}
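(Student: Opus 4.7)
The plan is to reduce the cone comparison at a point $(t,0,y)$ to a pointwise algebraic check on two diagonal Lorentzian quadratic forms on $\R^3$, since at $x=0$ the metric takes the diagonal form
\[
g(\eps)|_{x=0} \;=\; -a_\eps\,dt^2 + dx^2 + b_\eps\,dy^2,
\]
with $a_\eps := \bigl(1+\eps+\frac{1}{2}|\eps|^{1+\gamma}\bigr)^{-1}$ and $b_\eps := \bigl(1-\frac{1}{2}|\eps|^{1+\gamma}\bigr)^{-1}$; note $a_0=b_0=1$, so $g=g(0)$ at $x=0$ is Minkowskian in these coordinates. Under this reduction, ``$g(\eps)$-cones smaller than $g$-cones'' at such a point means $g(\eps)\prec g$ in the sense of \eq{13III11.1}, and similarly for the reverse inclusion.

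First I would dispose of the case $\eps>0$. Direct computation gives $a_\eps^{-1}-1 = \eps + \frac{1}{2}\eps^{1+\gamma}>0$ and $a_\eps^{-1}-b_\eps^{-1}=\eps+\eps^{1+\gamma}>0$, so $a_\eps<1$, $b_\eps>a_\eps$, and in particular $a_\eps^{-1}b_\eps>1$. For any nonzero $g(\eps)$-causal vector $X = X^t\partial_t + X^x\partial_x + X^y\partial_y$, dividing the defining inequality $a_\eps(X^t)^2 \ge (X^x)^2+b_\eps(X^y)^2$ by $a_\eps>0$ yields
\[
(X^t)^2 \;\ge\; a_\eps^{-1}(X^x)^2 + a_\eps^{-1}b_\eps\,(X^y)^2 \;\ge\; (X^x)^2 + (X^y)^2,
\]
which is strict at the second step when $(X^x,X^y)\ne 0$, while $(X^x,X^y)=0$ forces $X^t\ne 0$ and strict positivity of the left-hand side on its own. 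Either way $X$ is strictly $g$-timelike, so $g(\eps)\prec g$ at $x=0$.

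The case $\eps<0$ is analogous but with the opposite sign of $\eps+|\eps|^{1+\gamma}$. Restricting to $|\eps|<1$, so that both $a_\eps,b_\eps>0$ (hence $g(\eps)$ is Lorentzian at $x=0$), I would compute $a_\eps^{-1}-1 = -|\eps|+\frac{1}{2}|\eps|^{1+\gamma}<0$ and $b_\eps^{-1}-a_\eps^{-1} = |\eps|-|\eps|^{1+\gamma}>0$, giving $a_\eps>1$ and $a_\eps>b_\eps$. Then for any nonzero $g$-causal $X$, i.e.\ $(X^t)^2 \ge (X^x)^2+(X^y)^2$, the identity
\[
a_\eps(X^t)^2 - (X^x)^2 - b_\eps(X^y)^2 \;\ge\; (a_\eps-1)(X^x)^2 + (a_\eps-b_\eps)(X^y)^2 \;\ge\; 0
\]
is strict whenever $(X^x,X^y)\ne 0$, and the remaining case $X^x=X^y=0$ is handled by the observation that $X^t\ne 0$ then makes $a_\eps(X^t)^2$ already strictly positive. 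Hence $g\prec g(\eps)$ at $x=0$.

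There is no real obstacle here: the argument amounts to two sign checks, one for each sign of $\eps$. The only mild subtlety is restricting to $|\eps|$ small enough that $b_\eps^{-1}>0$ and $|\eps|^\gamma<1$, so that the crucial sign of $b_\eps^{-1}-a_\eps^{-1}$ for $\eps<0$ is correct---but this is automatic in the regime where $g(\eps)$ is used as a smooth approximation to $g(0)$.
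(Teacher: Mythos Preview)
Your argument is correct. At $x=0$ the metrics $g(\eps)$ and $g=g(0)$ are simultaneously diagonal, so the cone comparison reduces to the elementary inequalities you check on the coefficients $a_\eps,b_\eps$; the sign analysis is accurate in both cases (and your observation that $a_\eps^{-1}b_\eps>1\iff b_\eps>a_\eps$ is exactly what is needed in the $\eps>0$ case).

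The paper proceeds differently: it uses the constants of motion $C_t,C_y$ already introduced for the null geodesic equations, normalises so that $\dot x^2+\dot y^2=1$ at $x=0$, and computes
\[
\Bigl(\frac{dt}{ds}(0)\Bigr)^2
=\bigl(1+\eps+\tfrac12|\eps|^{1+\gamma}\bigr)\,
\frac{1-\tfrac12|\eps|^{1+\gamma}\sin^2\phi}{1-\tfrac12|\eps|^{1+\gamma}}
=1+\eps+o(\eps),
\]
concluding that for fixed spatial components the time component of a null vector is larger for $\eps>0$ and smaller for $\eps<0$. That route exploits the infrastructure already set up in the example but is stated only as an asymptotic in $\eps$, whereas your direct quadratic-form comparison yields the sharp inclusion $g(\eps)\prec g$ (respectively $g\prec g(\eps)$) for every $\eps>0$ (respectively $\eps<0$ with $|\eps|<1$), without any appeal to the geodesic normalisation. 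Both arguments are short; yours is the more self-contained of the two.
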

\begin{proof}
Letting $x(0) = 0$, it follows from~(\ref{15IV11.1}) and the null geodesic equations that we have
\[
\left( \frac{dt}{ds}(0) \right)^2 = \left( 1+  \epsilon + \frac{1}{2} |\epsilon |^{\gamma +1} \right)
\frac{1-\frac{1}{2} |\epsilon |^{\gamma +1} \sin^2\phi}{1-\frac{1}{2} |\epsilon |^{\gamma +1}}  = 1 + \eps + o(\eps).
\]
Therefore, for future-directed null geodesics, we have $\frac{dt}{ds}(0) = 1 + \frac{1}{2} \eps + o(\eps)$. It follows that a null tangent vector at $x=0$ with fixed values of $\dot{x}, \dot{y}$ will have a larger value of $\frac{dt}{ds}(0)$ if $\eps > 0$ and a smaller value if $\eps < 0$.
\end{proof}

Moreover, contrary to the behaviour of geodesics with $\eps = 0$, we have the following.

\begin{Proposition}
A null geodesic with $x(0) = 0$ and $\frac{dx}{ds}(0) = 0$ with respect to the metric $g_{\eps}$ with $\eps \neq 0$ has $x(s) = 0$ for all $s$.
\end{Proposition}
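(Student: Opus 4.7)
The plan is to use the Killing vectors $\partial_t$ and $\partial_y$ to reduce the geodesic equations to a second-order autonomous ODE for $x(s)$ alone, and then to observe that for $\epsilon\neq 0$ this ODE has smooth right-hand side, so that Picard--Lindel\"of forces $x(s)\equiv 0$.

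First, the conservation laws $C_t = e^{\alpha_\epsilon(x)}\dot t$ and $C_y = e^{\beta_\epsilon(x)}\dot y$ express $\dot t$ and $\dot y$ as functions of $x$. Since the metric is diagonal in the coordinates $(t,x,y)$, the only Christoffel symbols with upper index $x$ that contribute to the geodesic equation are $\Gamma^x_{tt}$ and $\Gamma^x_{yy}$. Inserting the conservation laws gives a second-order ODE
$$
 \ddot x = \Phi_\epsilon(x) := \frac{(1+\gamma)(C_t^2+C_y^2)}{4}\, x\, (\epsilon^2+x^2)^{(\gamma-1)/2}\;,
$$
which one also obtains by differentiating the null condition \eq{11IV11.5x} with respect to $s$.

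Second, when $\epsilon\neq 0$ one has $\epsilon^2+x^2 \geq \epsilon^2 > 0$, so $\Phi_\epsilon$ is smooth on all of $\R$ and, in particular, Lipschitz near the origin. The explicit factor of $x$ in $\Phi_\epsilon$ yields $\Phi_\epsilon(0)=0$, so the constant $x(s)\equiv 0$ satisfies both the ODE and the initial data $x(0)=0$, $\dot x(0)=0$; Picard--Lindel\"of then identifies it as the unique solution. Once $x\equiv 0$ is established, the conservation laws make $\dot t$ and $\dot y$ constant in $s$, and the geodesic extends globally.

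The only delicate point is the passage to the second-order ODE: one cannot divide the identity $2\dot x\ddot x=\Phi_\epsilon(x)\cdot 2\dot x$ by $\dot x$ at $s=0$, but the second-order geodesic equation itself holds everywhere, so the reduction is legitimate. Note how the argument breaks down when $\epsilon=0$: there $(\epsilon^2+x^2)^{(\gamma-1)/2}=|x|^{\gamma-1}$ loses Lipschitz continuity at the origin, which is exactly what permits the bifurcating solutions exhibited above.
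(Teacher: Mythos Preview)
Your proof is correct and follows essentially the same strategy as the paper: reduce, via the conservation laws for $\partial_t$ and $\partial_y$, to an autonomous ODE for $x$ alone, observe that for $\epsilon\neq 0$ the relevant right-hand side is smooth because $\epsilon^2+x^2$ is bounded away from zero, and invoke standard ODE uniqueness.

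The only difference is a minor one of packaging. The paper works with the first-order null constraint
\[
 \dot x^2 \;=\; b\Bigl((\epsilon^2+x^2)^{\frac{1+\gamma}{2}}-|\epsilon|^{1+\gamma}\Bigr)
\]
(after computing the constants $a,b$ at $\sin\phi=0$ and showing $a=-b|\epsilon|^{1+\gamma}$), and then appeals to uniqueness for this equation. You instead differentiate once more (equivalently, use the $x$-component of the geodesic equation directly) to obtain the second-order ODE $\ddot x=\Phi_\epsilon(x)$ with $\Phi_\epsilon$ manifestly smooth and $\Phi_\epsilon(0)=0$, so that Picard--Lindel\"of applies without any square-root subtlety. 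Your route avoids the explicit algebra for $a$ and $b$ and makes the uniqueness step slightly more transparent; the paper's route keeps the computation at first order and ties in with the surrounding analysis of the null constraint. Both arrive at the same conclusion for the same reason.
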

\begin{proof}
Let us write
$$
 \dot x^2 = a +b (\epsilon^2 + x^2)^{\frac{1+\gamma}2}
 \;.
$$
Setting $w=|\epsilon|^{1+\gamma}$ we have
\beaa
a &= & \frac{-\left(\epsilon \left(w^2-2 w+4\right)+w^2+4\right) \cos (2 \phi )+\epsilon
   \left(w^2-6 w+4\right)+w^2-8 w+4}{2 (w-2)^2 (2 \epsilon +w+2)}
\\
 & = &
  \sin ^2 \phi - \frac{1}{4} w (1 + \cos^2 \phi)+O\left(w^2\right) +\epsilon
   \left(\frac{w}{4}+O\left(w^2\right)\right)+O\left(\epsilon ^2\right)
\;,
\eeaa
\beaa
b & = & \frac{2 (2 \epsilon +w+2) \cos ^2\phi +(w-2) \left(w \sin ^2\phi-2\right)}{(w-2)^2 (2
   \epsilon +w+2)}
\\
 & = &
 \frac{1}{2} (1 + \cos^2 \phi)+\frac{1}{4} w (3 \cos^2\phi - 1)+O\left(w^2\right)
\\
 &&+\epsilon  \left(-\frac{1}{2}+\frac{1}{4} \left(\sin ^2\phi +1\right) w+O\left(w^2\right)\right)+O\left(\epsilon ^2\right)
\;.
\eeaa
At $\sin\phi =0$ we find
\beaa
a &= &  -\frac{4 (\epsilon +2) w}{(w-2)^2 (2 \epsilon +w+2)} = -
 b w
\;.
\eeaa
This gives, for $\phi $ equal zero or $\pi$,
\bea
 \dot x^2 =\frac{4 (\epsilon +2)  }{(2-|\epsilon|^{1+\gamma})^2 (2 \epsilon +|\epsilon|^{1+\gamma}+2)}\left((\epsilon^2 + x^2)^{\frac{1+\gamma}2}-|\epsilon|^{1+\gamma}\right)
 \;.
\eeal{15IV11.15}
By uniqueness of solutions, when $\epsilon\ne 0$ the only solution with $x(0)=0$ is clearly $x(s)=0$ for all $s$.
\end{proof}

\begin{Remark}
This result could have been anticipated, as the plane $\{x=0\}$ is totally geodesic with respect to $g_{\eps}$, with a manifestly flat induced metric.
\end{Remark}

Finally, we study properties of the geodesics of the metric $g$ with $\dot{x}(0) \neq 0$. Recall that, when $\epsilon=0$, we have
\begin{equation}
\label{xphi}
 \dot x^2 = \sin^2 \varphi + \frac{1 + \cos^2 \varphi}{2} |x|^{{1+\gamma}}
\;.
\end{equation}
Consider the set of solutions $x(s)$ of this equation with initial value $\dot x(0)=\sin \varphi$, $\varphi\in (0,\pi)$,
set
$$
 x_-(s)=\inf_{\varphi\in (0,\pi/2)}|x(s)|
\;.
$$

By definition, any null geodesic, parameterised as above, that accumulates at $x=0$ and for which
$$
  |x(s)| \le x_-(s)
$$
must belong to the family of bifurcating null geodesics that reach the hyperplane $\{x=0\}$. We claim that
$$
 \mbox{$x_-(s)>0$ for $s>0$. }
$$
To see this, let $x_0$ by the positive non-trivial solution through the origin of
$$
 \dot x_0^2 =   \frac 12  |x_0|^{{1+\gamma}}
 \;.
$$
Then
\beaa
 \dot x^2 -\dot x_0^2 &= &
  \sin^2 \varphi + \frac{1}{2} \cos^2 \varphi \, |x|^{{1+\gamma}}
 + \frac 12 (|x|^{{1+\gamma}}-|x_0|^{{1+\gamma}})
\\
 & \ge & \frac 12 (|x|^{{1+\gamma}}-|x_0|^{{1+\gamma}})
 \;.
\eeaa
For very small $s$ we have $|x|\sim |\sin \varphi|s$ while $x_0\sim s^{2/(1-\gamma)}$, hence the right-hand side is positive for small $s$. It is then standard to show that \emph{all non-trivial solutions passing through $x=0$\/} satisfy
$$
 |x(s)| \ge x_0(s)
 \;.
$$
In particular $x_-$ is \emph{non-trivial}
$$
 x_- \ge x_0
\;,
$$
as claimed. See Figures~\ref{F23IV11.1} and \ref{F23IV11.2}.
\begin{figure}[ht]
 \begin{center}
 \includegraphics[height=3.8cm]{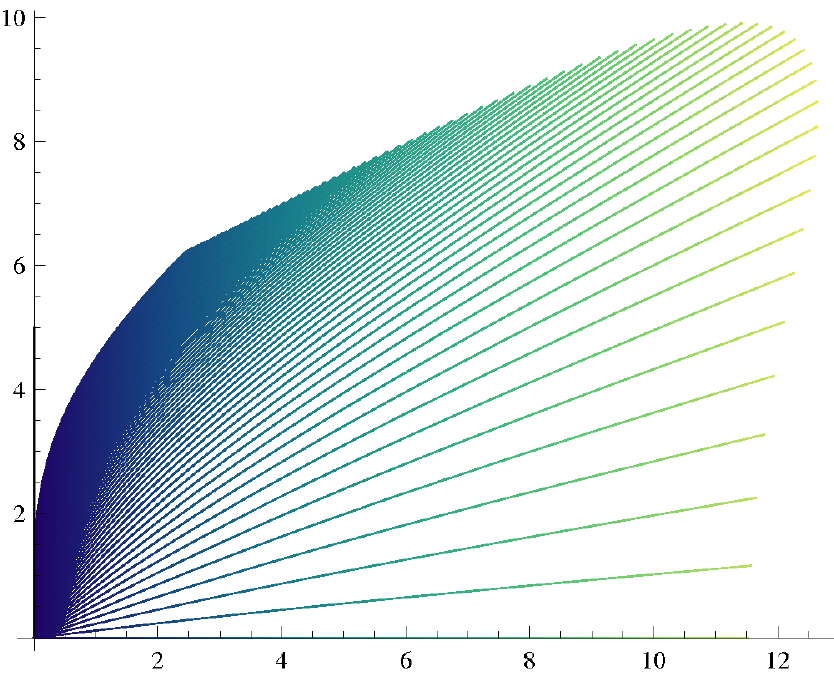}
 \includegraphics[height=3.8cm]{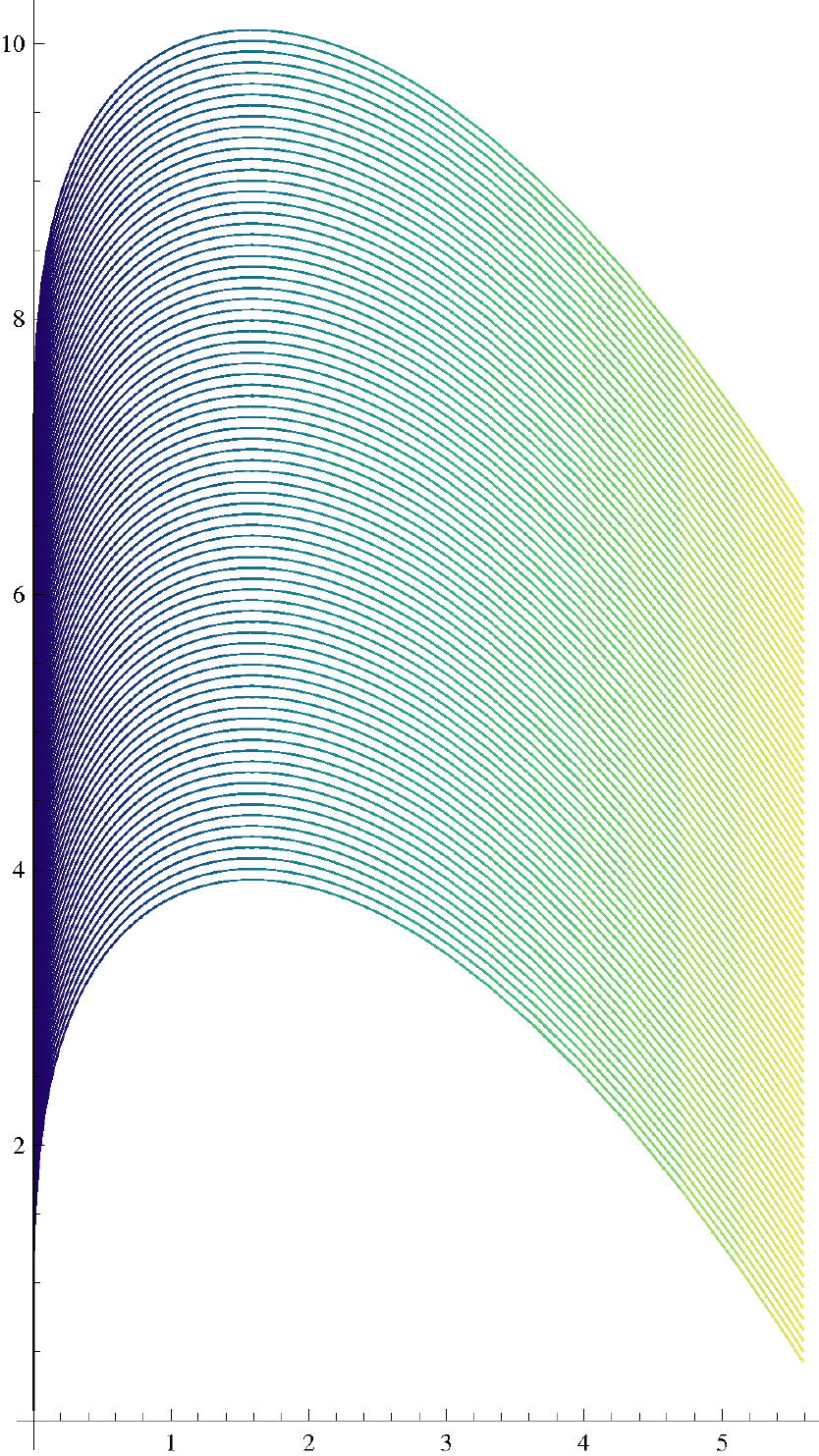}
 \includegraphics[height=3.8cm]{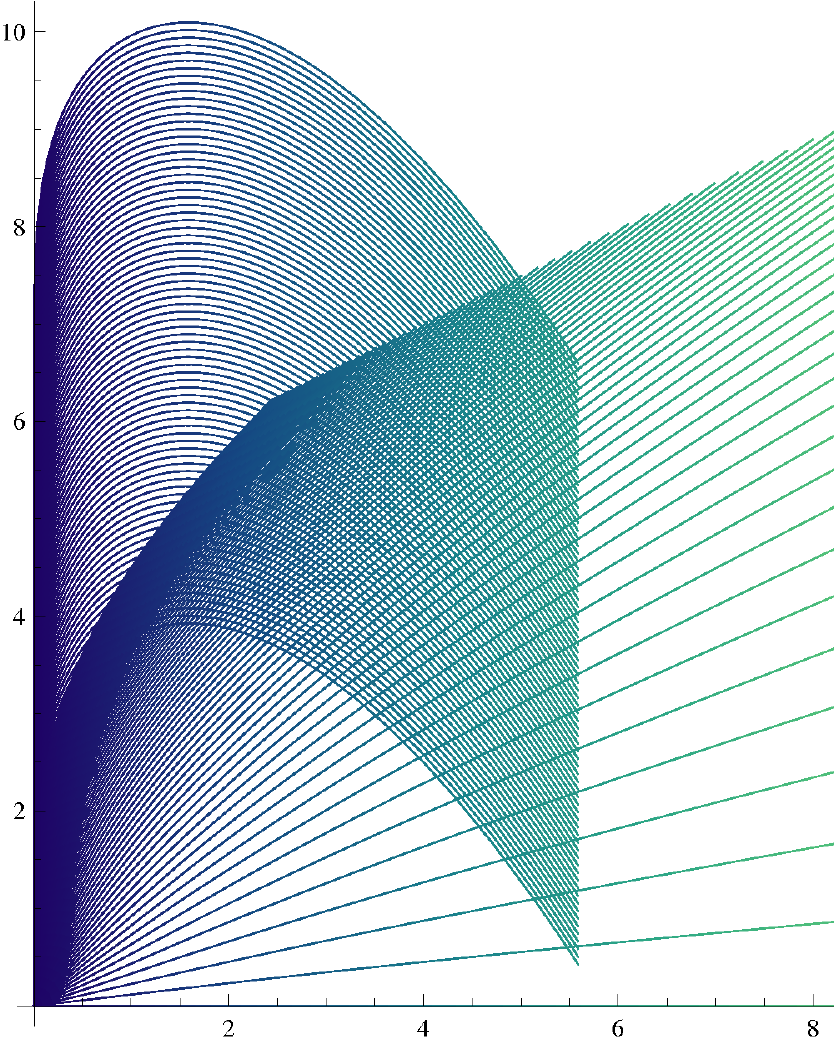}
\caption{\label{F23IV11.1} (One quarter of) the $x-y$ plane projection of the surface threaded by the non-bifurcate null geodesics (left figure), the bifurcate ones (middle figure), and both projections superimposed (right figure), with $\gamma=1/2$. The bifurcate null geodesics are vertical translations of each-other and continue indefinitely in the vertical direction. The full projection is obtained by reflecting across the $x$ axis, and then across the $y$ axis.}
\end{center}
\end{figure}
\begin{figure}[ht]
 \begin{center}
 \includegraphics[height=4cm]{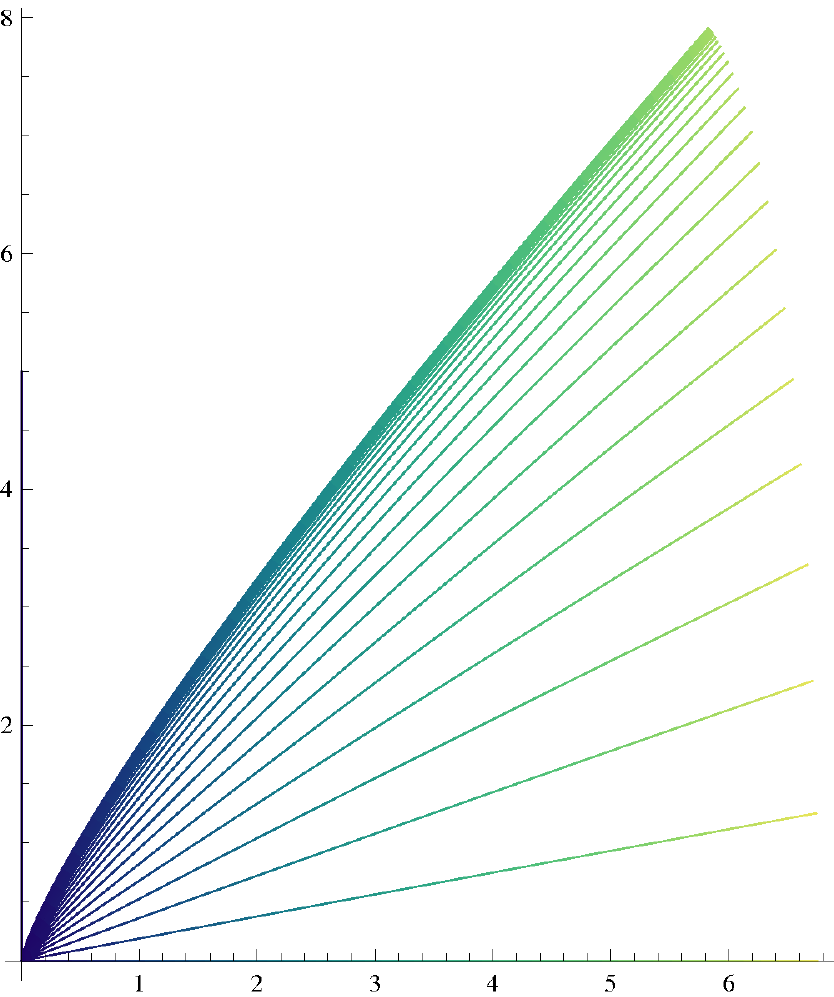}
 \includegraphics[height=4cm]{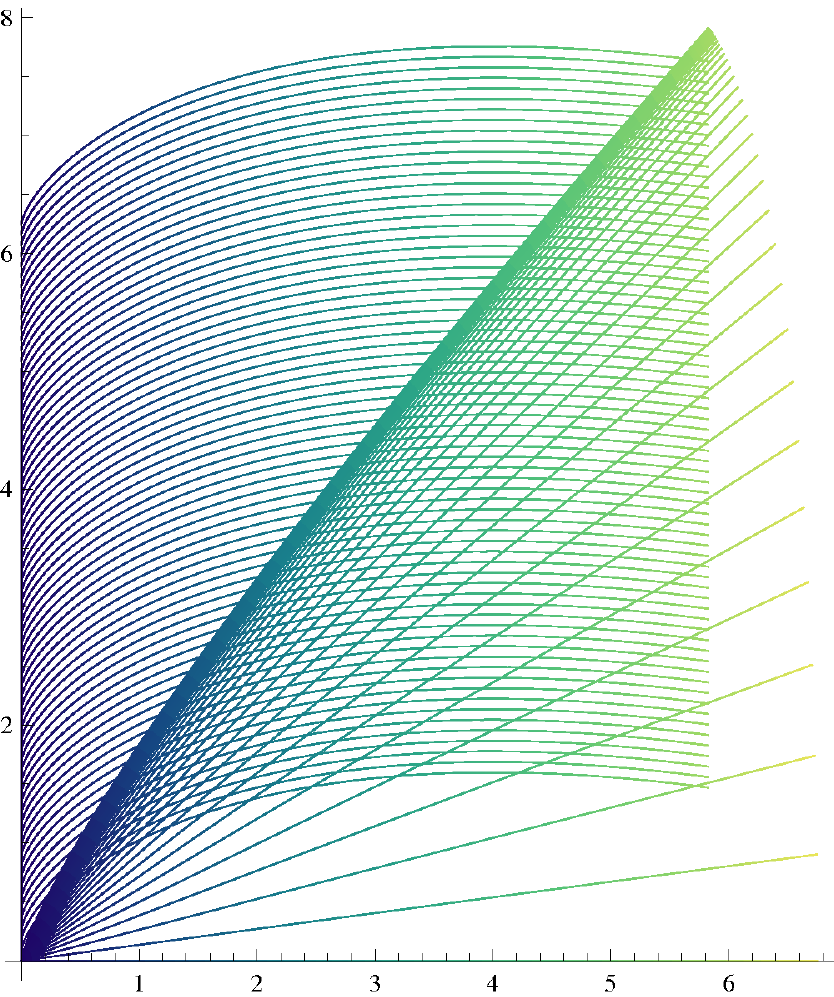}
\caption{\label{F23IV11.2} Same as the first and third figures in Figure~\ref{F23IV11.1} but with $\gamma=-1/2$.}
\end{center}
\end{figure}

The remainder of our study of the current example aims to provide  more control of the function $x_-$ above, see Corollary~\ref{C27III12.1} below:

\begin{Lemma}
Let $\varphi \in \left( 0, \frac{\pi}{2} \right)$, and $x$ denote the solution of~(\ref{xphi})
with the initial conditions $x(0) = 0$ and $\frac{dx}{ds}(0) = \sin\varphi$. Then, for sufficiently small $s > 0$, we have
\begin{equation}
\fl
\frac{\sin\varphi}{\sqrt{\frac{1}{2} \left( 1 + \cos^2\varphi \right)}} \sinh \left( \sqrt{\frac{1}{2} \left( 1 + \cos^2\varphi \right)} s \right)
\le x(s)
\le \left( \sin\varphi \right) s + \frac{1}{8} \left( 1 + \cos^2\varphi \right) s^2.
\label{comparisoninequality}
\end{equation}
\end{Lemma}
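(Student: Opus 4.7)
The plan is to set $a := \sin\varphi$ and $b := \tfrac12(1+\cos^2\varphi)$ so that the ODE reads $\dot x^2 = a^2 + b|x|^{1+\gamma}$, with $x(0)=0$ and $\dot x(0)=a>0$. Since $a>0$, by continuity $x(s)>0$ on some interval $(0,s_\star)$, and we may drop the absolute value there. Because the map $u\mapsto\sqrt{a^2+bu^{1+\gamma}}$ is smooth and bounded below by $a>0$ for $u\ge 0$, we can also rewrite the equation in the branch form $\dot x=\sqrt{a^2+bx^{1+\gamma}}$ on $(0,s_\star)$, where the right-hand side is locally Lipschitz in $x$.

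Next I would exploit the hypothesis $\gamma\in(0,1)$ via the elementary two-sided bound
\begin{equation*}
x^{2}\;\le\;x^{1+\gamma}\;\le\;x\qquad\text{whenever }0\le x\le 1.
\end{equation*}
Choose $s_1>0$ so small that $x(s)\le 1$ for all $s\in[0,s_1]$ (possible since $x$ is continuous and $x(0)=0$). On $[0,s_1]$ the ODE therefore yields the differential inequalities
\begin{equation*}
\sqrt{a^2+bx^2}\;\le\;\dot x\;\le\;\sqrt{a^2+bx}.
\end{equation*}

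Now introduce the comparison functions $z$ and $y$ solving, with $z(0)=y(0)=0$ and $\dot z(0)=\dot y(0)=a$,
\begin{equation*}
\dot z=\sqrt{a^2+bz^2},\qquad \dot y=\sqrt{a^2+by}.
\end{equation*}
Both right-hand sides are smooth in the respective variables on $[0,\infty)$, so the standard scalar comparison principle (applied on $[0,s_1]$, where $x$ stays in the region $[0,1]$ on which both inequalities above are valid) yields $z(s)\le x(s)\le y(s)$. Finally I would solve the two auxiliary equations explicitly: differentiating $\dot z^2=a^2+bz^2$ gives $\ddot z=bz$, so $z(s)=(a/\sqrt b)\sinh(\sqrt b\,s)$; differentiating $\dot y^2=a^2+by$ gives $2\dot y\ddot y=b\dot y$, i.e.\ $\ddot y=b/2$, so $y(s)=as+\tfrac{b}{4}s^2$. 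Substituting back $a=\sin\varphi$ and $b=\tfrac12(1+\cos^2\varphi)$ produces precisely the bounds~\eqref{comparisoninequality}.

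The only subtle point is the validity of the comparison at $s=0$, where the original ODE has a right-hand side that is not locally Lipschitz in $x$ at $x=0$ if one writes it as $\dot x=F(x)$ with $F(x)=\sqrt{a^2+bx^{1+\gamma}}$; however, because $F(0)=a>0$, $F$ is in fact $C^\infty$ in a neighbourhood of $0$ (as a composition of $\sqrt{\cdot}$ with a $C^1$ function bounded away from the branch point), so the comparison applies without incident. I expect this point—reassuring oneself that the comparison principle is applicable despite the $|x|^{1+\gamma}$ nonlinearity—to be the only place where care is required; the remainder is a direct computation.
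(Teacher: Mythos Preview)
Your argument is essentially identical to the paper's: both use the elementary bound $x^{2}\le x^{1+\gamma}\le x$ on $[0,1]$ to sandwich $x$ between the solutions of $\dot Y=\sqrt{a^{2}+bY^{2}}$ and $\dot X=\sqrt{a^{2}+bX}$, and then solve these two auxiliary ODEs explicitly to obtain the stated bounds. One small slip: the map $F(x)=\sqrt{a^{2}+bx^{1+\gamma}}$ is only $C^{1}$ near $x=0$ (since $x\mapsto x^{1+\gamma}$ has derivative $(1+\gamma)x^{\gamma}$, which is merely continuous there), not $C^{\infty}$; but $C^{1}$ is already more than enough for the scalar comparison principle, so your conclusion is unaffected.
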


\proof
Since $x(0) = 0$, there exists $s_0 = s_0(\varphi) > 0$ such that $|x(s)| \le 1$ for $0 \le s \le s_0(\varphi)$. Therefore, for such $s$, we have
\begin{equation}
|x(s)|^2 \le |x(s)|^{1+\gamma} \le |x(s)|.
\label{xcomparison}
\end{equation}
For notational simplicity, let
\[
A_{\varphi} := \sin^2\varphi, \qquad B_{\varphi} := \frac{1 + \cos^2\varphi}{2},
\]
and note that $A_{\varphi}, B_{\varphi} > 0$. Since $\frac{dx}{ds}(0) = \sin\varphi > 0$, we have
\[
\frac{dx}{ds} = \sqrt{A_{\varphi} + B_{\varphi} |x(s)|^{1+\gamma}}.
\]
Let $X_{\varphi}(s)$ and $Y_{\varphi}(s)$ be solutions of the related problems
\beaa
\frac{dX_{\varphi}}{ds} &=& \sqrt{A_{\varphi} + B_{\varphi} X_{\varphi}(s)}, \qquad X_{\varphi}(0) = 0,
\\
\frac{dY_{\varphi}}{ds} &=& \sqrt{A_{\varphi} + B_{\varphi} Y_{\varphi}^2(s) }, \qquad Y_{\varphi}(0) = 0.
\eeaa
It then follows from~\eq{xcomparison} that we have
\[
Y_{\varphi}(s) \le x(s) \le X_{\varphi}(s).
\]
Explicitly solving the differential equations for $X_{\varphi}(s), Y_{\varphi}(s)$ yields the inequality~\eq{comparisoninequality}.
\qed

\begin{Corollary}
Let
\[
s_0(\varphi) := \frac{4}{1 + \cos^2\varphi} \left[ - \sin\varphi + \sqrt{\sin^2\varphi + \frac{1}{2} \left( 1 + \cos^2\varphi \right)} \right].
\]
Then $0 < x_{\varphi}(s) \le 1$ for $0 < s \le s_0(\varphi)$.
\end{Corollary}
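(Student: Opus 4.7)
The plan is to read off both bounds from the preceding lemma and solve a quadratic. Write $A_\varphi := \sin^2\varphi$ and $B_\varphi := \tfrac12(1+\cos^2\varphi)$ as in the lemma, and fix $\varphi \in (0,\pi/2)$ so that $A_\varphi, B_\varphi > 0$.

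First I would handle positivity: as soon as $s > 0$, the left-hand inequality in \eq{comparisoninequality} gives
\[
x(s) \;\ge\; \frac{\sin\varphi}{\sqrt{B_\varphi}}\,\sinh\!\bigl(\sqrt{B_\varphi}\,s\bigr) \;>\; 0,
\]
valid on whatever interval the lemma applies.

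Next, the upper bound. I want to locate the first $s>0$ at which the right-hand member of \eq{comparisoninequality}, namely $P(s):=(\sin\varphi)\,s + \tfrac{1}{8}(1+\cos^2\varphi)\,s^2$, reaches the value $1$. Solving $P(s) = 1$ as a quadratic in $s$ and selecting the positive root gives exactly $s_0(\varphi)$ as defined in the statement; equivalently, $P(s) < 1$ for $s \in (0, s_0(\varphi))$ and $P(s_0(\varphi)) = 1$. The lemma's upper bound then yields $x(s) \le P(s) \le 1$ for all $s \in (0, s_0(\varphi)]$ on which \eq{comparisoninequality} is available.

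The only genuine point of care is a bootstrap: the lemma's inequalities required $|x(s)| \le 1$ (to justify $|x|^2 \le |x|^{1+\gamma} \le |x|$). So I would let
\[
s^\ast := \sup\{\,s \ge 0 : x(t) \le 1 \ \text{for all } t\in[0,s]\,\},
\]
which is positive by continuity of $x$ together with $x(0)=0$. On $[0,s^\ast]$ the lemma's upper bound is valid, hence $x(s) \le P(s)$ there. If $s^\ast < s_0(\varphi)$, then $x(s^\ast) \le P(s^\ast) < P(s_0(\varphi)) = 1$, contradicting maximality of $s^\ast$. Therefore $s^\ast \ge s_0(\varphi)$, the upper bound $x(s)\le P(s)\le 1$ holds throughout $(0, s_0(\varphi)]$, and combined with the positivity statement above this establishes $0 < x(s) \le 1$ on that range, finishing the proof.

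I expect the main (mild) obstacle to be precisely the bootstrap argument justifying that the lemma's hypothesis $|x|\le 1$ is maintained all the way out to $s_0(\varphi)$; everything else is algebra in the quadratic formula.
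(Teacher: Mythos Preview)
Your proof is correct and follows essentially the same route as the paper: use the lower bound $Y_\varphi(s)>0$ for positivity, and identify $s_0(\varphi)$ as the positive root of $X_\varphi(s)=1$ to get the upper bound. Your explicit bootstrap argument to ensure the hypothesis $|x|\le 1$ persists out to $s_0(\varphi)$ is in fact more careful than the paper, which simply asserts that ``this condition is satisfied if $X_\varphi(s)\le 1$'' without spelling out the continuity argument.
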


\begin{proof}
The inequality~\eq{comparisoninequality} is valid as long as $x(s) \le 1$. This condition is satisfied if $X_{\varphi}(s) \le 1$, which is true if $0 \le s \le s_0(\varphi)$, with $s_0(\varphi)$ as given. Moreover, $Y_{\varphi}(s) > 0$ for all $s > 0$. Therefore, the required result follows from~\eq{comparisoninequality}.
\end{proof}

\begin{Corollary}
 \label{C27III12.1}
For $0 \le s \le \min \{ s_0(\varphi), 4 \}$, we have
\[
0 \le x_-(s) \le \frac{1}{4} s^2.
\]
\end{Corollary}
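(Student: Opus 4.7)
The plan is to combine the upper bound from the preceding Lemma with the definition $x_-(s) = \inf_{\varphi \in (0, \pi/2)} |x(s)|$ and let $\varphi \downarrow 0$. From inequality~\eq{comparisoninequality}, for each $\varphi \in (0, \pi/2)$ and every $s \in [0, s_0(\varphi)]$ we have
$$
0 \le x(s) \le (\sin\varphi)\, s + \tfrac{1}{8}(1 + \cos^2\varphi)\, s^2,
$$
where the lower bound $x(s) \ge 0$ is read off from the $\sinh$ estimate (which also shows $x(s) > 0$ for $s > 0$). Hence $x_-(s) \ge 0$ is immediate, and the upper bound reduces to driving the right-hand side down by choosing $\varphi$ small.

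As $\varphi \downarrow 0$ the explicit formula defining $s_0(\varphi)$ gives $s_0(\varphi) \to 2$ by continuity, so for any fixed $s$ in the admissible range $[0, \min\{s_0(\varphi), 4\}]$, the preceding Lemma applies to all sufficiently small positive $\varphi$. In that limit $(\sin\varphi)\, s \to 0$ while $\tfrac{1}{8}(1 + \cos^2\varphi) \to \tfrac{1}{4}$, so the right-hand side of the displayed inequality converges to $\tfrac{1}{4}\, s^2$. Since $x_-(s) \le |x(s)|$ for every such $\varphi$, passing to the infimum yields $x_-(s) \le \tfrac{1}{4}\, s^2$.

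I do not anticipate any real obstacle here; once one reads the Lemma as providing a one-parameter family of upper envelopes and observes that the envelope degenerates to $\tfrac{1}{4}\, s^2$ at the boundary $\varphi = 0$, the argument is simply an infimum combined with the continuity of the explicit expression for $s_0(\varphi)$ at $\varphi = 0$. The only mildly subtle point worth stating carefully is that the range of validity $s \le s_0(\varphi)$ does not collapse in the limit, which is exactly what the bound $s \le \min\{s_0(\varphi), 4\}$ in the hypothesis guarantees.
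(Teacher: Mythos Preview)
Your proposal is correct and follows essentially the same approach as the paper: both arguments observe that the upper envelope $X_\varphi(s) = (\sin\varphi)\,s + \tfrac{1}{8}(1+\cos^2\varphi)\,s^2$ from the preceding Lemma tends to $\tfrac{1}{4}s^2$ as $\varphi \to 0$, and then pass to the infimum defining $x_-(s)$. The paper phrases this as $\inf_{0<\varphi<\pi/2} X_\varphi(s) = \lim_{\varphi\to 0} X_\varphi(s) = \tfrac{1}{4}s^2$ for fixed $s<4$, which is precisely your limiting computation; your added remarks about the range of validity are more explicit than the paper's, though you should be careful to distinguish the fixed $\varphi$ appearing in the hypothesis from the small parameter you send to zero.
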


\proof
For fixed $s < 4$, then
\[
\inf_{0 < \varphi < \frac{\pi}{2}} X_{\varphi}(s) = \lim_{\varphi \to 0} X_{\varphi}(s) = \frac{1}{4} s^2.
\]
The result then follows from the inequality~\eq{comparisoninequality}.
\qed

\medskip

As such, the inequality~\eq{comparisoninequality} gives an upper bound on the
function $x_-(s)$, but not as sharp a lower bound. For fixed values of
$\varphi$, however, it gives an upper and lower bound on $x_{\varphi}(s)$ in
terms of two functions that approach $s \sin\varphi$ as $s \to 0$.
}
\end{cocoExa}

\section{Domains of dependence}
 \label{CSDd}
\setcounter{equation}{0}

A set $\hyp$ will be called a \emph{topological hypersurface\/} if $\hyp$ is
embedded and if every point of $\hyp$ has a neighborhood which is a level
set of a coordinate within a continuous coordinate system on $\mcM$.

Let $\hyp$ be an \emph{achronal} topological hypersurface in a
space-time $(\mcM,g)$; by this we mean that no points on $\hyp$ are connected by a timelike curve.
In~\cite{GerochDoD,PenroseDiffTopo}
the \emph{future domain of dependence\/} $\mcDSpIg$ of $\hyp$ (denoted there
as $\mcDpS$, and which we will denote by $\mcDSpI$ when the metric is
understood) is defined as the set
of points
$p\in \mcM$ with the property that \emph{every past-directed
past-inextendible timelike curve starting at $p$ meets $\hyp$
precisely once}. The \emph{past domain of dependence\/} $\mcDSmI$ is
defined by changing \emph{past-directed past-inextendible\/} to
\emph{future-directed future-inextendible\/} above. Finally one sets
\bel{Cdd1}
  \mcDSI:= \mcDSpI\cup\mcDSmI
  \;.
\ee
This is also the definition adopted in~\cite{ChCausalityv1}.
Note that, to avoid various pathologies, both in~\cite{ChCausalityv1} and here, in
the definition we include the requirement that $\hyp$ is a topological
hypersurface.

Now, Hawking and Ellis~\cite{HE} use \emph{causal curves\/} instead of
\emph{timelike ones\/} in their definition of domain of dependence. This leads to
inessential differences for causally-plain space-times, and the arguments
presented in \cite{ChCausalityv1} carry over word for word to causally-plain
space-times with continuous metrics. On the other hand, various difficulties
appear when applying the definition based on timelike curves to
causally-bubbling space-times, and these difficulties disappear
when the definition based on causal curves is used. Therefore we develop here a
theory for continuous metrics based on causal curves.

We say that a set is \emph{acausal} if no pairs of points in this set can be connected by a causal curve.
 Given an \emph{acausal topological hypersurface\/} $\hyp$, the \emph{future domain of
dependence\/} $\mcDpScg$ of $\hyp$ is defined as the set of points
$p\in \mcM$ with the property that \emph{every past-directed
past-inextendible $g$-causal curve starting at $p$ meets $\hyp$
precisely once}. We will write $\mcDpSc$ for $\mcDpScg$ if the metric is
unambiguous. We will talk about \emph{$I$-domains of dependence\/} and
\emph{$J$-domains of dependence\/} whenever the distinction is necessary.
We emphasise that, unless explicitly
indicated otherwise, throughout the remainder of this section ``domain of
dependence'' will stand for ``$J$-domain of dependence''.
The \emph{past domain of dependence\/} $\mcDmScg$ is
defined in the obvious way, and of course
\bel{C23V11.1x}
 \mcDScg:= \mcDpScg\cup\mcDmScg\;.\ee

There exist obvious variations of the above based on $\cI$ and
on $\cJ$.
Clearly,
\bel{24V11.2}
 \mcDpScg \subset\mcDSpIg
 \;,\ee
etc.

\begin{cocoExa}
 \label{Exa21X11.0}
 {\rm
The following
example shows that the sets $\mcDpSc$ and $\mcDSpI$ are essentially different:
Let $g$ be the bubbling metric of Example~\ref{Exa19V11.1},
and let $\hyp$ be any spacelike hypersurface such that the origin lies on the edge of $\hyp$, for example
$\hyp=\{u=-\Lambda x\;, 0<x<\epsilon\}$, with $\Lambda>0 $ and $\epsilon>0$ chosen small enough so that $\hyp$ is spacelike, see Figure~\ref{F28IX11.1}.
\begin{figure}[ht]
\begin{center}
 \includegraphics[width=.4\textwidth]{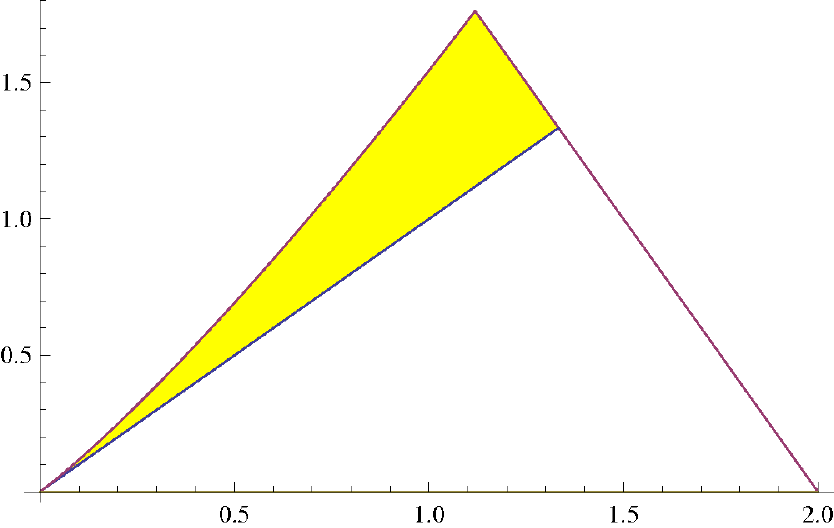}
\end{center}
 \caption{Introduce a coordinate system in which $\hyp:=\{t=0,x\in (0,2)\}$.  The $I$-domain of dependence of $\hyp$  includes the shaded bubble region, while the $J$-domain of dependence, which is the region between the $x$-axis and the lower graph, does not.\label{F28IX11.1}}
\end{figure}
Then the domain of
dependence $\mcDSpI$ includes the bubble region, while $\mcDpSc$ does not.
}
\end{cocoExa}

\medskip

For $C^{1,1}$ metrics, a key property of horizons is existence of generators.
It is not clear what the right notion
of a horizon generator is in the current example: the uppermost null geodesics of the bubble in
Figure~\ref{F28IX11.1}? the lowermost one? any null geodesic within the bubble?
Recall that generators play an important role in many causality arguments, e.g. the proof of uniqueness of maximal globally hyperbolic developments of vacuum initial data.
It would therefore be of interest to have a useful analogue of the definition of generators in the continuous setting.

\begin{cocoExa}
 \label{Exa21X11.1}
 {\rm
A simple gluing construction using two copies of the metric of Example~\ref{Exa19V11.1} provides a metric where the future bubble of $p$ is refocussed to a past bubble of $q$ as in Figure~\ref{F21X11.1}.
\begin{figure}[ht]
\begin{center}
 \includegraphics[width=.3\textwidth]{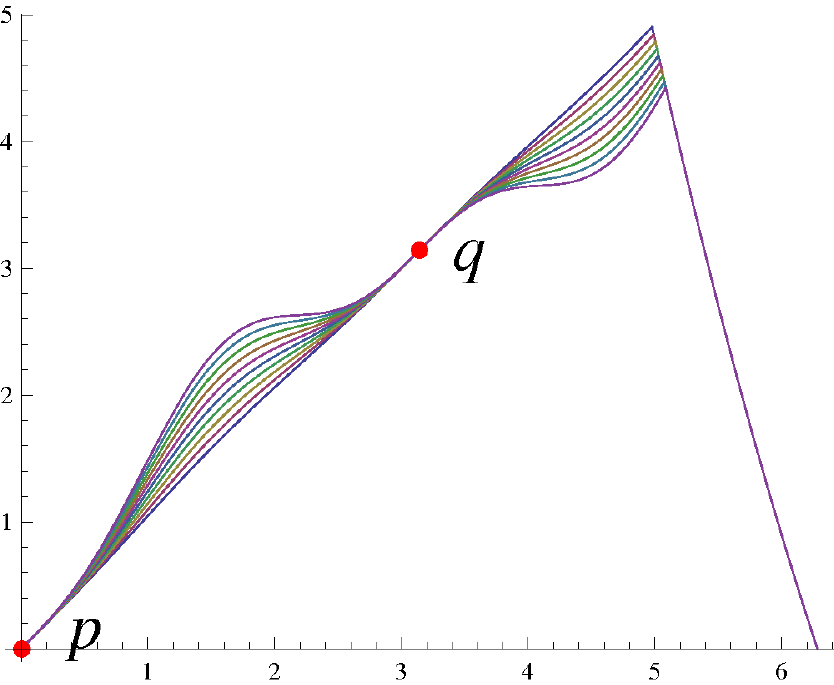}
\end{center}
 \caption{The future bubble of $p$ is refocussed to a past bubble of $q$.
   The part of the bubble between $p$ and $q$ will be referred to as the ``first bubble". The $J$-domain of dependence of $\hyp$ is the region strictly below all the graphs. Any point in the first bubble lies on the image of a causal curve obtained by following backwards in time the lowest null curve of the second bubble (which forms part of the boundary of $\mcDpSc$) until $q$ is reached, and then one of the causal curves filling the first bubble.\label{F21X11.1}}
\end{figure}
As in Example \ref{Exa21X11.0} we let $\hyp$ be any spacelike hypersurface the closure of which contains the origin, with the origin not in $\hyp$; in Figure~\ref{F21X11.1} this is taken to be the open interval $(0,2\pi)$ lying on the horizontal axis. Recall that generators of a Cauchy horizon are constructed by taking sequences of null curves which do not meet the partial Cauchy surface and which accumulate at some point of the horizon. In this example any point lying on the first bubble will lie on the image of such a curve.
}
\end{cocoExa}

We start our analysis of $\mcDSc$ with:

\begin{Proposition}\label{CPdd1}
Let $p\in \mcDpSc$, then
$$J^-(p)\cap J^+(\hyp)\subset \mcDpSc\;.$$
\end{Proposition}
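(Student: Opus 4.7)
The natural plan is to leverage the assumption $p\in\mcDpSc$ by reducing past-inextendible causal curves starting at $q$ to ones starting at $p$, via concatenation with a causal curve from $p$ down to $q$. Fix $q\in J^-(p)\cap J^+(\hyp)$ and let $\gamma\colon[0,\omega)\to\mcM$ be an arbitrary past-directed past-inextendible $g$-causal curve with $\gamma(0)=q$; the goal is to show that $\gamma$ meets $\hyp$ exactly once.

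By $q\in J^-(p)$ there exists a future-directed $g$-causal curve $\sigma\colon[0,1]\to\mcM$ from $q$ to $p$; reversing it yields a past-directed causal curve $\check\sigma$ from $p$ to $q$, which I would concatenate with $\gamma$ to produce a past-directed past-inextendible causal curve $\tilde\gamma$ starting at $p$. Since $p\in\mcDpSc$, the curve $\tilde\gamma$ meets $\hyp$ at a unique point $r$, and the plan is to show that (i) $r$ must lie on the $\gamma$-part of $\tilde\gamma$, so that $\gamma$ itself meets $\hyp$, and (ii) this is the only intersection of $\gamma$ with $\hyp$.

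The step I expect to be the main obstacle is (i): excluding the possibility that $r$ lies strictly inside the $\check\sigma$-part, i.e.\ strictly between $p$ and $q$. Suppose such an $r\in\hyp$ existed with $r\ne q$; then $r\in J^+(q)$ non-trivially via the arc of $\sigma$ from $q$ to $r$. Since $q\in J^+(\hyp)$, there is $s\in\hyp$ and a causal curve from $s$ to $q$ (with $s=q$ if $q\in\hyp$); concatenating it with $\sigma|_{q\to r}$ produces a causal curve from $s\in\hyp$ to $r\in\hyp$, and acausality of $\hyp$ forces $s=r$. In particular $q\notin\hyp$ (else $q=s=r$ would put two distinct causally related points of $\hyp$ on the arc), and we obtain a non-trivial closed causal loop through $q$ and $r$. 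Traversing this loop past-directedly infinitely many times gives a past-inextendible causal curve from $q$ that meets $\hyp$ at $r$ infinitely often; prepending $\check\sigma$ yields a past-inextendible causal curve from $p$ that meets $\hyp$ infinitely often, contradicting $p\in\mcDpSc$.

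With (i) in hand, $r$ lies on $\gamma$, so $\gamma$ meets $\hyp$; for uniqueness (ii), if $\gamma$ met $\hyp$ at a second point $r'\ne r$, then $\tilde\gamma$ would meet $\hyp$ at both $r$ and $r'$, again contradicting $p\in\mcDpSc$. Therefore every past-directed past-inextendible causal curve from $q$ meets $\hyp$ precisely once, which is exactly the statement $q\in\mcDpSc$.
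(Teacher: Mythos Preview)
Your argument is correct and follows essentially the same route as the paper's proof: concatenate a past-directed causal curve from $p$ to $q$ with an arbitrary past-inextendible causal curve from $q$, use $p\in\mcDpSc$ to obtain a unique intersection $r$ with $\hyp$, and then show $r$ must lie on the $q$-portion by deriving a contradiction with acausality of $\hyp$ from the hypothesis $q\in J^+(\hyp)$.

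The only difference is in the handling of the contradiction step. The paper simply observes that a causal curve from a point of $\hyp$ to $r\in\hyp$ already violates acausality and stops there. You go further and treat the borderline case $s=r$ separately, producing a closed causal loop and iterating it to obtain a past-inextendible curve from $p$ that meets $\hyp$ infinitely often. This extra care is harmless and arguably makes the edge case more explicit, but it is not really needed: under the paper's definition of acausality (no pair of points of $\hyp$ is joined by a causal curve), a nontrivial causal curve from $r$ back to $r$ already counts as a violation. Either way, the two arguments are the same in substance. One trivial omission in both write-ups is the case $q=p$, where no causal curve $\sigma$ exists; this is of course immediate since then $q\in\mcDpSc$ by hypothesis.
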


\proof Let $q\in J^-(p)\cap J^+(\hyp)$, thus there exists a
past-directed causal curve $\gamma_0$ from $p$ to $q$.
Let $\gamma_1$ be a past-inextendible causal curve
starting at $q$. The curve $\gamma:=\gamma_0\cup\gamma_1$ is a
past-inextendible past-directed causal curve starting at $p$,
thus $\gamma$ meets $\hyp$ precisely once at some point $r\in\hyp$.
Suppose that $\gamma$ passes through $r$ before passing through
$q$. Since $q\in J^+(\hyp)$, there exists a future directed causal curve $\gamma_2$ from $\hyp$ to $q$.
Then the curve obtained following $\gamma_2$ from $\hyp$ to $q$, and then $\gamma_0$ (backwards) from $q$ to $r$ would be a future directed causal curve from $\hyp$ to $\hyp$,
contradicting acausality of $\hyp$. This shows that $\gamma$ must
meet $\hyp$ after passing through $q$, hence $\gamma_1$ meets
$\hyp$ precisely once.
\qed

\medskip

Let $\hyp$ be achronal. Recall that a set $\mcO$ is said to form a
\emph{one-sided future neighborhood\/} of $p\in \hyp$ if there exists an
open set $\mcU\subset \mcM$ such that $\mcU$ contains $p$ and
$$\mcU\cap J^+(\hyp)\subset \mcO\;.$$

Given an acausal topological hypersurface $\hyp$, it immediately follows from the definition of $\mcDpSc$ that
\bel{24V11.1}
 \hyp \subset \mcDpSc
 \;.
\ee
In particular $\mcDSc$ is never empty. It turns out that equality never occurs for differentiable spacelike
hypersurfaces $\hyp$:

\begin{Proposition}\label{PCdd0}
Let $\hyp$ be an acausal spacelike $C^1$-hypersurface in $\mcM$.
 Then $\mcDSc$ is open.
\end{Proposition}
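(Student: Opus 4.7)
The plan is to prove openness by contradiction using the limit-curve Theorem~\ref{T1IV11.1}. Fix $p\in\mcDSc$; by time-reversal symmetry I may assume $p\in\mcDpSc$. If $\mcDSc$ fails to be open at $p$, there is a sequence $q_n\to p$ with $q_n\notin\mcDSc$, in particular $q_n\notin\mcDpSc$. By definition of $\mcDpSc$, for each $n$ there exists a past-inextendible past-directed $g$-causal curve $\gamma_n$ starting at $q_n$ which does not meet $\hyp$ precisely once. Acausality of $\hyp$ rules out two distinct intersections of a single causal curve with $\hyp$, so necessarily $\gamma_n\cap\hyp=\emptyset$. Parameterize each $\gamma_n$ by $h$-arc length, so that (by completeness of $h$) the parameter ranges over all of $[0,\infty)$.

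Applying Theorem~\ref{T1IV11.1} together with a standard diagonal/Arzel\`a--Ascoli argument on the intervals $[0,k]$, I extract a subsequence (still denoted $\gamma_n$) converging uniformly on compact sets to a causal curve $\gamma\colon[0,\infty)\to\mcM$ with $\gamma(0)=p$, still parameterized by $h$-arc length, hence past-inextendible because $h$ is complete. Since $p\in\mcDpSc$, the curve $\gamma$ meets $\hyp$ at a unique point $r=\gamma(s_0)$.

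The $C^1$ spacelike structure of $\hyp$ at $r$ provides a coordinate neighborhood $N$ of $r$ and a $C^1$ function $t$ on $N$ with $\hyp\cap N=\{t=0\}$ and $\nabla t$ past-directed timelike throughout $N$ (shrinking $N$ if necessary; only continuity of $g$ is used for the propagation of timelikeness). Along the past-directed $g$-causal curve $\gamma$ one has $dt(\dot\gamma)=g(\nabla t,\dot\gamma)<0$ almost everywhere, so $t$ is strictly decreasing along $\gamma$, and there exist parameters $s_-<s_0<s_+$ with $\gamma(s_\pm)\in N$ and $t(\gamma(s_-))>0>t(\gamma(s_+))$. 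Uniform convergence on $[s_-,s_+]$ then forces $\gamma_n(s_\pm)\in N$ with $t(\gamma_n(s_-))>0$ and $t(\gamma_n(s_+))<0$ for all large $n$; by continuity of $t\circ\gamma_n$ there is some $s^*\in(s_-,s_+)$ with $\gamma_n(s^*)\in\hyp\cap N\subset\hyp$, contradicting $\gamma_n\cap\hyp=\emptyset$.

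The main obstacle is the second step: ensuring that the limit curve is genuinely past-inextendible, so that $\gamma$ actually has the chance to run into $\hyp$ rather than terminating prematurely. This is exactly where the $h$-arc length parameterization and completeness of $h$ are essential; without them the limit extracted from Theorem~\ref{T1IV11.1} could be too short to exploit $p\in\mcDpSc$. A secondary subtlety is exhibiting the coordinate function $t$ on $N$ with $\nabla t$ timelike, but here the $C^1$ spacelikeness of $\hyp$ directly supplies a timelike covector $dt(r)$, and continuity of $g$ propagates the condition to a neighborhood.
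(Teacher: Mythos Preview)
Your overall strategy---contradict via the limit-curve Theorem~\ref{T1IV11.1}---is exactly the paper's, and the local time-function/IVT argument you use at the crossing point $r$ is a clean substitute for the paper's appeal to Lemma~\ref{LPCdd0} (which builds an open neighbourhood of $\hyp$ inside $\mcDSc$ and then observes that once the $\gamma_n$ enter this neighbourhood they are forced to meet $\hyp$).

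There is, however, a genuine gap in the boundary case $p\in\hyp$. You take the $\gamma_n$ to be merely \emph{past}-inextendible curves \emph{starting} at $q_n$, and then the limit $\gamma$ is defined on $[0,\infty)$ with $\gamma(0)=p$. If $p\in\hyp$ then the unique intersection point is $r=p$ and $s_0=0$, so no parameter $s_-<s_0$ exists and your crossing argument collapses. The fix is to use the full strength of $q_n\notin\mcDSc$: since $q_n\notin\mcDpSc$ \emph{and} $q_n\notin\mcDmSc$, there is a past-inextendible past-directed causal curve and a future-inextendible future-directed causal curve from $q_n$, neither meeting $\hyp$; concatenate them to obtain a (two-sided) \emph{inextendible} causal curve through $q_n$ avoiding $\hyp$. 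The limit $\gamma$ is then inextendible through $p$ with parameter in all of $\R$, and $s_-<s_0<s_+$ always exist. This is precisely what the paper does (note ``inextendible $\ldots$ \emph{through} $q_i$'' in its proof). Alternatively, you can treat $p\in\hyp$ separately by the cylindrical-neighbourhood estimate of Lemma~\ref{LPCdd0}; that lemma is doing exactly the work your argument is missing at the boundary.

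A small cosmetic point: for the IVT step you need $\gamma_n([s_-,s_+])\subset N$, not just $\gamma_n(s_\pm)\in N$; this follows from uniform convergence on $[s_-,s_+]$ together with compactness of $\gamma([s_-,s_+])\subset N$, but should be stated.
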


We start with a lemma:

\begin{Lemma}
\label{LPCdd0}
Let $\hyp$ be an acausal spacelike $C^1$-hypersurface in $\mcM$.
 Then
\begin{enumerate}
\item $\mcDpSc\ne \hyp$.
\item For any $q\in \hyp
\cap \cI^-(\mcDpSc)$
the set $\mcDpSc$ forms a one-sided future
neighborhood of $q$.
\end{enumerate}
\end{Lemma}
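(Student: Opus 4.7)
The plan is to prove the two parts separately. Part~(1) is a quantitative comparison of the $C^1$ spacelike hypersurface $\hyp$ against the $g$-causal cones in a cylindrical chart; part~(2) is an almost-immediate consequence of the openness of $\cI^-$ combined with Proposition~\ref{CPdd1}.

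For part~(1), fix any $q \in \hyp$, translate so that $q$ is the origin, and take a cylindrical neighbourhood $\mcU = I \times \mcV$ of $q$ in which $g_{\mu\nu}(q) = \eta_{\mu\nu}$. Since $\hyp$ is a $C^1$ spacelike hypersurface through $q$, after possibly shrinking $\mcU$ the intersection $\hyp \cap \mcU$ is the graph $\{t = \psi(\vec x)\}$ of a $C^1$ function with $\psi(\vec 0) = 0$ and $|\nabla \psi(\vec 0)| < 1$. By continuity, $\mcU$ may be shrunk further so that simultaneously (i) the coordinate slopes of all $g$-null cones are bounded below by some $\alpha \le 1$, and (ii) $\beta := \sup_{\vec x \in \mcV} |\nabla \psi(\vec x)| < \alpha$. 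Any past-directed $g$-causal curve in $\mcU$ may then be parameterised by $t$, and satisfies $|d\vec x/dt| \le 1/\alpha$. Along such a curve the function $F(t) := t - \psi(\vec x(t))$ obeys
\[
\frac{dF}{dt} \;=\; 1 - \nabla \psi \cdot \frac{d\vec x}{dt} \;\ge\; 1 - \frac{\beta}{\alpha} \;>\; 0,
\]
so $F$ strictly decreases as $t$ decreases. For $p = (t_0, \vec 0)$ with $t_0 > 0$ small, every past-inextendible past-directed causal curve from $p$ must therefore reach $\{F = 0\} = \hyp$ within a $t$-descent of length at most $t_0/(1 - \beta/\alpha)$; picking $t_0$ small enough that the corresponding spatial excursion of size at most $t_0/[\alpha(1-\beta/\alpha)]$ stays inside $\mcV$, the crossing occurs inside $\mcU$. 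Acausality of $\hyp$ forbids a second crossing, so $p \in \mcDpSc \setminus \hyp$, which proves~(1).

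For part~(2), let $q \in \hyp \cap \cI^-(\mcDpSc)$, and pick $p \in \mcDpSc$ together with a future-directed l.u.-timelike curve from $q$ to $p$, so that $q \in \cI^-(p)$. The past version of Proposition~\ref{P9IV11.1} shows that $\cI^-(p)$ is open; let $\mcW \ni q$ be an open neighbourhood contained in $\cI^-(p)$. For every $r \in \mcW \cap J^+(\hyp)$, the past analogue of the inclusions \eq{28IX11.1} gives $r \in \cI^-(p) \subset J^-(p)$, and Proposition~\ref{CPdd1} applied at $p \in \mcDpSc$ then yields $r \in J^-(p) \cap J^+(\hyp) \subset \mcDpSc$. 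Hence $\mcW \cap J^+(\hyp) \subset \mcDpSc$, i.e., $\mcDpSc$ is a one-sided future neighbourhood of $q$.

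The main obstacle, modest as it is, is the quantitative set-up in part~(1): one must shrink the cylindrical chart until the Lipschitz constant of the graphing function of $\hyp$ is strictly smaller than the reciprocal of the maximum spatial velocity of $g$-causal curves. Once this is arranged, both the existence of the crossing and its confinement to $\mcU$ are immediate, and part~(2) is essentially formal.
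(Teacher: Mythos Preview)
Your proof is correct and follows essentially the same approach as the paper. In part~(1) the paper performs an additional Lorentz rotation so that $\nabla\psi(\vec 0)=0$ and then works with explicit constants ($1/16$, $1/2$, $\rho/4$), whereas you keep general $\alpha,\beta$ and organise the estimate via the monotone function $F(t)=t-\psi(\vec x(t))$; the content is identical. One tiny omission: when you bound the spatial excursion by $t_0/[\alpha(1-\beta/\alpha)]$ to stay inside $\mcV$, you should simultaneously require the $t$-descent $t_0/(1-\beta/\alpha)$ to remain in the interval $I$, so the curve cannot escape $\mcU$ through the bottom face before meeting $\hyp$; this is of course automatic for $t_0$ small. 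For part~(2) your argument via openness of $\cI^-(p)$ and Proposition~\ref{CPdd1} is exactly the paper's ``alternative'' proof.
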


\proof
1. Let $q\in \hyp$ and let $\mcU$ be a cylindrical neighborhood of $q$ with local coordinates $\{x^\mu\}$. Set
$$
 \mcU_\hyp=\{ (x^i): \mbox{the coordinate curve $t\mapsto (t,x^i)$ intersects $\hyp$}\}\subset \{x^0=0\}
 \;.
$$
Then $\mcU_\hyp$ is an open subset of $\R^n$
by the invariance-of-domain theorem, and $\hyp$ is a graph of a continuous function over $\mcU_\hyp$. Differentiability of $\hyp$, spacelikeness and the implicit function theorem show that the graphing function of $\hyp$ is differentiable.
Performing a linear Lorentz coordinate transformation so that the unit normal to $\hyp$ at the origin becomes $\partial_0$, and passing to a subset of $\mcU$ if necessary, we obtain a cylindrical neighborhood in which $\hyp$ remains a graph of a differentiable function, say $f$, with the coordinate-length of the gradient $|\partial f|_\delta$ smaller than $1/16$. Thus $f$ is Lipschitz-continuous with Lipschitz constant less than $1/16$. As $f(\vec 0)=0$, in each coordinate ball centred at the origin we have
\bel{24V11.5}
 |f(\vec x)|\le \frac 1{16} |\vec x|
  \;.
\ee
Choose a $\rho>0$ small enough so that
$$
 \mcU_\rho:= (- \rho, \rho)\times B(\rho) \subset \mcU
 \;.
$$
It follows from \eq{24V11.5} that $\hyp$ separates $\mcU_\rho$ in two connected components.

Let $\gamma$ be an inextendible causal curve in space-time meeting $\mcU_{\rho}$
in a connected set, then either $\gamma$ lies entirely in one connected component of $\mcU_\rho\setminus\hyp$, or $\gamma$ meets $\hyp$. In the former case $\gamma$ exits $\mcU_{\rho}$
through $(-\rho,\rho)\times \partial B(\rho)$.

Since the coordinate-slopes of the light-cones are bounded below by $1/2$, at each differentiability point of $\gamma$ we have
\bel{24V11.3}
 \left|\frac{d x^0}{ds}\right| \ge \frac 12 \left|\frac{d \vec x }{ds}\right|_\delta
 \;,
\ee
where, again, $|\cdot|_ \delta$ denotes coordinate length. Causality of $\gamma$ implies that $x^0$ is strictly monotonous on
$\gamma$, hence $\gamma$ can be parameterised by $x^0$. Writing $\gamma(x^0)=(x^0,\vec \gamma (x^0))$, Eq.~\eq{24V11.3} implies
$$
 \left|\frac{d \vec \gamma }{dx^0}\right|_\delta \le 2
 \;.
$$
Suppose that $\gamma(0)\in \mcU_{\rho/4}\cap J^+(\hyp)$,
hence $\vec \gamma(0) \in B(\rho/4)$, and it follows from the last inequality that $\gamma$ will not reach
the boundary $(-\rho,\rho)\times \partial B (\rho)$ for $|x^0|$ smaller than $\frac 38 \rho> \frac 14 \rho$. On the other hand, the graphing
function $f$ does not exceed $\pm \frac 1{16} \rho$. So $\gamma$ must intersect $\hyp$ before exiting $\mcU_\rho$.
We have thus shown that
\bel{25V11.6}
\mcU_{\rho/4}\cap J^+(\hyp) \subset \mcDpSc
 \;.
\ee
In particular, point 1. is established.

2. Since $\mcU_{\rho/4}$ is open,
point 2. follows from \eq{25V11.6}. Alternatively, let $q\in \cI^-(\mcDpSc)\cap\hyp$;
this set is non-empty by point 1, thus there exists $p\in
\mcDpSc\setminus\hyp$ such that $q\in \cI^-(p)$. Then
$\cI^-(p)\cap J^+(\hyp)$ is a future one-sided neighborhood of
$q$ which is contained in $\mcDpSc$ by Proposition \ref{CPdd1}.
\qed

\medskip

We can now pass to the
\medskip

{\noindent\sc Proof of Proposition~\ref{PCdd0}:}
By Lemma~\ref{LPCdd0} the domain of dependence $\mcDSc$ contains an open neighborhood $\mcU$ of $\hyp$.

Suppose that the result is false.
Changing time-orientation if necessary there exists a point $q\in \mcDpSc$,
a sequence of points $q_i\to q$ and a sequence of inextendible causal curves $\gamma_i$
through $q_i$ that do not meet $\hyp$.
Let $\gamma$ be an accumulation curve of the $\gamma_i$'s.
Then $\gamma$ is an inextendible causal curve through $q$, and hence meets $\hyp$.
But then the $\gamma_i$'s must intersect $\mcU\subset\mcDSc$ for $i$ sufficiently large enough, and hence must meet $\hyp$.
This gives a contradiction, and proves the Proposition.
\qed

\medskip

The next theorem shows that acausal topological
hypersurfaces can
be used to produce globally hyperbolic space-times:

\begin{Theorem}\label{CTdd1}
Let $\hyp$ be a spacelike acausal $C^1$-hypersurface
in $(\mcM,g)$. Then $\mcDSc$ equipped
with the metric obtained by restriction from $g$ is globally
hyperbolic.
\end{Theorem}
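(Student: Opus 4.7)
\medskip

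The plan is to verify the two defining conditions of global hyperbolicity for $(\mcDSc, g|_{\mcDSc})$: stable causality, and compactness of causal diamonds $J^+_g(p)\cap J^-_g(q)$ with $p,q\in\mcDSc$.

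\medskip

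\textbf{Causality.} First I would rule out closed causal curves in $\mcDSc$. Suppose $\gamma$ is a closed causal curve through some $p \in \mcDpSc$. Iterating $\gamma$ yields a past-inextendible past-directed causal curve emanating from $p$; this curve either never meets $\hyp$ (contradicting $p \in \mcDpSc$) or meets $\hyp$ at some point, in which case iterating further provides additional intersection points, again contradicting that the intersection is precisely one. The case $p \in \mcDmSc$ is symmetric, so $\mcDSc$ is causal.

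\medskip

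\textbf{Compactness of diamonds.} The key intermediate step is to show that for each $q \in \mcDpSc$ the slice $J^-_g(q) \cap \hyp$ is compact. Given a sequence $y_n$ in this set, choose $g$-causal curves $\gamma_n$ from $y_n$ to $q$ and extend each past $y_n$ as a past-inextendible past-directed causal curve. By Theorem~\ref{T1IV11.1}, a subsequence accumulates to a past-inextendible past-directed $g$-causal curve $\gamma$ through $q$. Since $q \in \mcDpSc$, $\gamma$ meets $\hyp$ at a unique point $y^*$, and acausality of $\hyp$ together with the convergence $\gamma_n\to\gamma$ forces $y_n \to y^*$ along a subsequence. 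From this, compactness of $J^+_g(\hyp) \cap J^-_g(q)$ follows by applying the limit-curve theorem again to the segments of the $\gamma_n$'s between $y_n$ and $q$: any sequence $x_n$ in this set lies on such a curve, and the limit curve passes through an accumulation point of the $x_n$'s. For $p,q \in \mcDpSc$ the diamond $J^+_g(p) \cap J^-_g(q)$ is a closed subset of the compact set $J^+_g(\hyp) \cap J^-_g(q)$ (closedness coming again from Theorem~\ref{T1IV11.1}), hence compact; by Proposition~\ref{CPdd1} it sits inside $\mcDpSc\subset\mcDSc$. The mixed cases ($p \in \mcDpSc$ with $q \in \mcDmSc$, or vice versa) are handled either by decomposing diamonds along $\hyp$ and combining the two one-sided estimates, or by observing that acausality of $\hyp$ forces such diamonds to lie in $J^+(\hyp)\cap J^-(\hyp)$ (which is $\hyp$ itself in a sense made precise by Proposition~\ref{CPdd1}).

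\medskip

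\textbf{Stable causality.} I expect this to be the main obstacle. The approach is to produce a smooth widening $\hg \succ g$ defined on a neighbourhood of $\mcDSc$ for which $\hyp$ remains acausal and for which the analogue of the causality argument above still runs; this yields causality of $(\mcDSc, \hg)$ and hence stable causality of $(\mcDSc, g)$. Existence of such a widening uses that $\hyp$ is $C^1$ and $g$-spacelike, so that $\hyp$ is $\hg$-spacelike for any sufficiently close smooth widening obtained from Proposition~\ref{P13III11.1}; local acausality then follows from Proposition~\ref{P10IV11.1}. The delicate point is \emph{global} $\hg$-acausality of $\hyp$: cover $\hyp$ by a locally finite family of cylindrical neighbourhoods, use the compactness of $J^-_g(q) \cap \hyp$ established above to bound how far a prospective $\hg$-causal curve linking two points of $\hyp$ could wander, and conclude that for $\hg$ close enough to $g$ no such curve exists. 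Should this direct construction prove stubborn, the alternative is to adapt the time-function construction announced in the introductory footnote (Theorem~\ref{T9IX11.1}) directly to $\mcDSc$, which by Geroch-type arguments upgrades causality and compact diamonds to stable causality. Either way, the obstacle is precisely the reconciliation of the merely continuous character of $g$ with the robustness required of the acausality of $\hyp$ under cone-widening.
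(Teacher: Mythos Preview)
Your proposal has a genuine gap in the causality step. You establish only \emph{causality} (no closed causal curves) and then try to promote this to \emph{stable} causality by widening the cones. As you yourself flag, global $\hg$-acausality of $\hyp$ under widening is the crux, and your sketch does not deliver it: the compactness of $J^-_g(q)\cap\hyp$ bounds where a single $\hg$-causal curve starting near $q$ can hit $\hyp$, but gives no uniform control over $\hg$-causal curves joining two arbitrary points of $\hyp$. Your fallback to Theorem~\ref{T9IX11.1} is circular, since the proof of that theorem explicitly invokes global hyperbolicity of $\mcDpSc$ (i.e.\ the very result you are trying to prove).

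The paper sidesteps the widening issue entirely by proving \emph{strong} causality directly via a limit-curve argument. One assumes strong causality fails at some $p\in\mcDSc$: there exist inextendible causal curves $\gamma_n$ that leave a fixed cylindrical neighbourhood of $p$ and return to $B_h(p,1/n)$ at parameters $s_n$ bounded away from zero. An accumulation curve $\gamma$ through $p$ must meet $\hyp$, hence enters both $\cI^+(\hyp)$ and $\cI^-(\hyp)$; achronality of $\hyp$ then forces the return parameters $s_n$ to be uniformly bounded, so (after passing to a subsequence) they converge and one obtains a \emph{periodic} causal curve through $p$. This contradicts acausality of $\hyp$ exactly as in your closed-curve argument. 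No cone-widening is needed.

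For compactness of $J^+(p)\cap J^-(q)$, your detour through compactness of $J^-(q)\cap\hyp$ is legitimate in spirit, but the step ``$J^+(p)\cap J^-(q)$ is a closed subset of the compact set $J^+(\hyp)\cap J^-(q)$'' is not as immediate as you suggest: $J^+(p)$ need not be closed before global hyperbolicity is established. The paper instead argues directly on the diamond: given $r_n\in J^+(p)\cap J^-(q)$, take causal curves from $p$ through $r_n$ to $q$, extend them to be inextendible, pass to an accumulation curve $\gamma$ through $p$; since $p\in\mcDSc$, $\gamma$ eventually enters $\cI^+(\hyp)$ at some parameter $s_+$, and acausality of $\hyp$ forces the parameters $s_n$ with $\gamma_n(s_n)=r_n$ to satisfy $s_n\le s_+$, whence a convergent subsequence of the $r_n$'s. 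This is close to what you do for $J^-(q)\cap\hyp$, just applied one level up.
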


\proof
We start by noting that, by Proposition~\ref{PCdd0}, $\mcDSc$ is an open subset of $\mcM$, hence a manifold.

Suppose that $\mcDSc$
is not strongly causal. Then there exists $p\in \mcDSc$ and a
sequence $\gamma_n \colon \R\to\mcDSc$ of inextendible causal curves which
exit the $h$-distance geodesic ball $B_h(p,1/n)$ (centred at $p$ and
of radius $1/n$) and reenter $B_h(p,1/n)$ again. Changing
time-orientation of $\mcM$ if necessary, without loss of generality
we may assume that $p\in I^-(\hyp)\cup \hyp$. Since the property
``leaves and reenters" is invariant under the change $\gamma_n(s)\to
\gamma_n(-s)$, there is no loss of generality in
assuming the $\gamma_n$'s to be future-directed. Finally, we
reparameterize the $\gamma_n$'s by $h$--distance, with
$\gamma_n(0)\in B_h(p,1/n)$. Then, there exists a sequence $s_n>0$
such that $\gamma_n(s_n)\in B_h(p,1/n)$, with $\gamma_n(0)$ and
$\gamma_n(s_n)$ lying on different connected components of
$\gamma\cap B_h(p,1/n)$.

Let $\mcO$ be a cylindrical neighborhood of $p$, as in
Definition~\ref{D9IV11.1}, p.~\pageref{D9IV11.1}, and let $n_0$ be large enough so that
$\overline {B_h(p,1/n_0)} \subset \mcO$. Note that the local
coordinate $x^0$ on $\mcO$ is monotonous along every connected
component of $\gamma_n\cap \mcO$ which implies, for $n\ge n_0$, that
any causal curve which exits and reenters $B_h(p,1/n)$ has also to
exit and reenter $\mcO$. This in turn guarantees the existence of an
$\epsilon>0$ such that $ s_n>\epsilon$ for all $n\ge n_0$.

Let $\gamma$ be an accumulation curve
through $p$ of the $\gamma_n$'s, passing to a subsequence if necessary the $\gamma_n$'s
converge uniformly to $\gamma$ on compact subsets of $\R$. The curve
$\gamma$ is causal and meets $\hyp$, which implies that
there exist $s_\pm\in\R$ such that $\gamma(s_-)\in \cI^-(\hyp)$ and
$\gamma(s_+)\in \cI^+(\hyp)$. Since $\hyp$ is acausal and $\gamma$ is
future directed we must have $s_-<s_+$. Since the $\cI^\pm(\hyp)$'s are open, and
since (passing to a subsequence if necessary)
$\gamma_n(s_\pm)\to\gamma(s_\pm)$, we have $\gamma_n(s_\pm)\in
\cI^\pm(\hyp)$ for $n$ large enough.

Note that stable causality is now violated at $\gamma(s_-)$, so replacing $p$
by $\gamma(s_-)$ if necessary we can without loss of generality assume that
$p\in \cI^-(\hyp)$.

Now, we claim that $s_n \le s_+$ for
$n$ large enough: Otherwise, for $n$ large, we could follow
$\gamma_n$ to the future from $\gamma_n(s_+)\in \cI^+(\hyp)\subset I^+(\hyp)$
to $\gamma_n(s_n)\in I^-(\hyp)$, which is not possible if $\hyp$ is
achronal.

It follows that there exists $ s_*\in \R$ such that, passing again to
a subsequence if necessary, we have $s_n\to s_*$. Note that
$\gamma_n(s_*)\to p$ and that $s_*\ge \epsilon$. Since
$\gamma_n|_{[0,s_*]}$ converges uniformly to $\gamma|_{[0,s_*]}$, we
obtain an inextendible periodic causal curve $\gamma'$ through $p$
by repetitively circling from $p$ to $p$ along
$\gamma|_{[0,s_*]}$.
This is clearly incompatible with the fact that $\gamma'$ has to meet $\hyp$, and that $\hyp$ is acausal, and ends the proof of stable causality.

To finish the proof, we need to prove compactness of the sets
of the form
$$J^+(p)\cap J^-(q)\;,\quad p,q\in \mcDSc\;.$$
If $p$ and $q$ are such that this set is empty or equals $\{p\}$
there is nothing to prove. Otherwise, consider a sequence $r_n \in
J^+(p)\cap J^-(q)$. One of the following is true:
\begin{enumerate}
\item we have $r_n\in J^-(\hyp) $ for all $n\ge n_0$, or
\item there exists a subsequence, still denoted by $r_n$, such that $r_n\in
J^+(\hyp)$.
\end{enumerate}
In the second case we change time-orientation, pass to a
subsequence, rename $p$ and $q$, reducing the analysis to the first
case. Note that this leads to $p\in J^-(\hyp)$.

By definition, there exists a future directed causal curve
$\hat\gamma_n$ from $p$ to $q$ which passes through $r_n$,
\bel{Cedd1}
 \hat \gamma_n(s_n)=r_n\;.
 \ee
Let $\gamma_n$ be any $\distb$-parameterized, inextendible future
directed causal curve extending $\hat\gamma_n$, with
$\gamma_n(0)=p$. Let $\gamma$ be an inextendible accumulation curve
of the $\gamma_n$'s, then $\gamma$ is a future inextendible causal
curve through $
 p\in \mcDmSc$. Thus there
exists $s_+$ such that $\gamma(s_+)\in \cI^+(\hyp)$.
Passing to a
subsequence, the $\gamma_n$'s converge uniformly to $\gamma$ on
$[0,s_+]$, which implies that for $n$ large enough the
$\gamma_n|_{[0,s_+]}$'s enter $\cI^+(\hyp)$.
This, together with
acausality of $\hyp$, shows that the sequence $s_n$ defined by
\eq{Cedd1} is bounded; in fact we must have $0\le s_n\le s_+$.
Eventually passing to another subsequence we thus have $s_n\to
s_\infty$ for some $s_\infty\in \R$. This implies
$$
 r_n\to \gamma(s_\infty)\in J^+(p)\cap J^-(q)\;,
$$
which had to be established.
 \qed

\bigskip

We continue with an argument inspired by~\cite{NavarroMinguzzi}:

\begin{Theorem}
 \label{T9IX11.1}
 Let $\hyp$ be an acausal spacelike $C^1$ hypersurface
 in $\mcM$ and let $g$ be continuous on $\mcM$. There exists a smooth metric $\hg\succ g$ on $\mcDpSc$ such that $(\mcDpSc,\hg)$ is globally hyperbolic with Cauchy surface $\hyp$.
\end{Theorem}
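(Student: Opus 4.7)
The plan is to obtain $\hg$ from a smooth $g$-Cauchy temporal function on $\mcDSc$, using it to widen the $g$-cones on $\mcDpSc$ by an amount that is tapered to vanish at the Cauchy horizon $\partial\mcDpSc\setminus\hyp$.

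By Theorem~\ref{CTdd1}, $(\mcDSc,g)$ is globally hyperbolic. The continuous-metric analogue of the Bernal--S\'anchez theorem---established in greater generality by Fathi and Siconolfi~\cite{FathiSiconolfiTime}, to whose ideas this proof is indebted---produces a smooth function $\tau\colon \mcDSc\to\R$ with $\hyp=\tau^{-1}(0)$, with $d\tau$ everywhere $g$-timelike, and with each level set $\{\tau=c\}$ a smooth spacelike Cauchy hypersurface of $(\mcDSc,g)$; in particular $\mcDpSc=\mcDSc\cap\{\tau\ge 0\}$. This is the step where the external input is essential.

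With $\tau$ in hand, I would define $\hg$ on $\mcDpSc$ by a formula of the schematic form
$$
\hg := g - \phi\,(d\tau)\otimes(d\tau),
$$
where $\phi\colon \mcDpSc\to (0,\infty)$ is a smooth positive function whose decay near $\partial\mcDpSc\setminus\hyp$ (measured, say, by an $h$-distance to the horizon) is faster than linear. Because $d\tau$ is $g$-timelike, its $g$-orthogonal complement is spacelike at each point; consequently $d\tau(X)=g(\nabla_{g}\tau,X)$ cannot vanish on any non-zero $g$-null $X$, and the identity $\hg(X,X)=g(X,X)-\phi\,(d\tau(X))^{2}=-\phi\,(d\tau(X))^{2}<0$ on such $X$ shows that $\hg\succ g$ pointwise on $\mcDpSc$. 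The \emph{rate} at which $\phi$ vanishes at the horizon is the delicate point: as the bubbling examples of this paper illustrate (Examples~\ref{Exa19V11.1}--\ref{Exa22V11.10}), the rate of cone-opening controls whether causal curves can complete an escape in finite parameter time, and choosing $\phi$ to vanish faster than linearly at $\partial\mcDpSc\setminus\hyp$ forces every $\hg$-causal curve to approach the horizon only asymptotically. A partition-of-unity argument adapted to the cylindrical neighbourhoods of Proposition~\ref{P10IV11.1} yields a smooth $\phi$ with the requisite decay, and an adjustment such as adding a large positive multiple of $d\tau\otimes d\tau$ ensures $\hg$ has Lorentzian signature everywhere.

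It then remains to verify that $\hyp$ is a Cauchy surface of $(\mcDpSc,\hg)$. Along any $\hg$-causal curve $\gamma$ the inequality $\hg(\dot\gamma,\dot\gamma)\le 0$ together with $\dot\gamma\ne 0$ almost everywhere forces $d\tau(\dot\gamma)\ne 0$, so $\tau$ is strictly monotonic on $\gamma$, and a direct bookkeeping shows that the $h$-arclength of $\gamma$ over a $\tau$-slab is controlled by the $\tau$-variation weighted by an appropriate negative power of $\phi$. For a past-inextendible (in $\mcDpSc$) past-directed $\hg$-causal curve, $\tau(\gamma(s))$ decreases to some $\tau_{\infty}\ge 0$; the tapering of $\phi$ rules out an approach to $\partial\mcDpSc\setminus\hyp$ in finite parameter, while completeness of $h$ rules out escape to infinity within a $\tau$-slab, so $\tau_{\infty}=0$ and $\gamma$ has a past limit point on $\hyp\subset\mcDpSc$, contradicting past-inextendibility unless the past endpoint of $\gamma$ already lies on $\hyp$. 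Compactness of the causal diamonds $J^{+}_{\hg}(p)\cap J^{-}_{\hg}(q)$ follows from the same arclength estimate, so $(\mcDpSc,\hg)$ is globally hyperbolic. The principal obstacles are obtaining $\tau$ in the continuous-metric setting---the Fathi--Siconolfi ingredient---and selecting $\phi$ with a decay rate that simultaneously preserves $\hg\succ g$ and blocks escape through the horizon; once both are in place the verification is a direct causality argument powered by $h$-completeness.
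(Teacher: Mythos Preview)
Your construction has a concrete gap: the formula $\hg = g - \phi\,(d\tau)\otimes(d\tau)$ does not produce a smooth metric, because $g$ itself is only continuous. The theorem demands a \emph{smooth} $\hg\succ g$, and subtracting a smooth rank-one tensor from a merely continuous metric yields a continuous one. You cannot simply smooth $\hg$ afterward without further work: any smoothing perturbs the light-cones, and you would then have to verify both that the smoothed metric still satisfies $\hg\succ g$ \emph{and} that $\hyp$ remains Cauchy for it---which is exactly the content of the theorem. Your escape argument is also underspecified: the assertion that tapering $\phi$ faster than linearly at $\partial\mcDpSc\setminus\hyp$ forces $\hg$-causal curves to approach the horizon only asymptotically would require a quantitative comparison between the cone-opening $\phi$ and the $g$-causal geometry near the horizon, and no such comparison is given. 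Near the horizon $\hg$-cones are close to $g$-cones, but ``close'' is not ``equal'', and it is not evident from what you have written that a slightly-widened causal curve cannot reach $\partial\mcDpSc\setminus\hyp$ in finite parameter.

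The paper's proof is structurally different and avoids both issues. It exhausts $\mcDpSc$ by the compact sets $K_n=J^-_g(B_n,\mcDpSc)$, takes a sequence of smooth metrics $\hg_k\succ g$ converging locally uniformly to $g$, and shows by a contradiction argument (accumulation curves plus the definition of $\mcDpSc$) that for each $n$ there is some $k(n)$ for which every past-inextendible $\hg_{k(n)}$-causal curve through $K_n$ meets $\hyp$ exactly once. A single smooth $\hg$ is then obtained by smoothing $g+f\,\theta^0\otimes\theta^0$ with $f$ chosen small enough on each $K_n$ that $\hg_{k(n)}\succ\hg\succ g$ there. No time function is used; in fact the existence of a smooth Cauchy time function (Theorem~\ref{T9IX11.2}) is derived \emph{from} this result, so while invoking Fathi--Siconolfi is not mathematically circular, it reverses the paper's logical order and imports the heavier machinery that the paper's argument is designed to avoid.
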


Here $
 (\mcDpSc,\hg)$ is viewed as a space-time with boundary $\hyp$,
and global
hyperbolicity above has to be understood within the class of space-times having
the Cauchy surface as a boundary.

\begin{Remark}
 \label{R12XI11.1}
{\rm
It further follows from the proof that there exists a sequence of smooth metrics $\hg_n\succ g$ which converge locally uniformly to $g$ such that $(\mcDpSc,\hg)$ is globally hyperbolic with Cauchy surface $\hyp$.
}
\end{Remark}

\bigskip

\proof
Let $p_0\in \hyp$, let $h$ be an auxiliary complete Riemannian metric on $\mcDpSc$ and let $B_r\subset \mcDpSc$ be a closed $h$-ball of radius $r$ centred at $p_0$. Set
$$
 K_n = J^-_g(B_n,\mcDpSc)\subset \mcDpSc
 \;.
$$
 Clearly
$$
 K_n \subset K_{n+1}
 \;,
 \quad
 \cup_n K_n = \mcDpSc\;,
  \quad
   \cup_n (K_n\cap \hyp) = \hyp
   \;.
$$
We start by showing  that $K_i$ is compact for all $i\in \N$. For this, let $r_n$ be a sequence of points in $K_i$. Then there exists a sequence of past directed inextendible causal curves $\gamma_n$, parameterized by $h$-distance, starting at $\gamma_n(0)=p_n\in B_i$, ending at $q_n=\gamma_n(t_n)\in \hyp$, and passing through $r_n=\gamma_n(s_n)$, for some $0\le s_n\le t_n<\infty$.  By the Hopf--Rinow theorem $B_i$ is compact, therefore there exists $p\in B_i$ and a subsequence, still denoted by $p_n$, such that $p_n$ converges to $p\in B_i\subset\mcDpSc$. Let $\gamma$ be an accumulation curve of the $\gamma_n$'s passing through $p$, then $\gamma$ is contained in $\mcDpSc$ and, by definition of $\mcDpSc$, the curve $\gamma$ intersects $\hyp$ at some point $\gamma(t)$. This implies that, passing to a subsequence if necessary, $t_n$ converges to $t$; in particular the sequence $t_n$ is bounded. Therefore the sequence $s_n$ is bounded as well, which implies that there exists $s_\infty\in [0,t]$ such that a subsequence, still denoted by $s_n$, converges to some $s_\infty$, whence $\gamma_n(s_n)$ converges to $\gamma(s_\infty)=:q$. Thus the sequence $q_n$ has a subsequence converging to a point $q\in K_i$, and $K_i$ is compact, as claimed.

We have also proved that $K_n\cap \hyp$ is compact for all $n$, being the intersection of a compact set with a closed set.

Let
\bel{25III12.1}
 \hg_k\succ g
\ee
be a sequence of smooth metric converging locally uniformly to $g$. We claim that there exists $k(n)$ such that every past inextendible $\hg_{k(n)}$-causal curve meeting $K_n$ meets $\hyp$.
Otherwise, passing to a subsequence if necessary, for every $k$ there exists a point $p_k \in K_n$ and a past inextendible $\hg_k$-causal curve $\gamma_k$ through $p_k$ that does not meet $\hyp$. Passing to a further subsequence if necessary, compactness of $K_n$ implies that there exists $p\in K_n$ such that $p_k$ converges to $p$. Let $\gamma$ be an accumulation curve of the $\gamma_k$'s through $p$, then $\gamma$ is $g$-causal past inextendible, and by global hyperbolicity of $\mcDpSc$ the curve $\gamma$ meets $\hyp$. But then all the $\gamma_k$'s meet $\hyp$ for $k$ large enough, a contradiction.

A similar contradiction argument can be used to show that
$$
 J^-_{\hgkn}(K_n)\cap \hyp \subset K_{n+1}
 \;,
$$
increasing $k(n)$ if necessary. This implies that, increasing $k(n)$ again if necessary, all $\hgkn$-causal past inextendible curves through $K_n$ meet $\hyp$ only once: Indeed, for all $k$ large enough the hypersurface $\hyp\cap K_{n+1}$ is spacelike for $\hg_k$, hence the intersection of any $\hg_k$-causal curve $\gamma_k$ with $\hyp\cap K_{n+1}$ is transverse, so that any such intersection point must be the end point of $\gamma_k$.

Summarising, we have shown that every past inextendible $\hgkn$-causal curve through $K_n$ meets $\hyp$ precisely once within the compact set $\hyp\cap K_{n+1}$.

Let $\theta^0$ be any smooth timelike one-form on $\mcM$.
We let $\epsilon(n) $ be any strictly positive number satisfying
$$
 \hg_{k(n)}\succ g +2\epsilon(n) \theta^0\otimes \theta^0
  \;,
   \qquad
     \epsilon \||\theta^0|^2_h\|_{L^\infty(K_n)}< 1/n
 \;.
$$
Let $f$ be any measurable function on $\mcDpSc$ such that $f(p)\le \epsilon(n_p)$, where $n_p$ is the smallest integer such that $p\in K_{n_p}$.
 Let $\hg$ be a smoothing of $g+f \theta^0\otimes \theta^0$ such that $\hg_{k(n)}\succ \hg \succ g$ on $K_n$. It is easily seen that $\hg$ has the properties claimed.
\qed

\bigskip

Recall that a
\emph{Cauchy time function\/} $t$ is a time function such that all level sets of $t$ are Cauchy.
As a straightforward corollary of Theorem~\ref{T9IX11.1} we obtain:

\begin{Theorem}
 \label{T9IX11.2}
On every domain of dependence $(\mcDpSc,g)$ with a continuous metric $g$ there
exists a smooth Cauchy time function.
\end{Theorem}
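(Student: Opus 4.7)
The plan is to reduce to the smooth case via Theorem~\ref{T9IX11.1} and then transport the resulting Cauchy time function from the auxiliary smooth metric $\hg$ back to $g$, using the fact that $g\prec\hg$ makes every $g$-causal vector $\hg$-timelike.

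First I would invoke Theorem~\ref{T9IX11.1} to obtain a smooth Lorentzian metric $\hg\succ g$ on $\mcDpSc$ such that $(\mcDpSc,\hg)$ is globally hyperbolic with Cauchy surface $\hyp$. Second, I would apply the Bernal--S\'anchez theorem on the existence of smooth Cauchy temporal functions in smooth globally hyperbolic spacetimes, producing a smooth function $\tau\colon\mcDpSc\to\R$ whose gradient $d\tau$ is past-directed $\hg$-timelike everywhere and whose level sets are smooth spacelike $\hg$-Cauchy surfaces in $\mcDpSc$, with $\hyp$ itself being one such level set. Since $\mcDpSc$ is viewed as a spacetime with $C^1$ boundary $\hyp$, I would handle the boundary in the standard way: extend $\hg$ smoothly to a slightly larger open manifold $\mcN\supset\mcDpSc$ in $\mcM$ (possible since $\hg$ is smooth up to $\hyp$), observe that a small enough extension remains globally hyperbolic with a Cauchy surface close to $\hyp$, apply Bernal--S\'anchez on $\mcN$, and restrict to $\mcDpSc$, composing if necessary with a smooth reparameterization so $\tau$ vanishes on $\hyp$.

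Third, I would verify that this same $\tau$ is a Cauchy time function for $(\mcDpSc,g)$. Because $g\prec\hg$, any nonzero $g$-causal vector $v$ satisfies $\hg(v,v)<0$, so it is $\hg$-timelike; hence $d\tau(v)>0$ for future-directed $g$-causal $v$, and $\tau$ is strictly monotone along every $g$-causal curve, i.e.~a time function for $g$. For the Cauchy property, inextendibility of a curve is a purely topological condition, metric-independent. Therefore any inextendible $g$-causal curve $\gamma$ in $\mcDpSc$ is also, as a point-set, an inextendible $\hg$-timelike curve, and so meets each level set $\{\tau=c\}$ exactly once by the $\hg$-Cauchy property of these level sets. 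This shows each level set of $\tau$ is a $g$-Cauchy surface in $\mcDpSc$.

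The substantive content has already been done in Theorem~\ref{T9IX11.1}; the only obstacle I anticipate is the mild bookkeeping needed to apply Bernal--S\'anchez on the manifold-with-boundary $(\mcDpSc,\hg)$, which is routine since $\hg$ is smooth and $\hyp$ is a single smooth $\hg$-Cauchy surface. The transfer from $\hg$ back to $g$ via cone containment is then immediate and does not require any additional regularity of $g$ beyond continuity.
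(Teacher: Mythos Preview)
Your proposal is correct and follows essentially the same approach as the paper: invoke Theorem~\ref{T9IX11.1} to obtain a smooth globally hyperbolic metric $\hg\succ g$ on $\mcDpSc$, take a smooth Cauchy time function for $\hg$ (via Bernal--S\'anchez), and observe that the cone inclusion $g\prec\hg$ makes it a Cauchy time function for $g$ as well. The paper's proof is essentially a three-line version of what you wrote; your additional discussion of the manifold-with-boundary issue and the explicit verification that inextendible $g$-causal curves are inextendible $\hg$-timelike curves merely spells out what the paper leaves as ``easily seen''.
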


\proof
Let $\hg$ be given by Theorem~\ref{T9IX11.1}. Then there exists a smooth Cauchy time function $t$ for $(\mcDpSc,\hg)$. Since $\hg\succ g$ the function $t$ is easily seen to be Cauchy for the metric $g$ as well.
\qed

\bigskip

We finish the section by the following characterisation of domains of dependence:

\begin{Theorem}
 \label{CTdd3}
Let $\hyp$ be a spacelike acausal $C^1$-hypersurface and let $g$ be continuous on $\mcM$.
A point $p\in\mcM$ is in $ \mcDpSc$ if and only if
\bel{CeTdd3}
 \hspace{-.5cm}
 \mbox{the set }\ {J^-(p)\cap \hyp}\ \mbox{ is
 non-empty, and compact as a subset of $\hyp$.}
 \qquad
 \ee
\end{Theorem}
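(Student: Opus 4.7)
The plan has two parts: the forward direction is straightforward from the limit-curve theorem and acausality of $\hyp$, while the reverse direction requires more work.

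For the forward direction, suppose $p \in \mcDpSc$. Non-emptiness of $J^-(p) \cap \hyp$ is immediate: if $p \in \hyp$ take $p$ itself, otherwise any past-directed past-inextendible $g$-causal curve from $p$ meets $\hyp$ by the defining property of $\mcDpSc$. For compactness, given $r_n \in J^-(p) \cap \hyp$, choose past-directed $g$-causal curves $\gamma_n$ from $p$ to $r_n$ and extend each past-inextendibly beyond $r_n$. Theorem~\ref{T1IV11.1} produces a past-directed past-inextendible accumulation curve $\gamma$ through $p$, which by $p \in \mcDpSc$ crosses $\hyp$ at a unique point $\gamma(s_\infty)$. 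Acausality of $\hyp$ forces each $\gamma_n$ to meet $\hyp$ exactly at $r_n$; combining this with transversality of the crossing (since $\hyp$ is $C^1$ spacelike) and uniform convergence of $\gamma_n$ to $\gamma$ on compact sets, the intersection parameters converge, yielding $r_n \to \gamma(s_\infty) \in J^-(p) \cap \hyp$.

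For the reverse direction, assume $K := J^-(p) \cap \hyp$ is non-empty and compact, and suppose for contradiction that there exists a past-directed past-inextendible $g$-causal curve $\gamma \colon [0, \infty) \to \mcM$ with $\gamma(0) = p$, parametrised by $h$-arclength using completeness of $h$, that avoids $\hyp$. Consider $A := \{ s \geq 0 : \gamma(s) \in J^+(\hyp) \}$, which contains $0$ since $K$ is non-empty. Transitivity of the causal order along $\gamma$ makes $A$ downward closed, and a limit-curve argument using compactness of $K$ shows $A$ is closed, hence $A = [0, s_A]$ with $s_A \in [0, \infty]$. If $s_A < \infty$, then $\gamma(s_A)$ lies on $\dotJ^+(\hyp) \setminus \hyp$; a local analysis in a cylindrical neighbourhood of $\gamma(s_A)$ based on the graphical description of $\dotJ^+$ from Proposition~\ref{P10IV11.1}, combined with $\hyp$ being $C^1$ spacelike and acausal, forces $\gamma(s_A) \in \hyp$, a contradiction.

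The main obstacle is the case $s_A = \infty$, in which $\gamma \subset J^+(\hyp)$ for all time. For each $n$, pick $r_n \in K$ with $r_n \leq \gamma(n)$ and extract $r_n \to r_\infty \in K$ by compactness. Past-directed $g$-causal curves $\tau_n$ from $\gamma(n)$ to $r_n$ exist; reversing, extending future-inextendibly beyond $\gamma(n)$, and applying Theorem~\ref{T1IV11.1} yields a future-inextendible future-directed $g$-causal curve $\tau$ starting at $r_\infty$, contained in $J^+(\hyp) \cap \overline{J^-(p)}$ and meeting $\hyp$ only at $r_\infty$. The crux is then to prove that $\overline{J^-(p)} \cap J^+(\hyp)$ is contained in a compact subset of $\mcM$, which would contradict the future-inextendibility of $\tau$. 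My plan to establish this compactness is modelled on the construction in the proof of Theorem~\ref{T9IX11.1}: covering the candidate region by cylindrical neighbourhoods with controlled light-cone slopes and using compactness of $K$ together with the absence of accumulation points of $\gamma$ in $\mcM$ (forced by past-inextendibility) to preclude any escaping sequence.
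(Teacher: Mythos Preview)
Your forward direction is fine and matches the paper's (omitted) argument.

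The reverse direction, however, has a genuine gap. The key difficulty for merely continuous metrics is that the push-up/push-down lemmas (Proposition~\ref{CCPushup0}, Lemma~\ref{CLpushup}) fail, and your argument does not get around this. In your case $s_A<\infty$, you assert that a ``local analysis in a cylindrical neighbourhood of $\gamma(s_A)$ based on the graphical description of $\dotJ^+$ from Proposition~\ref{P10IV11.1}'' forces $\gamma(s_A)\in\hyp$. But Proposition~\ref{P10IV11.1} concerns $\dotJ^+(q,\mcU)$ for a \emph{point} $q$, not for a hypersurface, and even granting a graphical structure for $\partial J^+(\hyp)$ you have not explained why a $g$-causal curve cannot touch that graph from above at a single parameter value without crossing $\hyp$; for continuous metrics the standard null-geodesic-generator picture is unavailable. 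In your case $s_A=\infty$, the whole weight rests on compactness of $\overline{J^-(p)}\cap J^+(\hyp)$. But this is essentially equivalent to what you are trying to prove: once $p\in\mcDpSc$ it follows from Theorem~\ref{CTdd1}, while conversely it immediately yields $p\in\mcDpSc$. Your stated plan---covering by cylindrical neighbourhoods and invoking the construction in Theorem~\ref{T9IX11.1}---does not help, because that construction already works \emph{inside} $\mcDpSc$, using precisely the defining property of the domain of dependence to obtain compactness of the sets $K_n$.

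The paper avoids all of this by a two-step approximation argument. First one proves the reverse implication for $C^2$ (or Lipschitz) metrics, where the push-down lemma is available: given a future-directed causal curve $\gamma\colon[0,1]\to\mcM$ from $\hyp$ to $p$, one shows that $I:=\{t:\gamma([0,t])\subset\mcDpSc\}$ is open (openness of $\mcDpSc$), non-empty (Lemma~\ref{LPCdd0}), and closed (push any past-inextendible causal curve from $\gamma(t_*)$ down to timelike curves from nearby $\gamma(t)$, which must meet $\hyp$ in the compact set $J^-(p)\cap\hyp$, and pass to the limit). For a merely continuous $g$, one then takes smooth metrics $\hg_n\succ g$ converging locally uniformly to $g$; an argument as in the proof of Theorem~\ref{T9IX11.1} shows that $J^-_{\hg_n}(p)\cap\hyp$ remains compact for large $n$, whence $p\in\mcD^+_{J,\hg_n}(\hyp)$ by the smooth case, and this gives $p\in\mcDpScg$ since every $g$-causal curve is $\hg_n$-causal. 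This approximation-by-wider-cones step is the idea you are missing.
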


 \proof
For $p\in \mcDpSc$, compactness of $J^-(p)\cap \hyp $
can be established by an argument very similar to that given in
the last part of the proof of Theorem~\ref{CTdd1}. The details are
left to the reader.

In order to prove the reverse implication  we assume first that the metric is $C^2$ (in fact, Lipschitzian would suffice in view of Corollary~\ref{C23V11.1}), and that \eq{CeTdd3}
holds. Then there exists a future directed causal curve
$\gamma \colon [0,1]\to\mcM$ from some point $q\in\hyp$ to $p$. Set
$$I:= \{t\in [0,1]: \gamma(s)\in \mcDpSc\ \mbox{\ for all $s\le t$}\}\subset [0,1]\;.$$
Then $I$ is not empty, since $\mcDpSc$ forms an open neighborhood of
$\hyp$ by Proposition~\ref{CPdd1}.
The interval $I$ is open in $[0,1]$ because $\mcDpSc$ is open. In order to show that $I$ equals
$[0,1]$ set
$$
t_*:= \sup I\;.
$$
Consider any past-inextendible past-directed
causal curve $\hat \gamma$ starting at $\gamma(t_*)$. For $t<t_*$
let $\hat \gamma_t$ be a family of past-inextendible causal
push-downs of $\hat\gamma$, as in \cite[Lemma~(2.9.10)]{ChCausalityv1}, which start at $\gamma(t)$, and which
have the property that
$$
 \distb(\gamma_t(s),\gamma(s))\le  |t-t_*|
 \ \mbox{ for} \  0\le s \le 1/|t-t_*|\;.
$$
Then $\hat\gamma_t$ intersects
$\hyp$ at some point $q_t\in J^-(p)$. Compactness of
$J^-(p)\cap\hyp$ implies that the curve $t\to q_t \in \hyp$
accumulates at some point $q_*\in \hyp$, which clearly is the
point of intersection of $\gamma$ with $\hyp$. This shows that
every causal curve $\gamma$ through $\gamma(t_*)$ meets $\hyp$, in
particular $\gamma(t_*)\in \mcDpSc$. So $I$ is both open and closed in $[0,1]$,
hence $I=[0,1]$, and the result is proved for sufficiently smooth metrics.

Now, if $g$ is not Lipschitzian, the argument just given does not apply because the push-down lemma fails. However, the argument after \eq{25III12.1} shows that for any sequence of smooth metric $\{\hg_n\}_{n\in\N}$, with cones larger than those of $g$, which converges locally uniformly to $g$, the sets   ${J_{\hg_n}^-(p)\cap \hyp}$ are compact for all $n\ge n_0$ for some $n_0$. By the argument just given for $C^2$ metrics, $p\in \mcD^+_{J,g_n}(\hyp)$ for all $n\ge n_0$, which easily implies $p\in\mcD^+_{J,g }(\hyp)\equiv\mcDpSc$.
 \qed

\subsection{Remarks on wave equations}
 \label{ss12XI11.1}

One of the key properties of globally hyperbolic space-times is that the Cauchy problem for the linear wave equation on a smooth globally hyperbolic space-time has unique smooth global solutions. It is of interest to enquire what happens with the Cauchy problem when the metric is not smooth.

Let, thus, $\hyp$ be a $C^1$, acausal, spacelike hypersurface in a space-time $(\mcM,g)$. Suppose that the metric is locally Lipschitz-continuous; this guarantees that the energy inequality holds for smooth functions. Remark~\ref{R12XI11.1} shows that the metric $g$ can be approximated by smooth globally hyperbolic metrics, and it is easily seen that the approximating sequence can be chosen to be uniformly Lipschitz on compact sets. A simple density argument then shows that for any $\varphi\in H^1_\loc(\hyp)$ and for any $ \psi\in L^2_\loc(\hyp)$ the Cauchy problem
\bel{12XI11.1}
 \Box_g u =0\;,
 \qquad
 u|_\hyp= \varphi\;,
 \quad
 \partial_n u|_\hyp=\psi\;,
\ee
where $\partial_n$ denotes the derivative in a normal direction, has a unique global weak solution on $\mcD_J(\hyp)$ (compare~\cite{NicolasCIVP,HormanderCIVP}). The solution has the property that for any $C^1$ time function $t$ on $\mcD_J(\hyp)$ with level sets $\hyp_t$ we have $u|_{\hyp_t}\in H^1_\loc(\hyp_t)$, $\partial_n u|_{\hyp_t}\in L^2_\loc(\hyp_t)$. To obtain better control of the solution it seems that one needs to make further hypotheses on the metric, see e.g.~\cite{hughes:kato:marsden,MetivierParaDiff,ChBCF,TaylorIII,SmithParametrix,SmithSogge}.

Existence of solutions for a class of metrics satisfying a log-Lipschitz condition,
$$
 |f(x)-f(y)|\le C |x-y| |\log |x-y||
 \;,
$$
with a loss of derivatives, has been proved by Colombini and Lerner in~\cite{ColombiniLernerDuke}. So going beyond Lipschitz metrics is possible, perhaps for a price.%
\footnote{It should be kept in mind that existence of solutions in a Colombeau sense can be established for metrics with Colombeau coefficients under rather mild conditions on the metric~\cite{GMS,HKS}.
 Such solutions appear, however, awkward to work with.}
%

It turns out that one cannot go much further~\cite{ColombiniLernerDuke,ColombiniSpagnoloASENS}: Given any $\lambda \in (0,1)$, Colombini and Spagnolo~\cite{ColombiniSpagnoloASENS} construct two-dimensional metrics on $\R\times S^1$ of the form
$$
 g= -dt^2  + \alpha(t,x)dx^2
 \;,
$$
with a $\lambda$-H\"older continuous positive function $\alpha$ bounded away from zero, for which the wave equation has no distributional solutions near $t=0$. They also prove in~\cite{ColombiniSpagnoloASENS} existence of a function $\alpha\in \cap_{\lambda \in (0,1)} C^{0,\lambda}$ and a smooth source term $w$ for which the non-homogeneous wave equation,
$$
 \Box_g u =w\;,
$$
has no $C^1$ solutions near the origin. Finally in~\cite{ColombiniJannelliSpagnolo}, Colombini, Jannelli and Spagnolo prove existence of functions $\alpha\in \cap_{\lambda \in (0,1)} C^{0,\lambda}$ and $b\in C^\infty$ such that the Cauchy problem for the equation
$$
\Box_g u +b u =0
$$
has non-unique solutions.

These remarks give a PDE perspective to some of our results. It would be of interest to study the bubbling properties of the Colombini--Spagnolo metrics. It is tempting to raise the question, whether the lack of causal bubbles restores well-posedness of the Cauchy problem for the wave equation.

\ack
PTC is grateful to l'IHES for hospitality and support during part of the work on this paper. He was
also supported in part by the Polish Ministry of Science and
Higher Education grant Nr N N201 372736.
JDEG is grateful to the Erwin Schr\"{o}dinger Institute,
where part of this work was completed during the programme \lq\lq Dynamics of General Relativity\rq\rq.
Useful discussions with Helmut
Friedrich, Gregory Galloway and Jacek Jezierski are acknowledged.

\section*{References}

\begin{thebibliography}{10}

\bibitem{ChBGeroch}
Yvonne Choquet-Bruhat and Robert Geroch.
\newblock Global aspects of the {C}auchy problem in general relativity.
\newblock {\em Comm. Math. Phys.}, 14:329--335, 1969.

\bibitem{KlainermanRodnianski:r1}
Sergiu Klainerman and Igor Rodnianski.
\newblock Rough solutions of the {E}instein-vacuum equations.
\newblock {\em Ann. of Math. (2)}, 161(3):1143--1193, 2005.

\bibitem{WangCones}
Qian Wang.
\newblock On the geometry of null cones in {E}instein-vacuum spacetimes.
\newblock {\em Ann. Inst. H. Poincar\'e Anal. Non Lin\'eaire}, 26(1):285--328,
  2009.

\bibitem{WangRicci}
Qian Wang.
\newblock On {R}icci coefficients of null hypersurfaces with time foliation in
  {E}instein vacuum space-time.
\newblock June 2010.

\bibitem{KRS}
S.~Klainerman, I.~Rodnianski, and J.~Szeftel.
\newblock {Overview of the proof of the Bounded $L^2$ Curvature Conjecture}.
\newblock 2012.
\newblock arXiv:1204.1772 [gr-qc].

\bibitem{Choquet-Bruhat:safari}
Y.~Choquet-Bruhat.
\newblock Einstein constraints on $n$-dimensional compact manifolds.
\newblock {\em Class.\ Quantum Grav.}, 21:S127--S152, 2004.
\newblock arXiv:gr-qc/0311029; in ``A spacetime safari: essays in honour of
  Vincent Moncrief".

\bibitem{Maxwell:Rough}
David Maxwell.
\newblock Rough solutions of the {E}instein constraint equations.
\newblock {\em J. Reine Angew. Math.}, 590:1--29, 2006.

\bibitem{Maxwell:Compact}
David Maxwell.
\newblock Rough solutions of the {E}instein constraint equations on compact
  manifolds.
\newblock {\em J. Hyperbolic Differ. Equ.}, 2(2):521--546, 2005.

\bibitem{HE}
Stephen~W. Hawking and George F.~R. Ellis.
\newblock {\em The large scale structure of space-time}.
\newblock Cambridge University Press, London, 1973.
\newblock Cambridge Monographs on Mathematical Physics, No. 1.

\bibitem{Beem-Ehrlich:Lorentz2}
John~K. Beem, Paul~E. Ehrlich, and Kevin~L. Easley.
\newblock {\em Global {L}orentzian geometry}, volume 202 of {\em Monographs and
  Textbooks in Pure and Applied Mathematics}.
\newblock Marcel Dekker Inc., New York, second edition, 1996.

\bibitem{ONeill83}
Barrett O'Neill.
\newblock {\em Semi-{R}iemannian geometry}, volume 103 of {\em Pure and Applied
  Mathematics}.
\newblock Academic Press Inc. [Harcourt Brace Jovanovich Publishers], New York,
  1983.

\bibitem{MinguzziSanchez}
Ettore Minguzzi and Miguel S{\'a}nchez.
\newblock The causal hierarchy of spacetimes.
\newblock In {\em Recent developments in pseudo-{R}iemannian geometry}, ESI
  Lect. Math. Phys., pages 299--358. Eur. Math. Soc., Z\"urich, 2008.

\bibitem{PenroseDiffTopo}
Roger Penrose.
\newblock {\em Techniques of differential topology in relativity}.
\newblock Society for Industrial and Applied Mathematics, Philadelphia, Pa.,
  1972.
\newblock Conference Board of the Mathematical Sciences Regional Conference
  Series in Applied Mathematics, No. 7.

\bibitem{Kriele}
Marcus Kriele.
\newblock {\em Spacetime}, volume~59 of {\em Lecture Notes in Physics. New
  Series m: Monographs}.
\newblock Springer-Verlag, Berlin, 1999.
\newblock Foundations of general relativity and differential geometry.

\bibitem{Senovilla}
Jos{\'e} M.~M. Senovilla.
\newblock Singularity theorems and their consequences.
\newblock {\em Gen. Relativity Gravitation}, 30(5):701--848, 1998.

\bibitem{ChCausalityv1}
Piotr~T. Chru\'{s}ciel.
\newblock Elements of causality theory.
\newblock Preprint arxiv:1110.6706v1.

\bibitem{FathiSiconolfiTime}
Albert Fathi and Antonio Siconolfi.
\newblock On smooth time functions.
\newblock {\em Math. Proc. Camb. Phil. Soc.}, 155:1--37, 2011.

\bibitem{ChMGHD}
Piotr~T. Chru\'{s}ciel.
\newblock On maximal globally hyperbolic vacuum space-times.
\newblock Preprint arxiv:1112.5779.

\bibitem{AGPS}
A.~Garc{\'{\i}}a-Parrado and J.M.M. Senovilla.
\newblock Causal structures and causal boundaries.
\newblock {\em Classical Quantum Gravity}, 22(9):R1--R84, 2005.

\bibitem{GerochSpinI}
Robert Geroch.
\newblock Spinor structure of space-times in general relativity. {I}.
\newblock {\em J. Mathematical Phys.}, 9:1739--1744, 1968.

\bibitem{NomizuOzeki}
Katsumi Nomizu and Hideki Ozeki.
\newblock The existence of complete {R}iemannian metrics.
\newblock {\em Proc. Amer. Math. Soc.}, 12:889--891, 1961.

\bibitem{EvansGariepy}
Lawrence~C. Evans and Ronald~F. Gariepy.
\newblock {\em Measure theory and fine properties of functions}.
\newblock Studies in Advanced Mathematics. CRC Press, Boca Raton, FL, 1992.

\bibitem{HartmanGeodesics}
Philip Hartman.
\newblock On the local uniqueness of geodesics.
\newblock {\em Amer. J. Math.}, 72:723--730, 1950.

\bibitem{GrovePetersen}
Karsten Grove and Peter Petersen.
\newblock Manifolds near the boundary of existence.
\newblock {\em J. Differential Geom.}, 33(2):379--394, 1991.

\bibitem{BuragoGP}
Yuri~D. Burago, Mikhail~L. Gromov, and Grigori~Ya. Perel{\cprime}man.
\newblock A. {D}. {A}leksandrov spaces with curvatures bounded below.
\newblock {\em Uspekhi Mat. Nauk}, 47(2(284)):3--51, 222, 1992.

\bibitem{SCC}
Piotr~T. Chru{\'s}ciel.
\newblock {\em On uniqueness in the large of solutions of {E}instein's
  equations (``strong cosmic censorship'')}, volume~27 of {\em Proceedings of
  the Centre for Mathematics and its Applications, Australian National
  University}.
\newblock Australian National University Centre for Mathematics and its
  Applications, Canberra, 1991.

\bibitem{GerochDoD}
Robert Geroch.
\newblock Domain of dependence.
\newblock {\em J. Mathematical Phys.}, 11:437--449, 1970.

\bibitem{NavarroMinguzzi}
John~J. Benavides~Navarro and Ettore Minguzzi.
\newblock Global hyperbolicity is stable in the interval topology.
\newblock {\em J. Mathematical Phys.}, 52:112504, 2011.

\bibitem{NicolasCIVP}
Jean-Philippe Nicolas.
\newblock On {L}ars {H}\"ormander's remark on the characteristic {C}auchy
  problem.
\newblock {\em Ann. Inst. Fourier (Grenoble)}, 56(3):517--543, 2006.

\bibitem{HormanderCIVP}
Lars H{\"o}rmander.
\newblock A remark on the characteristic {C}auchy problem.
\newblock {\em J. Funct. Anal.}, 93(2):270--277, 1990.

\bibitem{hughes:kato:marsden}
Thomas J.~R. Hughes, Tosio Kato, and Jerrold~E. Marsden.
\newblock Well-posed quasi-linear second-order hyperbolic systems with
  applications to nonlinear elastodynamics and general relativity.
\newblock {\em Arch. Rational Mech. Anal.}, 63(3):273--294 (1977), 1976.

\bibitem{MetivierParaDiff}
Guy M{\'e}tivier.
\newblock {\em Para-differential calculus and applications to the {C}auchy
  problem for nonlinear systems}, volume~5 of {\em Centro di Ricerca Matematica
  Ennio De Giorgi (CRM) Series}.
\newblock Edizioni della Normale, Pisa, 2008.

\bibitem{ChBCF}
Yvonne Choquet-Bruhat, Demetrios Christodoulou, and Mauro Francaviglia.
\newblock On the wave equation in curved spacetime.
\newblock {\em Ann. Inst. H. Poincar\'e Sect. A (N.S.)}, 31(4):399--414 (1980),
  1979.

\bibitem{TaylorIII}
Michael~E. Taylor.
\newblock {\em Partial differential equations {III}. {N}onlinear equations},
  volume 117 of {\em Applied Mathematical Sciences}.
\newblock Springer, New York, second edition, 2011.

\bibitem{SmithParametrix}
Hart~F. Smith.
\newblock A parametrix construction for wave equations with {$C^{1,1}$}
  coefficients.
\newblock {\em Ann. Inst. Fourier (Grenoble)}, 48(3):797--835, 1998.

\bibitem{SmithSogge}
Hart~F. Smith and Christopher~D. Sogge.
\newblock On {S}trichartz and eigenfunction estimates for low regularity
  metrics.
\newblock {\em Math. Res. Lett.}, 1(6):729--737, 1994.

\bibitem{ColombiniLernerDuke}
Ferruccio Colombini and Nicolas Lerner.
\newblock Hyperbolic operators with non-{L}ipschitz coefficients.
\newblock {\em Duke Math. J.}, 77(3):657--698, 1995.

\bibitem{GMS}
James D.~E. Grant, Eberhard Mayerhofer, and Roland Steinbauer.
\newblock The wave equation on singular space-times.
\newblock {\em Comm. Math. Phys.}, 285(2):399--420, 2009.

\bibitem{HKS}
G\"{u}nther H\"{o}rmann, Michael Kunzinger, and Roland Steinbauer.
\newblock Wave equations on non-smooth space-times.
To appear in \textit{Asymptotic
Properties of Solutions to Hyperbolic Equations}, Progress in Mathematics ed. M.~Ruzhansky and J.~Wirth (Basel: Birkh\"{a}user).

\bibitem{ColombiniSpagnoloASENS}
Ferruccio Colombini and Sergio Spagnolo.
\newblock Some examples of hyperbolic equations without local solvability.
\newblock {\em Ann. Sci. \'Ecole Norm. Sup. (4)}, 22(1):109--125, 1989.

\bibitem{ColombiniJannelliSpagnolo}
Ferruccio Colombini, Enrico Jannelli, and Sergio Spagnolo.
\newblock Nonuniqueness in hyperbolic {C}auchy problems.
\newblock {\em Ann. of Math. (2)}, 126(3):495--524, 1987.

\end{thebibliography}

\def\cprime{$'$}

\end{document}